\documentclass[11pt]{article}

\usepackage{amssymb,amsmath,amsthm,sectsty,url}
\usepackage[letterpaper,hmargin=1.0in,vmargin=1.0in]{geometry}
\usepackage[pdftex,colorlinks,linkcolor=blue,citecolor=blue,filecolor=blue,urlcolor=blue]{hyperref}
\usepackage{cleveref}
\usepackage{color}
\usepackage[boxed]{algorithm}
\pagestyle{plain}

\usepackage{tikz}
\usepackage{graphicx}
\usepackage{nicefrac}
\usepackage{aliascnt}

\newtheorem{theorem}{Theorem}[section]
\crefname{theorem}{Theorem}{Theorems}

\newaliascnt{lemma}{theorem}
\newtheorem{lemma}[lemma]{Lemma}
\aliascntresetthe{lemma}
\crefname{lemma}{Lemma}{Lemmas}

\newaliascnt{proposition}{theorem}
\newtheorem{proposition}[proposition]{Proposition}
\aliascntresetthe{proposition}
\crefname{proposition}{Proposition}{Propositions}

\newaliascnt{corollary}{theorem}
\newtheorem{corollary}[corollary]{Corollary}
\aliascntresetthe{corollary}
\crefname{corollary}{Corollary}{Corollaries}

\newaliascnt{fact}{theorem}

\aliascntresetthe{fact}
\crefname{fact}{Fact}{Facts}

\newaliascnt{definition}{theorem}
\newtheorem{definition}[definition]{Definition}
\aliascntresetthe{definition}
\crefname{definition}{Definition}{Definitions}

\newaliascnt{remark}{theorem}

\aliascntresetthe{remark}
\crefname{remark}{Remark}{Remarks}

\newaliascnt{conjecture}{theorem}

\aliascntresetthe{conjecture}
\crefname{conjecture}{Conjecture}{Conjectures}

\newaliascnt{claim}{theorem}
\newtheorem{claim}[claim]{Claim}
\aliascntresetthe{claim}
\crefname{claim}{Claim}{Claims}

\newaliascnt{question}{theorem}

\aliascntresetthe{question}
\crefname{question}{Question}{Questions}

\newaliascnt{exercise}{theorem}

\aliascntresetthe{exercise}
\crefname{exercise}{Exercise}{Exercises}

\newaliascnt{example}{theorem}

\aliascntresetthe{example}
\crefname{example}{Example}{Examples}

\newaliascnt{notation}{theorem}

\aliascntresetthe{notation}
\crefname{notation}{Notation}{Notations}

\newaliascnt{problem}{theorem}

\aliascntresetthe{problem}
\crefname{problem}{Problem}{Problems}

 
\newcommand{\norm}[1]{\lVert#1\rVert}




\newcommand{\R}{\mathbb R}

\newcommand{\B}{\{ 0,1 \}}


\title{Approximate Nearest Neighbors in Limited  Space}
\author{
  Piotr Indyk\thanks{\texttt{indyk@mit.edu}} \\
  MIT \\
  \and
  Tal Wagner\thanks{\texttt{talw@mit.edu}} \\
  MIT \\
}

\begin{document}
\maketitle

\begin{abstract}
We consider the $(1+\epsilon)$-approximate nearest neighbor search problem: given a set $X$ of $n$ points in a $d$-dimensional space, build a data structure that, given any query point  $y$,  finds a point $x \in X$ whose distance to $y$ is at most $(1+\epsilon) \min_{x \in X} \|x-y\|$ for an accuracy parameter $\epsilon \in (0,1)$.  Our main result is a data structure that occupies only $O(\epsilon^{-2} n \log(n) \log(1/\epsilon))$
bits of space, assuming all point coordinates are integers in the range  $\{-n^{O(1)} \ldots n^{O(1)}\}$, i.e., the coordinates have $O(\log n)$ bits of precision. This improves over the best previously known space bound of         $O(\epsilon^{-2} n \log(n)^2)$, obtained via the randomized dimensionality reduction method of~\cite{johnson1984extensions}.  We also consider the more general problem of estimating all distances from a collection of query points to all data points $X$, and provide almost tight upper and lower bounds for the space complexity of this problem. 
\end{abstract}

\section{Introduction}

The nearest neighbor search problem is defined as follows: given a set $X$ of $n$ points in a $d$-dimensional space, build a data structure that, given any query point $y$, returns the point in $X$ closest to $y$. For efficiency reasons, the problem is often relaxed to approximate nearest neighbor search, where the goal is to find a point $x \in X$ whose distance to $y$ is at most $c \min_{x \in X} \|x-y\|$ for some approximation factor $c>1$. Both problems have found numerous applications in machine learning, computer vision,  information retrieval and other areas. In machine learning in particular, nearest neighbor classifiers are popular baseline methods whose classification error often comes close to that of the best known techniques (\cite{efros}).


Developing fast approximate nearest neighbor search algorithms have been a subject of extensive research efforts over the last last two decades, see  e.g.,~\cite{shakhnarovich2006nearest,AI} for an overview.  More recently, there has been increased focus on  designing nearest neighbor methods that use a limited amount of {\em space}. This is motivated by the need to fit  the data set in the main memory (\cite{johnson2017billion,faiss}) or an Internet of Things device (\cite{gupta2017protonn}).  Furthermore, even a simple linear scan over the data is more time-efficient if the data is compressed. 
The data set compression is most often achieved 
by developing compact representations of data that approximately preserve the distances between the points (see~\cite{wang2016learning} for a survey). 
Such representations are smaller than the original (uncompressed) representation of the data set, while  approximately preserving the distances between points.  

Most of the approaches in the literature are only validated empirically.
The currently best known {\em theoretical} tradeoffs between the representation size and the approximation quality are summarized in Table~\ref{tbl:sketches_related_work}, together with their functionalities and constraints. 
\begin{table}[h]
\begin{center}
\begin{tabular}{| l | l | l |}
\hline
& 		Bits per point	& Comments \\
\hline
No compression 		& $d  \log n$ 							&  \\
\cite{johnson1984extensions}	& $\epsilon^{-2} \log^2(n)$			& Estimates distances between any $y$ and all $x \in X$\\
 \cite{kushilevitz2000efficient}				& $\epsilon^{-2} \log(n) \log (R)$				& Estimates distances between any $y$ and all $x \in X$,  \\
 & & assuming $\|x-y\| \in [r, R r]$ \\
 \cite{indyk2017near}				& $\epsilon^{-2} \log(n) \log(1/\epsilon)$	& Estimates distances between all $x,y \in X$,  \\
 & & does not provably support out-of-sample queries \\
 \hline
 This paper			& $\epsilon^{-2} \log(n) \log(1/\epsilon)$      & Returns an approximate nearest neighbor of $y$ in $X$\\
 \hline
\end{tabular}
\caption{Comparison of Euclidean metric sketches with distortion $1\pm\epsilon$. We assume that all point coordinates are represented using $\log \Phi$ bits, or alternatively that each coordinate is an integer in the range $\{ -\Phi \ldots \Phi\}$.
For the sake of exposition, the results depicted in the table assume $\Phi=n^{O(1)}$. Furthermore, the compression algorithm can be randomized, and the compressed representation must enable approximating distances up to a factor of $1 \pm \epsilon$ with probability $1/n^{O(1)}$. 
}
\label{tbl:sketches_related_work}
\end{center}
\end{table}

Unfortunately,  in the context of approximate nearest neighbor search, the above representations lead to sub-optimal results. 
The result from the last row of the table (from \cite{indyk2017near}) cannot be used to obtain provable bounds for nearest neighbor search, because the distance preservation guarantees hold only for pairs of points in the pointset $X$.\footnote{We note, however, that a simplified version of this method, described in~\cite{indyk2017practical}, was shown to have  good empirical performance for nearest neighbor search.} 
The second-to-last result (from \cite{kushilevitz2000efficient}) only estimates distances in a certain range; extending this approach to all distances would multiply the storage by a factor of $\log \Phi$.  Finally,  the representations obtained via a direct application of  randomized dimensionality reduction (\cite{johnson1984extensions}) are also larger than the bound from~\cite{indyk2017near} by almost a factor of $\log \Phi$. 

\paragraph{Our results} In this paper we show that it is possible to overcome the limitations of the previous results and design a compact representation that supports $(1+\epsilon)$-approximate nearest neighbor search, with a space bound essentially matching that of~\cite{indyk2017near}. This constitutes the first reduction in the space complexity of approximate nearest neighbor below the  ``Johnson-Lindenstrauss bound''.
Specifically, we show the following. Suppose that we want the data structure to answer $q$ approximate nearest neighbor queries in a $d$-dimensional dataset of size $n$, in which coordinates are represented by $\log\Phi$ bits each.
All $q$ queries must be answered correctly with probability $1-\delta$. (See Section~\ref{s:formal} for the formal problem definition).

\begin{theorem}\label{thm:ann_ub}
For the all-nearest-neighbors problem, there is a sketch of size 
\[
  O\left( n\left(\frac{\log n\cdot\log(1/\epsilon)}{\epsilon^2} + \log\log\Phi + \log\left(\frac{q}{\delta}\right)\right) + d\log\Phi  + \log\left(\frac{q}{\delta}\right)\log\left(\frac{\log(q/\delta)}{\epsilon}\right) \right)
  \;\; \text{bits.}
\]
\end{theorem}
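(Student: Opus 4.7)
The plan is to combine three ingredients: a Johnson--Lindenstrauss map realized by a short pseudorandom seed, the internal sketch of \cite{indyk2017near}, and a ``query promotion'' step that allows the internal sketch to service out-of-sample queries.

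First I would sample a JL map $\Pi:\R^d\to\R^k$ with $k=O(\epsilon^{-2}\log(q/\delta))$ using a pseudorandom generator for JL-type linear sketches whose seed length is $O(\log(q/\delta)\log(\log(q/\delta)/\epsilon))$. With probability at least $1-\delta$ this preserves every query-to-data distance within a factor $1\pm\epsilon$; the seed accounts for the final additive term in the bound. In addition, I would store a single anchor $x_0\in X$ at full precision ($d\log\Phi$ bits) so that the decoder can recover absolute positions at query time.

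Second, apply the sketch of \cite{indyk2017near} to $\{\Pi(x_i-x_0)\}_{i=1}^n$, obtaining per-point codes of $O(\epsilon^{-2}\log n\log(1/\epsilon))$ bits that $(1\pm\epsilon)$-preserve every internal pairwise distance. Attach to each point a $\log\log\Phi$-bit scale tag $\lfloor\log_2\|\Pi(x_i-x_0)\|\rfloor$ and a $\log(q/\delta)$-bit independent-trial tag used for per-query amplification; these yield the remaining per-point contributions.

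The main obstacle is the out-of-sample step, since \cite{indyk2017near} only guarantees accuracy among points in the originally sketched pool. My plan is to snap $\Pi(y-x_0)$ to the nearest codepoint on the discretization grid implicitly used by the \cite{indyk2017near} encoder, producing a pseudo-element $\widetilde y$ whose sketch can be computed on the fly from the shared JL seed, then invoke the decoder on each pair $(\widetilde y, \text{sketch of } x_i)$. The rounding introduces only an additive $\epsilon\cdot\text{(local scale)}$ error, absorbed once the local scale is recovered via the anchor together with the scale tag. What remains — and is the technical crux I would expect to grind on — is verifying that the decoder of \cite{indyk2017near} remains $(1\pm\epsilon)$-accurate when one of its two arguments is an arbitrary grid point rather than a genuine encoded point; this likely requires auditing and possibly slightly modifying their construction so that correctness depends only on the grid location, not on membership in the encoded pool. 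Once this is in place, a union bound over the $q$ queries using the independent-trial tags yields a $(1+O(\epsilon))$-approximate nearest neighbor with probability at least $1-\delta$ for every query, and the four space contributions sum to the stated bound after an $O(1)$ rescaling of $\epsilon$.
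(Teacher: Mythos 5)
Your proposal identifies the right ingredients (a Kane--Nelson--style compactly stored JL map, the \cite{indyk2017near} sketch, some extra per-point bits) and even the right place where work is needed, but it leaves a genuine gap exactly at that place, and the ``slight modification'' you hope for there is in fact the whole content of the theorem. The paper's introduction is explicit that snapping a new query point to a grid and invoking the internal decoder of \cite{indyk2017near} \emph{fails}: the compression removes long $1$-paths in the cluster tree, which precisely erases the absolute location of well-separated subclusters. That loss is harmless when both arguments to the decoder are data points (the offset cancels along the tree path), but for a new query $y$ the perturbation of a subcluster's absolute position can be of the same order as $y$'s distance to it, so the nearest neighbor inside the subcluster can flip arbitrarily (Figure~\ref{fig:query}). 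There is also no single ``discretization grid implicitly used by the encoder''---the encoder quantizes displacements at a resolution that varies by tree node, and the sketch of a hypothetical $\widetilde y$ is not computable without knowing where $\widetilde y$ sits in a tree whose absolute geometry has been thrown away. So the step ``verify the decoder remains accurate when one argument is an arbitrary grid point'' is not an audit; it is false as stated, and repairing it requires redesigning the data structure.

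Concretely, the paper's fix has two new structural ideas that your plan lacks. First, it replaces the original ``bottom-out'' pruning of long $1$-paths with \emph{top-out compression}, which keeps the bottom $\Lambda(v)$ edges of each long path; this preserves the low-order bits of positions inside each separated subcluster. Second, it stores for each subtree root a \emph{hashed} grid quantization of its center (``surrogate hashing''), which lets the query procedure recover the high-order bits by enumerating $O(1)^d$ nearby grid points and matching the hash; this is also what makes the sketch randomized and is where the $n\log(q/\delta)$ term actually comes from (not an ``independent-trial tag''). Likewise, the $d\log\Phi$ term comes from storing the hash functions, not an anchor point, and the $n\log\log\Phi$ term comes from long-edge length annotations, not scale tags. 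Finally, the query procedure is a top-down tree traversal that, using the recovered surrogates, never picks a wrong subtree; it returns the index at the leaf it reaches rather than estimating all distances, and Lemmas~\ref{lmm:hashes} and~\ref{lmm:annrounds} are the substance of the correctness argument your sketch would need to supply.
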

The proof is given in~\Cref{sec:ann}. We also give a lower bound of
 $\Omega(n \log(n) /\epsilon^2)$ for $q=1$ and $\delta=1/n^{O(1)}$ (Section~\ref{sec:nnlower}), which shows that the first term in the above theorem is almost tight. 
 
Interestingly, the  representation by itself does not return the  (approximate) {\em distance} between the query point and the returned neighbor. 
Thus, we also consider the problem of estimating distances from a query point to all data points.
In this setting, a result of~\cite{molinaro2013beating} shows that the Johnson-Lindenstrauss space bound is optimal when the number of queries is~\emph{equal} to the number of data points.
However, in many settings, the number of queries is often substantially smaller than the dataset size.
We give nearly tight upper and lower bounds (up to a factor of $\log(1/\epsilon)$) for this problem, showing it is possible to smoothly interpolate between~\cite{indyk2017near}, which does not support out-of-sample distance queries, and the Johnson-Lindenstrauss bound. 

Specifically, we show the following. Suppose that we want the data structure to estimate all cross-distances between a set of $q$ queries and all points in $X$, all of which must be estimated correctly with probability $1-\delta$ (see Section~\ref{s:formal} for the formal problem definition).

\begin{theorem}\label{thm:distances_ub}
For the all-cross-distances problem, there is a sketch of size
\[ O\left(\frac{n}{\epsilon^2}\left(\log n\cdot\log(1/\epsilon) + \log(d\Phi)\log\left(\frac{q}{\delta}\right)\right) + \mathrm{poly}(d,\log\Phi,\log(q/\delta),\log(1/\epsilon))\right) \;\;  \text{bits.} \]
\end{theorem}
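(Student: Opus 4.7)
I plan to combine Johnson--Lindenstrauss dimensionality reduction with the in-sample compression of~\cite{indyk2017near}, organized so that the two summands of the claimed bound correspond to two groups of JL coordinates stored at different precisions. The intuition is that the JL lemma naturally requires target dimension $\Theta(\epsilon^{-2}\log(nq/\delta))=\Theta(\epsilon^{-2}(\log n+\log(q/\delta)))$ to preserve all $nq$ cross-distances, and we save space by storing the ``$\log n$ portion'' using \cite{indyk2017near}-style compression and only the ``$\log(q/\delta)$ portion'' with full $\log(d\Phi)$-bit precision.

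Concretely, draw a random Gaussian matrix $A\in\R^{k\times d}$ with $k=\Theta(\epsilon^{-2}\log(nq/\delta))$, normalized so that, by the standard JL lemma and a union bound over all $nq$ cross-pairs, with probability $\ge 1-\delta$ one has $\|A(x_i-y_j)\|=(1\pm\epsilon/3)\|x_i-y_j\|$ for every $i,j$. Split $k=k_a+k_b$ with $k_a=\Theta(\epsilon^{-2}\log n)$ and $k_b=\Theta(\epsilon^{-2}\log(q/\delta))$, partitioning the rows of $A$ accordingly into $A_a$ and $A_b$. The matrix $A$ is stored at polylogarithmic precision per entry, contributing $\mathrm{poly}(d,\log\Phi,\log(q/\delta),\log(1/\epsilon))$ bits---the last summand of the claimed bound.

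For each $x_i\in X$, store $A_ax_i\in\R^{k_a}$ and $A_bx_i\in\R^{k_b}$ separately. The $k_b$ ``query-sensitive'' coordinates $A_bx_i$ are rounded to a uniform grid of side $\Theta(\epsilon D/\sqrt{k_b})$, where $D=\mathrm{poly}(d,\Phi)$ is an a-priori bound on $\|A_bx\|_\infty$ valid for every $x\in\{-\Phi,\ldots,\Phi\}^d$; this uses $O(\log(d\Phi))$ bits per coordinate, for a total of $O(n\log(q/\delta)\log(d\Phi)/\epsilon^2)$ bits---the second summand. The $k_a$ ``in-sample'' coordinates $A_ax_i$ are compressed via the sketch of~\cite{indyk2017near}, which encodes the set $A_aX\subset\R^{k_a}$ using $O(\log(1/\epsilon))$ bits per coordinate while preserving intra-dataset distances up to factor $(1\pm\epsilon)$; this costs $O(n\log n\log(1/\epsilon)/\epsilon^2)$ bits---the first summand. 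At query time, compute $Ay$ from the stored $A$ and decompose $\|Ax_i-Ay\|^2=\|A_ax_i-A_ay\|^2+\|A_bx_i-A_by\|^2$. The $k_b$-part is computed directly. For the $k_a$-part, first use the $k_b$-sketch to identify an approximate nearest neighbor $x_{i^*}$ of $y$ in $X$, then estimate $\|A_ax_i-A_ay\|$ by combining the \cite{indyk2017near}-compressed distance $\|A_ax_i-A_ax_{i^*}\|$ with the exact quantity $\|A_ax_{i^*}-A_ay\|$ (recoverable, up to the \cite{indyk2017near} rounding error, because we know $A_ay$ exactly and need to reconstruct $A_ax_{i^*}$ from its compressed sketch).

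The main obstacle I anticipate is making the $k_a$-part estimator rigorous: the~\cite{indyk2017near} sketch guarantees in-sample distance preservation but does not directly give cross-distances to out-of-sample queries, so the combination with triangle-inequality via a pivot must be shown to yield a multiplicative $(1\pm\epsilon)$ bound uniformly in $i$. The key quantitative point is that, after JL projection, the $k_a$-subvector of $Ax_i-Ay$ contributes a $\Theta(k_a/k)$-fraction of $\|x_i-y\|^2$ up to $(1\pm\epsilon)$ factors, so the allowable absolute error on the $k_a$-part is itself proportional to $\|x_i-y\|$; this lets the \cite{indyk2017near} rounding error (which scales with $\mathrm{diam}(A_aX)$) be absorbed whenever the pivot $x_{i^*}$ is a constant-factor nearest neighbor of $y$, which the $k_b$-sketch provides. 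The remaining ingredients---JL concentration, union bounds over queries, quantization of $A$, and bounding $\|A_bx\|_\infty$---are standard.
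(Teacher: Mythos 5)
Your plan takes a genuinely different route from the paper: you do one JL projection and split its coordinates into a ``$\log n$'' block $A_a$ and a ``$\log(q/\delta)$'' block $A_b$, storing $A_b x_i$ at full precision and compressing $A_a X$ via~\cite{indyk2017near}, whereas the paper keeps the full-dimensional hierarchical tree of Section~\ref{sec:sketch} (with surrogate hashing) and attaches, at each subtree root, a JL sketch $Mx_{c(r)}$ together with a~\cite{kushilevitz2000efficient} fixed-scale sketch $\mathrm{sk}_{2^{\ell(r)}}(x_{c(r)})$. The decomposition of the two summands in the size bound is also different: the paper's first summand comes from the tree/surrogate structure and the second from the per-root JL sketches, while you obtain them from the two coordinate blocks. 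Your accounting of the bit budget is fine, and the two-block heuristic is appealing; the approaches are not the same, so this is worth flagging.

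However, the step you yourself single out as the ``main obstacle'' is a genuine gap, not a technicality, and it does not close as you sketch. Concretely, to estimate $\norm{A_a x_i - A_a y}$ via a pivot $x_{i^*}$ you need either (i) a scalar triangle-inequality argument, or (ii) vector addition $A_a x_i - A_a y = (A_a x_i - A_a x_{i^*}) + (A_a x_{i^*} - A_a y)$. Option (i) fails because $x_{i^*}$ is a $(1+O(1))$-ANN of $y$, so $\norm{x_{i^*}-y}$ and $\norm{x_i - y}$ are typically the same order; consequently $\norm{A_a x_{i^*} - A_a y} = \Theta\bigl(\sqrt{k_a/k}\,\norm{x_i-y}\bigr)$ is comparable to $\norm{A_a x_i - A_a y}$ rather than an $O(\epsilon)$-fraction of it, and the triangle inequality yields only a constant-factor bracket, not $(1\pm\epsilon)$. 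Option (ii) requires reconstructing the vector $A_a x_{i^*}$, but the~\cite{indyk2017near} sketch was designed to preserve pairwise distances, not positions: across a long edge it anchors a subtree only up to a systematic shift whose magnitude is on the order of $\epsilon \cdot 2^{\ell(r)}$ (with $r$ the subtree root), which can exceed $\epsilon\norm{x_i-y}$ whenever $y$ lies well inside the separated cluster. This is precisely the failure mode depicted in Figure~\ref{fig:query}, and it is the reason the paper introduces surrogate hashing and the~\cite{kushilevitz2000efficient} scale tests. Your remark that ``the~\cite{indyk2017near} rounding error scales with $\mathrm{diam}(A_a X)$'' is not a feature you can rely on: the rounding error that matters is the anchor error of the specific subtree containing $x_{i^*}$, and it is not dominated by $\epsilon\norm{x_i-y}$ in the small-distance regime. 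So the proposal, as written, does not establish the theorem; it re-encounters the out-of-sample obstruction and would need the paper's extra machinery (or an equivalent device) to resolve it.
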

Note that  the dependence per point on $\Phi$ is logarithmic, as opposed to doubly logarithmic in Theorem~\ref{thm:ann_ub}.
We show this dependence is necessary, as per the following theorem. 
\begin{theorem}\label{thm:distances_lb}
Suppose that $d^{1-\rho}\geq\epsilon^{-2}\log(nq/\delta)$, $\Phi\geq1/\epsilon$, and $1/n^{0.5-\rho'}\leq\epsilon\leq\epsilon_0$ for some constants $\rho,\rho'>0$ and a sufficiently small constant $\epsilon_0$.
Then, for the all-cross-distances problem, any sketch must use at least
\[ \Omega\left(\frac{n}{\epsilon^2}\left(\log n + \log(d\Phi)\log\left(\frac{q}{\delta}\right)\right)\right) \;\; \text{bits.} \]
\end{theorem}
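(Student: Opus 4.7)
The plan is to establish the two terms in the lower bound by separate communication-complexity reductions, both ultimately going through Augmented Indexing. Recall: in Augmented Indexing on alphabet $\Sigma$ of length $N$, Alice holds $z\in\Sigma^N$, Bob holds an index $i\in[N]$ together with the prefix $z_1,\ldots,z_{i-1}$, and Bob must output $z_i$ with constant error; the one-way communication complexity is $\Omega(N\log|\Sigma|)$, and direct-sum / direct-product versions are available to handle success probability $1-\delta$ across many instances. My overall reduction embeds a hard Augmented Indexing instance into Alice's dataset $X$, with Bob designing queries so that correctly estimating an embedded $(1\pm\epsilon)$-distance forces Bob to recover a specific symbol $z_i$; the sketch then must carry at least the Augmented Indexing communication.

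For the $\Omega(n\log n/\epsilon^2)$ term, it suffices to take $q=1$ and constant~$\delta$. I would use a per-point encoding on alphabet of size $n^{\Theta(1)}$ of length $\Theta(1/\epsilon^2)$ per point, in the spirit of \cite{molinaro2013beating}. The $\log n$ factor arises because Bob must recover \emph{all} $n$ coordinates of the query-to-data distance vector correctly; this is morally the same ``$n$-fold amplification'' that gives JL its $\log n / \epsilon^2$ bound, and an Augmented Indexing reduction that simultaneously encodes $\Theta(\log n/\epsilon^2)$ bits per data point can be implemented using a single, carefully scaled query that singles out one symbol at a time via the $(1+\epsilon)$-gap.

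The $\Omega(n\log(d\Phi)\log(q/\delta)/\epsilon^2)$ term requires exploiting all $q$ queries and the full $\log\Phi$ bits of precision. Per data point I would plant an independent Augmented Indexing instance over an alphabet of size $\Theta(d\Phi)$ and length $\Theta(\log(q/\delta)/\epsilon^2)$: the $\log(d\Phi)$ factor comes from the alphabet size, since each symbol takes one of $\Theta(d\Phi)$ values (this is where the assumption $\Phi\geq 1/\epsilon$ is needed, so that symbols are distinguishable at scale $1\pm\epsilon$), and the $\log(q/\delta)$ factor comes from a direct-product argument: each of the $q$ queries is targeted at one symbol, each must succeed with probability $1-\delta/q$ after conditioning on the prefix, and $\log(q/\delta)$-fold amplification of Augmented Indexing is tight. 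The assumption $d^{1-\rho}\geq\epsilon^{-2}\log(nq/\delta)$ ensures that the encoding vectors for the $n$ data points can be placed in near-orthogonal subspaces, so that queries attacking one data point do not leak non-trivial information about the others.

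The main obstacle is executing the second reduction cleanly: one must construct the $q$ queries and the embedding of symbols into $x_i\in\{-\Phi,\ldots,\Phi\}^d$ so that (i) a $(1\pm\epsilon)$ distance estimate to the query genuinely pins down a single $\log(d\Phi)$-bit symbol, and (ii) the ``$\log(q/\delta)$ amplification'' truly multiplies with the $\log(d\Phi)/\epsilon^2$ bits per query rather than being absorbed by reuse across queries. This is the step where the assumption $\epsilon\geq 1/n^{0.5-\rho'}$ should enter, since it rules out regimes where $\epsilon$ is so small that the sketch could afford to store each point essentially verbatim, collapsing the reduction. Once the hard distribution is in place, taking the maximum of the two lower bounds (and noting they add up to at most twice their maximum) yields the claimed $\Omega\!\left(n\epsilon^{-2}(\log n + \log(d\Phi)\log(q/\delta))\right)$ bound.
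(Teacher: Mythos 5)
Your high-level framing (one-way communication lower bounds via Augmented Indexing, following the framework of~\cite{molinaro2013beating}) is the correct family, and it matches the paper's tool for the second term. But the execution has two concrete gaps, and the first term is handled by a genuinely different mechanism in the paper.

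\textbf{First term.} The paper does \emph{not} run a direct Augmented Indexing reduction for $\Omega(n\log n/\epsilon^2)$. Instead it cites the already-proven \emph{in-sample} (all-pairs, $X$-to-$X$) distance-sketching lower bound of~\cite{indyk2017near} and reduces that problem to all-cross-distances by a recursive composition trick: apply the cross-distance sketch to $X$, note that in expectation it answers correctly for $(1-\delta)n$ of the data points themselves viewed as queries, recurse on the $\le\tfrac34 n$ ``bad'' ones, and after $O(\log n)$ levels all $\binom{n}{2}$ pairwise distances are recoverable with non-zero probability from a bundle of sketches of total size $O(ns)$. This works for \emph{any} $\delta$ (even a constant) precisely because it never union-bounds over $n^{\Theta(1)}$ queries. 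Your proposed direct reduction --- ``a single, carefully scaled query that singles out one symbol at a time via the $(1+\epsilon)$-gap'' --- would work in the style of the paper's Theorem~\ref{thm:ann_lb} (Appendix~B), but that argument needs $\delta<1/n^2$ to union-bound over the $n^2$ decoding queries. You cannot simply declare ``constant $\delta$ suffices''; either you re-derive the in-sample bound in the AugIndex framework and compose as the paper does, or you must spell out how your encoding survives constant failure probability.

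\textbf{Second term.} Your attribution of the $\log(d\Phi)$ factor to ``alphabet size $\Theta(d\Phi)$, distinguishable at scale $1\pm\epsilon$ when $\Phi\ge 1/\epsilon$'' does not hold up: a single $(1\pm\epsilon)$-multiplicative distance estimate reveals only $O(\log(1/\epsilon))$ bits about that distance, regardless of how large $\Phi$ or $d$ is. The paper obtains $\log(d\Phi)=\Theta(\max(\log d,\log\Phi))$ via \emph{two separate} reductions. In both it uses the $r$-indexed Augmented Set List problem with the~\cite{jayram2013optimal} geometric encoding, whose ambient dimension is $D=O(\epsilon^{-2}\log(1/\eta)\exp(r))$; the factor of $\log d$ comes from being able to afford $r=\rho\log d$ layers of nesting under the hypothesis $d^{1-\rho}\ge\epsilon^{-2}\log(q/\delta)$, and the factor of $\log\Phi$ comes from $r=\rho\log\Phi$ layers followed by a Johnson--Lindenstrauss projection back down to dimension $O(d)$, using $\Phi^{1-\rho}\ge\epsilon^{-2}\log(q/\delta)$ to keep coordinates within magnitude $O(\Phi)$. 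This is where the hypotheses you tried to explain away actually get used, and it is structurally different from a large-alphabet reduction. Finally, the $\log(q/\delta)$ factor is not a vanilla direct product: the sketch is a \emph{shared} resource across all $q$ query instances, so one needs the Molinaro--Woodruff--Yaroslavtsev direct-sum theorem in the \emph{abortion model} (a protocol allowed to abort with constant probability but err with only $\delta$ probability) to lift the single-instance $\Omega(km\log(1/\delta))$ bound to the $q$-fold $\Omega(qkm\log(q/\delta))$. Your one-line ``$\log(q/\delta)$-fold amplification is tight'' elides exactly the technically hard part.
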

The proofs are given in ~\Cref{sec:dist} and \Cref{sec:dist_lb}, respectively. 

\paragraph{Practical variant}
\cite{indyk2017practical} presented a simplified version of~\cite{indyk2017near}, which has slightly weaker size guarantees, but on the other hand is practical to implement and was shown to work well empirically.
However, it did not provably support out-of-sample queries.
Our techniques in this paper can be adapted to their algorithm and endow it with such provable guarantees, while retaining its simplicity and practicality. We elaborate on this in Appendix~\ref{sec:middleout}.

\paragraph{Our techniques}
The starting point of our representation is the compressed tree data structure from~\cite{indyk2017near}.  The structure is obtained by constructing a hierarchical clustering of the data set, forming a tree of clusters.  The position of each point corresponding to a node in the tree is then represented by storing a (quantized) displacement vector between the point and its ``ancestor'' in the tree.
The resulting tree is further compressed by identifying and post-processing ``long'' paths in the tree. 
The intuition is that a subtree at the bottom of such a path corresponds to a cluster of points that is ``sufficiently separated'' from the rest of the points (see Figure~\ref{fig:noquery}). This means that the data structure does not need to know the exact position of this cluster in order to estimate the distances between the points in the cluster and the rest of the data set. Thus the data structure replaces each long path by a quantized displacement vector, where the quantization error does not depend on the length of the path. This ensures that the tree does not have long paths, which bounds its total size. 

Unfortunately, this reasoning breaks down if one of the points  is not known in advance, as it is the case for the approximate nearest neighbor problem. In particular, if the query point $y$ lies in the vicinity of the separated cluster, then small perturbations to the cluster position can dramatically affect which points in the cluster are closest to $y$ (see Figure~\ref{fig:query} for an illustration). 

In this paper we overcome this issue by maintaining extra information about the geometry of the point set. First, for each long path, we store not only the quantized displacement vector (which preserves the ``global'' position of the subtree with respect to the rest of the tree) but also the {\em suffix} of the path. Intuitively, this allows us to recover both the {\em most} significant bits and the {\em least} significant bits of  points in the subtree corresponding to the ``separated" clusters, which allows us to avoid cases as depicted in Figure~\ref{fig:query}. However, this intuition breaks down when the diameter of the cluster is much larger than the amount of ``separation".  Thus we also need to store extra information about the position of the subtree points. This is accomplished by storing a hashed representation of a representative point of the subtree (called ``the center"). We note that this modification makes our data structure inherently randomized; in contrast, the data structure of~\cite{indyk2017near} was deterministic.

Given the above information, the approximate nearest neighbor search is performed top down, as follows. In each step, we recover and enumerate points  in the current subtree, some of which could be centers of ``separated'' clusters as described above. The ``correct'' center, guaranteed to contain an approximate nearest neighbor of the query point, is identified by its hashed value (if no hash match is found, then any center is equally good). Note that our data structure does not allow us to compute all distances from the query point $y$ to all points in $X$ (in fact, as mentioned earlier, this task is not possible to achieve within the desired space bound). Instead, it stores just enough information to ensure that the procedure never selects a ``wrong" subtree to iterate on.

Lastly, suppose we also wish to estimate all distances from $y$ to $X$.
To this end, we augment each subtree with the distance sketches due to~\cite{kushilevitz2000efficient} and~\cite{johnson1984extensions}.
The former allows us to identify the cluster of~\emph{all} approximate nearest neighbors of $y$ (whereas the above algorithm was only guaranteed to return~\emph{one} approximate nearest neighbor).
The latter stores the approximate distance from that cluster.
These are the smallest distances from $y$ to $X$, which are the most challenging to estimate; the remaining distances can be estimated based on the hierarchical partition into well-separated clusters, which is already present in the sketch.

\begin{figure}[t]
\vskip 0.2in
\begin{center}
\centerline{\includegraphics[scale=0.5]{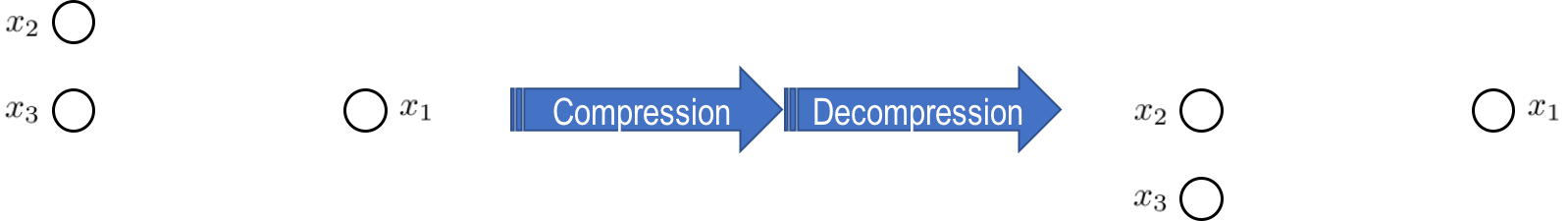}}
\caption{Compression and decompression of a two-dimensional dataset. The location of the well-separated cluster $\{x_2,x_3\}$ can be perturbed by the lossy compression algorithm, without significantly changing the distances to $x_1$.}
\label{fig:noquery}
\end{center}
\vskip -0.2in
\end{figure} 

\begin{figure}
\vskip 0.2in
\begin{center}
\centerline{\includegraphics[scale=0.5]{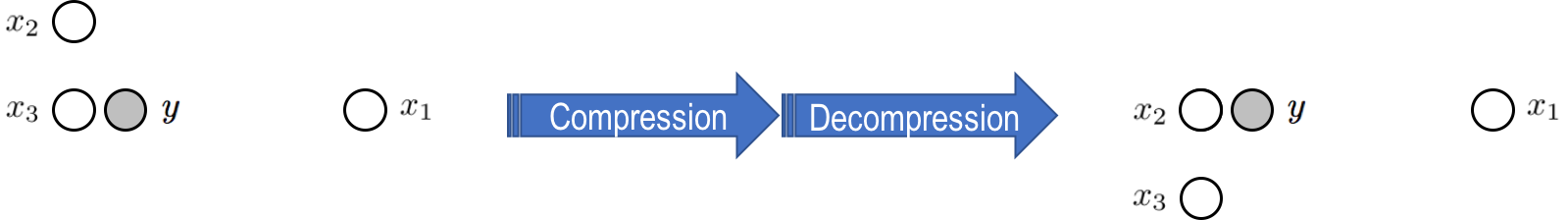}}
\caption{Compression and decompression of a dataset $x_1,x_2,x_3$ in the presence of a new query point $y$, which is unknown during compression. The same small perturbation in the location of $\{x_2,x_3\}$ as in Figure~\ref{fig:noquery} fails to preserve $x_3$ as the nearest neighbor of $y$.}
\label{fig:query}
\end{center}
\vskip -0.2in
\end{figure} 

\section{Formal Problem Statements}
\label{s:formal}
We formalize the problems in terms of one-way communication complexity.
The setting is as follows.
Alice has $n$ data points, $X=\{x_1,\ldots,x_n\} \subset \{-\Phi \ldots \Phi\}^d$, while
Bob has $q$ query points, $Y=\{y_1,\ldots,y_q\}  \subset \{-\Phi \ldots \Phi\}^d$, where $1\leq q\leq n$.
Distances are Euclidean, and we can assume w.l.o.g.~that $d\leq n$.\footnote{Any $N$-point Euclidean metric can be embedded into $N-1$ dimensions.}
Let $\epsilon,\delta\in(0,1)$ be given parameters.
In the one-way communication model, Alice computes a compact representation (called a~\emph{sketch}) of her data points and sends it to Bob, who then needs to report the output.
We define two problems in this model (with private randomness), each parameterized by $n,q,d,\Phi,\epsilon,\delta$:\footnote{Throughout we use $[m]$ to denote $\{1,\ldots,m\}$, for an integer $m>0$.}

\paragraph{Problem 1 -- All-nearest-neighbors:}
Bob needs to report a $(1+\epsilon)$-approximate nearest neighbor in $X$ for all his points simultaneously, with probability $1-\delta$.
That is, for every $j\in[q]$, Bob reports an index $i_j\in[n]$ such that 
\[ \Pr\left[ \forall j\in[q], \;\; \norm{y_j-x_{i_j}}\leq(1+\epsilon)\min_{i\in[n]}\norm{y_j-x_i} \right] \geq 1-\delta . \]

Our upper bound for this problem is stated in~\Cref{thm:ann_ub}.

\paragraph{Problem 2 -- All-cross-distancess:}
Bob needs to estimate all distances $\norm{x_i-y_j}$ up to distortion $(1\pm\epsilon)$ simultaneously, with probability $1-\delta$.
That is, for every $i\in[n]$ and $j\in[q]$, Bob reports an estimate $E(i,j)$ such that
\[ \Pr\Big[ \forall i\in[n],j\in[q], \;\; (1-\epsilon)\norm{x_i-y_j} \leq E(i,j) \leq (1+\epsilon)\norm{x_i-y_j} \Big] \geq 1-\delta . \]
Our upper and lower bounds for this problem are stated in~\Cref{thm:distances_ub,thm:distances_lb}.

\section{Basic Sketch}\label{sec:sketch}

In this section we describe the basic data structure (generated by Alice) used for all of our results.
The data structure augments the representation from~\cite{indyk2017near}, which we will now reproduce.
For the sake of readability, the notions from the latter paper (tree construction via hierarchical clustering, centers, ingresses and surrogates) are interleaved with the new ideas introduced in this paper (top-out compression, grid quantization and surrogate hashing).
Proofs in this section are deferred to Appendix~\ref{sec:sketch_proofs}.


\subsection{Hierarchical Clustering Tree}
The sketch consists of an annotated hierarchical clustering tree, which we now describe with our modified ``top-out compression'' step.

\paragraph{Tree construction}
We construct the inter-link hierarchical clustering tree of $X$: In the bottom level (numbered $0$) every point is a singleton cluster, and level $\ell>0$ is formed from level $\ell-1$ by recursively merging any two clusters whose distance is at most $2^\ell$, until no two such clusters are present.
We repeat this until level $\lceil\log(2\sqrt d\Phi)\rceil$, even if all points in $X$ are already joined in one cluster at a lower level.
The following observation is immediate.
\begin{lemma}\label{lmm:separation}
If $x,x'\in X$ are in different clusters at level $\ell$, then $\norm{x-x'}\geq2^\ell$.
\end{lemma}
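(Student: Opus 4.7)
The plan is to unpack the construction of the inter-link clustering at level $\ell$ and invoke the termination condition of the merge process. The key observation is that for single-linkage (inter-link) clustering, the distance between two clusters $C, C'$ is defined as $\min\{\norm{u-v} : u \in C, v \in C'\}$, so it is bounded above by the distance between any pair of their members.

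First I would fix $\ell$ and consider the state of the clustering at the end of the level-$\ell$ merging phase. By construction, we merged any two clusters whose inter-link distance was at most $2^\ell$, and continued until no such pair remained. Therefore, at termination, every pair of distinct clusters $C \neq C'$ surviving at level $\ell$ satisfies $\min\{\norm{u-v} : u \in C,\, v \in C'\} > 2^\ell$.

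Now suppose $x$ and $x'$ lie in distinct clusters $C$ and $C'$ at level $\ell$. Then $\norm{x - x'}$ is at least the inter-link distance between $C$ and $C'$, which by the previous paragraph is strictly greater than $2^\ell$, and in particular $\geq 2^\ell$. There is no real obstacle here; the statement is essentially the defining invariant of the level-$\ell$ step, and the only thing to be careful about is confirming that "distance between clusters" in the construction refers to single-linkage distance (as the name ``inter-link'' suggests), since this is precisely what makes the implication $x \in C,\, x' \in C' \Rightarrow \norm{x-x'} \geq d(C,C')$ hold.
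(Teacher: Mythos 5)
Your proof is correct and matches the paper's intent: the paper states this as an immediate observation with no proof, and your argument simply spells out the termination condition of the level-$\ell$ merge phase together with the single-linkage definition of cluster distance. No issues.
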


\paragraph{Notation}
Let $T^*$ denote the tree.
For every tree node $v$, we denote its level by $\ell(v)$, its associated cluster by $C(v)\subset X$, and its cluster diameter by $\Delta(v)$. For a point $x_i\in X$, let $\mathrm{leaf}(x_i)$ denote the tree leaf whose associated cluster is $\{x_i\}$.

\paragraph{Top-out compression}
The~\emph{degree} of a node in $T^*$ is its number of children.
A~\emph{$1$-path with $k$ edges} in  $T^*$ is a downward path $u_0,u_1,\ldots,u_k$, such that (i) each of the nodes $u_0,\ldots,u_{k-1}$ has degree $1$, (ii) $u_k$ has degree either $0$ or more than $1$, (iii) if $u_0$ is not the root of $T^*$, then its ancestor has degree more than $1$.

For every node $v$ denote $\Lambda(v) := \log(\Delta(v)/(2^{\ell(v)}\epsilon))$. 
If $v$ is the bottom of a $1$-path with more than $\Lambda(v)$ edges, we replace all but the bottom $\Lambda(v)$ edges with a~\emph{long edge}, and annotate it by the length of the path it represents.
More precisely, if the downward $1$-path is $u_0,\ldots,u_k=v$ and $k>\Lambda(v)$, then we connect $u_0$ directly to $u_{k-\Lambda(v)}$ by the long edge, and the nodes $u_1,\ldots,u_{k-\Lambda(v)-1}$ are removed from the tree, and the long edge is annotated with length $k-\Lambda(v)$.

\begin{lemma}\label{lmm:tree_size}
The compressed tree has $O(n\log(1/\epsilon))$ nodes.
\end{lemma}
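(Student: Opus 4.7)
The plan is to count the nodes of the compressed tree in three buckets: leaves, branching nodes (degree $\geq 2$), and middle (degree-$1$) nodes that survive $1$-path contraction. The tree has at most $n$ leaves and hence at most $n-1$ branching nodes, so the entire content of the lemma reduces to bounding the surviving-middle-nodes count by $O(n\log(1/\epsilon))$. Each $1$-path with bottom $v$ contributes at most $\Lambda(v)+O(1)$ surviving middle nodes. If $v$ is a leaf then $\Delta(v)=0$, so $\Lambda(v)=-\infty$ and the entire $1$-path collapses to a single long edge (contributing only $O(1)$). If $v$ is branching, write $\Lambda(v)=\log r(v)+\log(1/\epsilon)$ with $r(v):=\Delta(v)/2^{\ell(v)}$. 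The $\log(1/\epsilon)$ parts aggregate to $(n-1)\log(1/\epsilon)$ across the branching nodes, so the task reduces to showing $\sum_{v\text{ branching}}\log r(v)=O(n)$.

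I plan to prove the stronger estimate $\sum_{v\text{ branching}} r(v)\leq 2(n-1)$. A spanning-tree diameter bound at level $\ell(v)$ yields $\Delta(v)\leq \sum_i\Delta(u_i)+(k_v-1)\cdot 2^{\ell(v)}$, where $u_1,\ldots,u_{k_v}$ are the direct children of $v$ and the $(k_v-1)\cdot 2^{\ell(v)}$ term counts the inter-cluster edges used to merge the sub-clusters into $C(v)$. Passing to \emph{effective} children $\tilde u_i$ (the first special descendant below each direct child, reached via a chain of $t_i\geq 0$ middle nodes along which $C$ is unchanged) and dividing by $2^{\ell(v)}$ converts this into the recursion
\[
  r(v)\;\leq\;\sum_{i=1}^{k_v}\frac{r(\tilde u_i)}{2^{t_i+1}}\;+\;(k_v-1).
\]
Summing over all branching $v$, the additive term contributes $\sum_v(k_v-1)=n-1$ (a standard tree identity, since the effective children of branching nodes are exactly the non-root special nodes). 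In the double sum, each non-root branching $\tilde u$ appears exactly once with weight at most $1/2$, while leaves contribute $0$ (since $r=0$ there); hence the double sum is at most $\tfrac12\sum_v r(v)$. Rearranging gives $\sum_v r(v)\leq 2(n-1)$.

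Finally, Jensen's inequality applied to the concave function $\log$ gives $\sum_v \log r(v)\leq m\log(2(n-1)/m)$, where $m\leq n-1$ is the number of branching nodes; since $x\mapsto x\log(2n/x)$ is $O(n)$ on $[1,n]$, this sum is $O(n)$. Putting everything together, the compressed tree has $O(n)+O(n\log(1/\epsilon))=O(n\log(1/\epsilon))$ nodes. The main technical obstacle is setting up the recursion in terms of effective rather than direct children, so that the chain-contraction factor $1/2^{t_i+1}\leq 1/2$ appears on the right-hand side and the resulting inequality can be closed against $\sum_v r(v)$ on the left; once that factor of $1/2$ is in hand the telescoping, and then the passage from $\sum r(v)$ to $\sum \log r(v)$ via Jensen, are routine.
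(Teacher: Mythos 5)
Your proof is correct. It begins with the same charging step as the paper---degree-$1$ nodes along each maximal $1$-path are charged to the bottom special node, reducing the lemma to showing $\sum_{v:\deg(v)\neq 1}\max\bigl(0,\Lambda(v)\bigr)=O(n\log(1/\epsilon))$---but where the paper at that point simply cites \cite{indyk2017near} for this sum bound (remarking only that their ``bottom-out'' compression zeroes out summands exceeding $\Lambda(v)$ while the present ``top-out'' compression trims them, a difference that does not affect the sum being bounded), you supply a self-contained argument. Your route is: establish $\Delta(v)\leq\sum_i\Delta(u_i)+(k_v-1)2^{\ell(v)}$ from connectivity of the merge graph $H_v$ via a spanning-tree path; rewrite in terms of $r(v)=\Delta(v)/2^{\ell(v)}$ and \emph{effective} (first special) descendants so that each recursive term carries a damping factor $2^{-(t_i+1)}\leq 1/2$ (because along a $1$-path the cluster is frozen while the level decrements, so $r$ halves per level); sum over branching nodes, use $\sum_v(k_v-1)=n-1$, and telescope to $\sum_v r(v)\leq 2(n-1)$; then pass from $\sum r(v)$ to $\sum\log r(v)$ by concavity. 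This makes explicit the geometric mechanism that the paper leaves implicit in a pointer, and is a clean, elementary proof. One small notational tightening: what must actually be bounded is $\sum_v\max\bigl(0,\Lambda(v)\bigr)$, so the Jensen step should be applied only to the subset of branching nodes with $r(v)\geq 1$; on that subset $\sum r(v)\leq 2(n-1)$ still holds, the count $m'$ satisfies $m'\leq n-1$, and $m'\log\bigl(2(n-1)/m'\bigr)=O(n)$ exactly as you compute, so the conclusion is unaffected.
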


We henceforth refer only to the compressed tree, and denote it by $T$.
However, for every node $v$ in $T$, $\ell(v)$ continues to denote its level before compression (i.e., the level where the long edges are counted according to their lengths).
We partition $T$ into~\emph{subtrees} by removing the long edges.
Let $\mathcal F(T)$ denote the set of subtrees.

\begin{lemma}\label{lmm:subtree_root}
Let $v$ be the bottom node of a long edge, and $x,x'\in C(v)$. Then $\norm{x-x'}\leq2^{\ell(v)}\epsilon$.
\end{lemma}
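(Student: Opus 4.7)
The plan is to unwind the definition of the top-out compression and relate $\ell(v)$ to $\Delta(v)$ via $\Lambda$. By construction, a long edge arises from a downward $1$-path $u_0, u_1, \ldots, u_k = w$ in $T^*$ of length $k > \Lambda(w)$, with $u_0$ at the top and $w = u_k$ at the bottom of the original $1$-path. The compression keeps only the bottom $\Lambda(w)$ edges of the path, so the long edge connects $u_0$ directly to $u_{k-\Lambda(w)}$. Hence the bottom node of this long edge is $v = u_{k-\Lambda(w)}$, an intermediate node of the original $1$-path.

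Two simple observations do most of the work. First, every interior node of a downward $1$-path has a unique child, so the associated clusters coincide: $C(u_i) = C(u_{i+1})$ for $0 \leq i < k$. Chaining these equalities yields $C(v) = C(w)$, and therefore $\Delta(v) = \Delta(w)$. Second, since levels in $T^*$ decrease by exactly one along each downward edge, $\ell(v) = \ell(u_0) - (k - \Lambda(w)) = \ell(w) + \Lambda(w)$.

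Substituting the definition $\Lambda(w) \geq \log\!\bigl(\Delta(w)/(2^{\ell(w)}\epsilon)\bigr)$ (where we interpret $\Lambda$ as an integer, rounded up if necessary so that the number of kept edges is well-defined), we obtain
\[
  2^{\ell(v)}\epsilon \;=\; 2^{\Lambda(w)} \cdot 2^{\ell(w)}\epsilon \;\geq\; \Delta(w) \;=\; \Delta(v).
\]
Any $x, x' \in C(v)$ satisfy $\|x - x'\| \leq \Delta(v) \leq 2^{\ell(v)}\epsilon$, which is the claim.

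I do not anticipate a substantive obstacle beyond carefully identifying which node plays which role in the compression and verifying the integrality convention for $\Lambda$. The essence of the argument is that $\Lambda(w)$ was chosen precisely so that after keeping $\Lambda(w)$ additional levels above $w$, the diameter-to-scale ratio $\Delta/2^{\ell}$ at the bottom of the long edge falls below $\epsilon$.
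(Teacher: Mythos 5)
Your proof is correct and follows essentially the same route as the paper's: identify the bottom node $v$ of the long edge as an interior node of the original $1$-path, use the fact that clusters are unchanged along the $1$-path to equate $C(v)$ with $C(w)$ and hence the diameters, then chain $\ell(v)=\ell(w)+\Lambda(w)$ with the definition of $\Lambda$ to bound $\Delta(v)$ by $2^{\ell(v)}\epsilon$. Your explicit attention to the integrality convention for $\Lambda$ (using $\geq$ via rounding up, rather than the paper's exact equality) is a slight improvement in rigor but does not change the argument.
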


\begin{lemma}\label{lmm:subtree_leaf}
Let $u$ be a leaf of a subtree in $\mathcal F(T)$, and $x,x'\in C(u)$. Then $\norm{x-x'}\leq2^{\ell(u)}\epsilon$.
\end{lemma}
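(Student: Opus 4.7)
The plan is to characterize which nodes can be leaves of a subtree in $\mathcal F(T)$ and then reduce each case to \Cref{lmm:subtree_root}. Since subtrees in $\mathcal F(T)$ are obtained by deleting the long edges of $T$, a node $u$ is a leaf of its subtree iff in $T$ it either has no children or every outgoing child-edge is a long edge. I will argue that this forces a clean dichotomy: either (a) $u$ is a leaf of $T^*$ (equivalently, of $T$), or (b) $u$ is the top endpoint $u_0$ of some long edge, corresponding to a $1$-path $u_0,u_1,\ldots,u_k=v$ in $T^*$, where $v$ is the bottom of that long edge.

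In case (a), $C(u)$ is a singleton and $\|x-x'\|=0$ trivially. In case (b), each of $u_0,\ldots,u_{k-1}$ has degree $1$ in $T^*$ by definition of a $1$-path, so each has a unique child whose cluster it inherits; a straightforward induction gives $C(u)=C(u_0)=\cdots=C(u_k)=C(v)$. Applying \Cref{lmm:subtree_root} at $v$ yields $\|x-x'\|\leq 2^{\ell(v)}\epsilon$ for all $x,x'\in C(v)=C(u)$. Since levels increase strictly along any downward $1$-path, $\ell(u)=\ell(u_0)>\ell(v)$, and therefore $\|x-x'\|\leq 2^{\ell(v)}\epsilon\leq 2^{\ell(u)}\epsilon$, as required.

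The main (modest) obstacle is establishing the above dichotomy, specifically ruling out a node $u$ with multiple children in $T^*$ whose every child-edge in $T$ is nevertheless a long edge. The key observation is that every long edge is created from a $1$-path whose top endpoint $u_0$ has degree $1$ in $T^*$ by construction. Hence, if any child-edge of $u$ in $T$ is a long edge, then $u$ must play the role of $u_0$ for that long edge, which forces $u$ to have degree $1$ in $T^*$ and in particular only one child-edge in $T$. This places us in case (b) and completes the argument.
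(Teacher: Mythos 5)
Your argument is essentially the paper's own proof: it splits into the same two cases and handles case (b) by pushing the bound down to the bottom of the long edge via \Cref{lmm:subtree_root}, then using that levels only shrink going downward. Two cosmetic slips worth noting: the bottom endpoint of the long edge is $u_{k-\Lambda(u_k)}$, not $u_k$ (top-out compression keeps the bottom $\Lambda(u_k)$ edges of the $1$-path intact), and levels \emph{decrease}, not increase, along a downward path---but since $C(u_0)=\cdots=C(u_k)$ are all equal and $\ell(u_0)$ strictly exceeds the level of every node below it on the path, the inequality $\norm{x-x'}\leq 2^{\ell(u)}\epsilon$ follows exactly as you conclude.
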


\subsection{Surrogates}
The purpose of annotating the tree is to be able to recover a list of~\emph{surrogates} for every point in $X$.
A surrogate is a point whose location approximates $x$.
Since we will need to compare $x$ to a new query point, which is unknown during sketching, we define the surrogates to encompass a certain amount information about the absolute point location, by hashing a coarsened grid quantization of a representative point in each subtree.

\paragraph{Centers}
With every tree node $v$ we associate an index $c(v)\in[n]$ such that $x_{c(v)}\in C(v)$, and we call $x_{c(v)}$ the~\emph{center} of $C(v)$.
The centers are chosen bottom-up in $T$ as follows.
For a leaf $v$, $C(v)$ contains a single point $x_i\in X$, and we set $c(v)=i$. 
For a non-leaf $v$ with children $u_1,\ldots,u_k$, we set $c(v)=\min\{c(u_i):i\in[k]\}$.

\paragraph{Ingresses}
Fix a subtree $T'\in\mathcal F(T)$.
To every node $u$ in $T'$, except the root, we will now assign an~\emph{ingress} node, denoted $\mathrm{in(u)}$.
Intuitively this is a node in the same subtree whose center is close to $u$, and the purpose is to store the location of $u$ by its quantized displacement from that center (whose location will have been already stored, by induction).

We will now assign ingresses to all children of a given node $v$. (Doing this for every $v$ in $T'$ defines ingresses for all nodes in $T'$ except its root.) Let $u_1,\ldots,u_k$ be the children of $v$, and w.l.o.g.~$c(v)=c(u_1)$. Consider the graph $H_v$ whose nodes are $u_1,\ldots,u_k$, and $u_i,u_j$ are neighbors if there are points $x\in C(u_i)$ and $x'\in C(u_j)$ such that $\norm{x-x'}\leq2^{\ell(v)}$. By the tree construction, $H_v$ is connected. We fix an arbitrary spanning tree $\tau(v)$ of $H_v$ which is rooted at $u_1$.

For $u_1$ we set $\mathrm{in}(u_1):=v$. For $u_i$ with $i>1$, let $u_j$ be its (unique) direct ancestor in the tree $\tau(v)$. Let $x\in C(u_j)$ be the closest point to $C(u_i)$ in $C(u_j)$. Note that in $T$ there is a downward path from $u_j$ to $\mathrm{leaf}(x)$. Let $u_x$ be the bottom node in that path that belongs to $T'$. (Equivalently, $u_x$ is the bottom node on that downward path that is reachable from $u$ without traversing a long edge.) We set $\mathrm{in}(u_i):=u_x$.

\paragraph{Grid net quantization}
Assume w.l.o.g.~that $\Phi$ is a power of $2$.
We define a hierarchy of grids aligned with $\{-\Phi \ldots \Phi\}^d$ as follows.
We begin with the single hypercube whose corners are $(\pm\Phi, \ldots, \pm\Phi)^d$.
We generate the next grid by halving along each dimension, and so on.
For every $\gamma>0$, let $\mathcal{N}_\gamma$ be the coarsest grid generated, whose cell side is at most $\gamma/\sqrt{d}$. Note that every cell in $\mathcal{N}_\gamma$ has diameter at most $\gamma$.
For a point $x\in\R^d$, we denote by $\mathcal{N}_\gamma[x]$ the closest corner of the grid cell containing it.

We will rely on the following fact about the intersection size of a grid and a ball; see, for example, \cite{har2012approximate}.

\begin{claim}\label{clm:gridball}
For every $\gamma>0$, the number of points  in $\mathcal N_\gamma$ at distance at most $2\gamma$ from any given point, is at most $O(1)^d$.
\end{claim}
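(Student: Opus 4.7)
The plan is a standard volume-packing argument, whose key ingredient is that the construction of $\mathcal N_\gamma$ actually forces a two-sided bound on the cell side length. First I would show that if $s$ denotes the side length of a cell in $\mathcal N_\gamma$, then $\gamma/(2\sqrt d) < s \le \gamma/\sqrt d$. The upper bound is immediate from the definition. For the lower bound, $\mathcal N_\gamma$ is by definition the \emph{coarsest} grid in the halving sequence whose side is at most $\gamma/\sqrt d$; since the preceding grid (of side $2s$) was rejected, we must have $2s > \gamma/\sqrt d$.

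Next, I would bound the number of grid corners in $B(p,2\gamma)$ by volume. Let $S := \mathcal N_\gamma \cap B(p,2\gamma)$. To each $q\in S$ assign the closed grid cell having $q$ as its lower-coordinate corner. These cells are pairwise interior-disjoint and each has volume $s^d$. Moreover, each such cell has diameter $s\sqrt d \le \gamma$, so it is contained in $B(p,2\gamma+s\sqrt d)\subseteq B(p,3\gamma)$. Hence
\[ |S|\cdot s^d \;\le\; \mathrm{Vol}(B(p,3\gamma)) \;=\; \frac{\pi^{d/2}(3\gamma)^d}{\Gamma(d/2+1)}. \]

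Finally, plugging in $s > \gamma/(2\sqrt d)$ gives
\[ |S| \;<\; \frac{\pi^{d/2}(3\gamma)^d(2\sqrt d)^d}{\Gamma(d/2+1)\,\gamma^d} \;=\; \frac{6^d\,\pi^{d/2}\,d^{d/2}}{\Gamma(d/2+1)}, \]
and by Stirling $\Gamma(d/2+1)=\Omega\bigl((d/(2e))^{d/2}\bigr)$, which cancels the $d^{d/2}$ factor in the numerator and leaves a bound of $O\bigl((6\sqrt{2\pi e})^d\bigr) = O(1)^d$.

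The only real obstacle is recognizing that the exponential-in-$d$ decay of the unit $d$-ball volume is exactly what cancels the $d^{d/2}$ blow-up incurred by packing tiny cells into a $\sqrt d$-times-larger bounding region; a naive axis-aligned bounding-cube argument alone yields only $(O(\sqrt d))^d$ grid points, which is \emph{not} $O(1)^d$ and so would fail to establish the claim. Hence it is essential to bound by the volume of the enclosing \emph{ball} rather than the enclosing cube.
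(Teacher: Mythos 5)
The paper does not actually prove \Cref{clm:gridball}; it is stated as a known fact with a pointer to \cite{har2012approximate}. Your volume-packing argument is the standard proof of that fact and is correct: the two-sided bound $\gamma/(2\sqrt d) < s \le \gamma/\sqrt d$ follows from $\mathcal N_\gamma$ being the coarsest grid in the halving hierarchy below the threshold, the cells assigned to corners in $B(p,2\gamma)$ are interior-disjoint and fit inside $B(p,3\gamma)$, and the $\Gamma(d/2+1)$ in the ball-volume denominator cancels the $d^{d/2}$ from $s^{-d}$, leaving $O(1)^d$. Your closing remark is also the right one to make: comparing against the enclosing \emph{ball} rather than the enclosing cube is exactly what saves the extra $d^{d/2}$ factor, and without it the bound would degrade to $O(\sqrt d)^d$, which would in turn inflate the hash-range parameter $m$ and the per-node hash cost $\log m$ in \Cref{lmm:sketch_size}.

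One small gap worth flagging: the lower bound $s>\gamma/(2\sqrt d)$ presupposes that the halving sequence was actually refined at least once, i.e.\ that the initial hypercube of side $2\Phi$ satisfies $2\Phi>\gamma/\sqrt d$. In the degenerate regime $\gamma\ge 2\Phi\sqrt d$, $\mathcal N_\gamma$ is just that single hypercube and there is no ``preceding grid'' to reject, so your two-sided bound does not apply; but then $\mathcal N_\gamma$ has only $2^d$ corners total, so the claim holds trivially. Adding this one-line case split would make the argument airtight.
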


\paragraph{Surrogates}
Fix a subtree $T'\in\mathcal F(T)$. With every node $v$ in $T'$ we will now associate a~\emph{surrogate} $s^*(v)\in\R^d$.
Define the following for every node $v$ in $T'$:
\[
  \gamma(v) = 
  \begin{cases}
  \left(5 + \lceil\frac{\Delta(v)}{2^{\ell(v)}}\rceil\right)^{-1}\cdot\epsilon & \text{if $v$ is a leaf in $T'$,}\\
  \left(5 + \lceil\frac{\Delta(v)}{2^{\ell(v)}}\rceil\right)^{-1} & \text{otherwise.}
  \end{cases}
\]
The surrogates are defined by induction on the ingresses.

Induction base: For the root $v$ of $T'$ we set $s^*(v) := \mathcal N_{2^{\ell(v)}}[x_{c(v)}]$.

Induction step: For a non-root $v$ we denote the quantized displacement of $c(v)$ from its ingress by $\eta(v)=\mathcal N_{\gamma(v)}\left[\frac{\gamma(v)}{2^{\ell(v)}}(x_{c(v)}-s^*(in(v)))\right]$, and set
$s^*(v) := s^*(in(v)) + \frac{2^{\ell(v)}}{\gamma(v)}\cdot\eta(v)$.

\begin{lemma}\label{lmm:surrogates}
For every node $v$, $\norm{x_{c(v)}-s^*(v)}\leq2^{\ell(v)}$.
Furthermore if $v$ is a leaf of a subtree in $\mathcal F(T)$, then $\norm{x_{c(v)}-s^*(v)}\leq2^{\ell(v)}\epsilon$.
\end{lemma}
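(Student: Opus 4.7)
The plan is to prove both inequalities by induction on the ingress dependency structure within a subtree $T'\in\mathcal F(T)$. Because for every non-root $v\in T'$ the surrogate $s^*(v)$ is defined in terms of $s^*(\mathrm{in}(v))$, the ingress map yields a well-founded induction order rooted at the top of $T'$, along which the argument proceeds.

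For the base case, $v$ is the root of $T'$ and $s^*(v)=\mathcal N_{2^{\ell(v)}}[x_{c(v)}]$. Cells in $\mathcal N_{2^{\ell(v)}}$ have diameter at most $2^{\ell(v)}$ by construction, so $\norm{x_{c(v)}-s^*(v)}\leq 2^{\ell(v)}$. The leaf inequality is vacuous at the root, since the root of $T'$ is not in general a leaf of $T'$.

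For the inductive step, let $v$ be a non-root node with $u=\mathrm{in}(v)$. Unwinding the definitions of $\eta(v)$ and $s^*(v)$, the error $x_{c(v)}-s^*(v)$ is the rescaled grid-quantization residue $\eta(v)-\tfrac{\gamma(v)}{2^{\ell(v)}}(x_{c(v)}-s^*(u))$, whose norm is at most $\gamma(v)$ by the cell-diameter property of $\mathcal N_{\gamma(v)}$. Combining this with the definition of $\gamma(v)$ then gives the general bound $\norm{x_{c(v)}-s^*(v)}\leq 2^{\ell(v)}$. For the leaf case, Lemma \ref{lmm:subtree_leaf} yields $\Delta(v)\leq 2^{\ell(v)}\epsilon$, so $\lceil\Delta(v)/2^{\ell(v)}\rceil=1$ and $\gamma(v)=\epsilon/6$; plugging this sharper $\gamma(v)$ into the quantization bound delivers the refined estimate $\norm{x_{c(v)}-s^*(v)}\leq 2^{\ell(v)}\epsilon$.

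A technical prerequisite for the quantization step is that the scaled displacement fed into $\mathcal N_{\gamma(v)}$ lies within the grid domain $[-\Phi,\Phi]^d$, so that a cell of the promised diameter contains it. I would verify this by combining the inductive estimate on $\norm{x_{c(u)}-s^*(u)}$ with a geometric bound on $\norm{x_{c(v)}-x_{c(u)}}$ derived from the ingress construction through the spanning tree $\tau(w)$ at a common ancestor $w$, together with the fact that within a subtree no long edges are traversed and consecutive levels differ by one. This range check and the careful tracking of the two roles of $\gamma(v)$, as both the grid scale and the rescaling normalization, are the main obstacle; the underlying quantization calculation itself is routine.
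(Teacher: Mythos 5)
Your overall structure is exactly what the paper's own (very terse) proof does: induction along the ingress order within each subtree, base case at the subtree root via the grid-net choice, and the "furthermore" clause attributed to the refined leaf value of $\gamma(v)$. You correctly identify the base case and the well-foundedness of the ingress order. So the approach is the right one.

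However, there is a genuine gap in the inductive step exactly where you flag "the two roles of $\gamma(v)$," and you do not resolve it. Unwinding the definitions as stated in the paper, with $z=\frac{\gamma(v)}{2^{\ell(v)}}\bigl(x_{c(v)}-s^*(\mathrm{in}(v))\bigr)$ and $\eta(v)=\mathcal N_{\gamma(v)}[z]$, one gets
\[
x_{c(v)}-s^*(v)=\frac{2^{\ell(v)}}{\gamma(v)}\bigl(z-\eta(v)\bigr),
\]
and the cell-diameter property gives $\norm{z-\eta(v)}\le\gamma(v)$. The rescaling factor $2^{\ell(v)}/\gamma(v)$ then exactly cancels the $\gamma(v)$ in the residue bound, yielding $\norm{x_{c(v)}-s^*(v)}\le 2^{\ell(v)}$ \emph{independently of $\gamma(v)$}. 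Consequently your step "plugging this sharper $\gamma(v)$ into the quantization bound delivers the refined estimate $\norm{x_{c(v)}-s^*(v)}\le 2^{\ell(v)}\epsilon$" does not follow: with the update rule as literally written, a smaller $\gamma(v)$ buys finer quantization but is paid back in full by the larger rescaling, and the net error is unchanged. You cannot get an $\epsilon$ to appear this way. (The $\gamma(v)$ scaling in the definition is doing work for the \emph{sketch size} bound, by keeping $z$ inside $O(1)^d$ grid cells so that $\eta(v)$ is storable in $O(d)$ bits; it is not shrinking the surrogate error.) For the "furthermore" clause to come out of the refined $\gamma(v)$ as the paper asserts, the induction step must produce an error of the form $2^{\ell(v)}\cdot\gamma(v)$ rather than $2^{\ell(v)}$ — that is what happens if the quantization grid scale and the rescaling normalization are decoupled (as in the update rule of the cited predecessor paper) — and your write-up needs to either derive that form or explicitly reconcile it with the formula given here; as written it asserts the conclusion without the arithmetic supporting it.

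Two smaller points: first, your sentence "Combining this with the definition of $\gamma(v)$ then gives the general bound" is misleading, since as noted the general bound $2^{\ell(v)}$ follows by pure cancellation and uses nothing about the value of $\gamma(v)$. Second, your appeal to Lemma~\ref{lmm:subtree_leaf} to conclude $\lceil\Delta(v)/2^{\ell(v)}\rceil=1$ is fine when $\Delta(v)>0$, but you should also note the degenerate case $\Delta(v)=0$ (a singleton leaf cluster), where the ceiling is $0$ and $\gamma(v)=\epsilon/5$; this does not affect the conclusion but should be mentioned for completeness. The range-check concern you raise at the end is legitimate and indeed needs the inductive surrogate bound together with the ingress/spanning-tree geometry, but it is secondary to the cancellation issue above.
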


\paragraph{Hash functions}
For every level $\ell$ in the tree, we pick a hash function $H_\ell:\mathcal N_{2^\ell}\rightarrow[m]$, from a universal family (\cite{carter1979universal}), where $m=O(1)^d\cdot\log(2\sqrt d\Phi)\cdot q/\delta$.
The $O(1)$ term is the same constant from~\Cref{clm:gridball} above.
For every subtree root $v$, we store its hashed surrogate $H_{\ell(v)}(\mathcal N_{2^{\ell(v)}}[x_{c(v)}])$.
We also store the description of each hash function $H_\ell$ for every level $\ell$.

\subsection{Sketch Size}
The sketch contains the tree $T$, with each node $v$ annotated by its center $c(v)$, ingress $\mathrm{in(u)}$, precision $\gamma(v)$ and quantized displacement $\eta(v)$ (if applicable). For subtree roots we store their hashed surrogate, and for long edges we store their length. We also store the hash functions $\{H_\ell\}$.
\begin{lemma}\label{lmm:sketch_size}
The total sketch size is
\[
  O \left( n\left((d+\log n)\log(1/\epsilon) + \log\log\Phi + \log\frac{q}{\delta}\right) + d\log\Phi \right)
  \;\; \text{bits.}
\]
\end{lemma}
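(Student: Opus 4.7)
The plan is to bound the size contributed by each component of the sketch separately, relying on \Cref{lmm:tree_size} to keep the per-node overhead multiplied by $N := O(n\log(1/\epsilon))$ rather than the uncompressed tree depth. The seven components to account for are: (a) the combinatorial structure of $T$; (b) the centers $c(v)$; (c) the ingresses $\mathrm{in}(v)$; (d) the precisions $\gamma(v)$; (e) the quantized displacements $\eta(v)$; (f) the hashed surrogates at subtree roots and the long-edge lengths; and (g) the descriptions of the hash functions $\{H_\ell\}$.

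The routine pieces are (a)--(d). The tree shape takes $O(N)=O(n\log(1/\epsilon))$ bits by a standard balanced-parentheses encoding. Each center is an index in $[n]$, costing $\log n$ bits per node for a total of $O(n\log n\cdot\log(1/\epsilon))$. Each ingress is an identifier of a node in the same subtree, which fits in $O(\log N)=O(\log n)$ bits, giving the same bound. For $\gamma(v)$, I would observe that the value $\lceil\Delta(v)/2^{\ell(v)}\rceil$ is bounded polynomially in $n$ (since the level-$\ell(v)$ cluster was assembled from at most $n$ subclusters at pairwise distance $\leq 2^{\ell(v)}$), so $O(\log n)$ bits per node suffice and this too is absorbed into the $n\log n\cdot\log(1/\epsilon)$ term.

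The main obstacle is (e): I need each $\eta(v)$ to fit in $O(d)$ bits, which is where \Cref{clm:gridball} and the design of the induction step enter. By construction, $\eta(v)=\mathcal N_{\gamma(v)}\bigl[\tfrac{\gamma(v)}{2^{\ell(v)}}(x_{c(v)}-s^*(\mathrm{in}(v)))\bigr]$. Using the inductive guarantee $\|x_{c(\mathrm{in}(v))}-s^*(\mathrm{in}(v))\|\leq 2^{\ell(\mathrm{in}(v))}$ from \Cref{lmm:surrogates}, together with the fact that the ingress was chosen so that $c(v)$ and $c(\mathrm{in}(v))$ both lie in the cluster corresponding to the parent of $v$ (whose diameter is at most $\Delta$ of the parent and whose level is close to $\ell(v)$), I would bound $\|x_{c(v)}-s^*(\mathrm{in}(v))\|$ by $O((\Delta(v)/2^{\ell(v)}+1)\cdot 2^{\ell(v)})$. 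After the scaling by $\gamma(v)/2^{\ell(v)}$, the point being quantized lies within a ball of radius $O(1)$---in fact within the $2\gamma(v)$-ball chosen so that \Cref{clm:gridball} applies---so $\eta(v)$ takes at most $O(1)^d$ distinct values and can be encoded in $O(d)$ bits. Summing over $N$ nodes gives $O(nd\log(1/\epsilon))$.

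For (f), the number of subtrees in $\mathcal F(T)$ is $O(n)$, since every long edge terminates at a distinct subtree root, and the number of long edges is at most the number of branching nodes plus leaves of $T$, which is $O(n)$. Each hashed surrogate is an element of $[m]$ with $\log m=O(d+\log\log\Phi+\log(q/\delta))$, and each long edge length is at most $O(\log(\sqrt d\Phi))=O(\log\log\Phi)$ bits; together these contribute $O(n(d+\log\log\Phi+\log(q/\delta)))$. For (g), I would use Carter--Wegman universal hashing seeded once: one $O(d\log\Phi+\log m)$-bit seed specifies a hash $\mathcal N_\Phi\to[m]$ on the common finest grid, and $H_\ell$ is obtained by applying this hash to the appropriate grid point at level $\ell$ (e.g.\ via canonical embedding of $\mathcal N_{2^\ell}$ into $\mathcal N_\Phi$). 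This gives $O(d\log\Phi)$ additive bits. Adding all seven contributions yields the claimed bound $O\bigl(n((d+\log n)\log(1/\epsilon)+\log\log\Phi+\log(q/\delta))+d\log\Phi\bigr)$.
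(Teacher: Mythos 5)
Your decomposition into components and your treatment of the genuinely new pieces --- the hashed surrogates (item (f)) and the hash-function seeds (item (g)) --- match the paper's proof closely, including the observation that $O(n)$ subtrees each contribute $\log m$ bits and that a single $O(d\log\Phi)$-bit seed suffices for all levels. The paper, however, does not re-derive items (a)--(e) plus the long-edge lengths at all: it simply cites~\cite{indyk2017near} for the $O(n((d+\log n)\log(1/\epsilon)+\log\log\Phi))$ contribution of the compressed tree and its annotations, and then invokes~\Cref{lmm:tree_size} to argue that top-out compression does not change that bound. So your proof takes a more self-contained route, and this is where it runs into trouble.

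The problem is in your argument for item (e). You bound $\|x_{c(v)}-s^*(\mathrm{in}(v))\|$ by $O((\Delta(v)/2^{\ell(v)}+1)\cdot 2^{\ell(v)})$, so after scaling by $\gamma(v)/2^{\ell(v)}$ the quantity fed to $\mathcal N_{\gamma(v)}[\cdot]$ has norm $O\!\left(\gamma(v)\cdot(\Delta(v)/2^{\ell(v)}+1)\right)=O(1)$, since $\gamma(v)=\Theta\!\left((1+\Delta(v)/2^{\ell(v)})^{-1}\right)$. You then assert this places $\eta(v)$ ``within the $2\gamma(v)$-ball chosen so that \Cref{clm:gridball} applies.'' But a ball of radius $O(1)$ and a ball of radius $2\gamma(v)$ are very different objects when $\Delta(v)\gg 2^{\ell(v)}$, because then $\gamma(v)\ll 1$. \Cref{clm:gridball} only gives $O(1)^d$ grid points for a ball of radius $O(\gamma(v))$; for a ball of radius $O(1)$ the count is $\bigl(\Theta(1)/\gamma(v)\bigr)^d$, i.e.\ $O\!\left(d\log(1+\Delta(v)/2^{\ell(v)})\right)$ bits, which can be as large as $\Theta(d\log n)$ per node. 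Your derivation therefore does not yield the $O(d)$ bits per node you need, and the accounting that would make the total come out to $O(nd\log(1/\epsilon))$ is exactly the nontrivial content delegated to~\cite{indyk2017near} in the paper's proof; it cannot be replaced by the two sentences you give. (There is also a small typo in item (f): a long-edge length is an integer at most $O(\log(\sqrt d\Phi))$, hence takes $O(\log\log(\sqrt d\Phi))$ bits, not ``$O(\log(\sqrt d\Phi))=O(\log\log\Phi)$'' bits; the intended conclusion is fine since $\log\log d$ is absorbed by the $n\log n\log(1/\epsilon)$ term using $d\le n$.)
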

%

As a preprocessing step, Alice can reduce the dimension of her points to $O(\epsilon^{-2}\log(qn/\delta))$ by a Johnson-Lindenstrauss projection.
She then augments the sketch with the projection, in order for Bob to be able to project his points as well.
By~\cite{kane2011almost}, the projection can be stored with $O(\log d + \log(q/\delta)\cdot\log\log((q/\delta)/\epsilon))$ bits.
This yields the sketch size stated in Theorem~\ref{thm:ann_ub}.

\paragraph{Remark} Both the hash functions and the projection map can be sampled using public randomness.
If one is only interested in the communication complexity, one can use the general reduction from public to private randomness due to~\cite{newman1991private}, which replaces the public coins by augmenting $O(\log(nd\Phi))$ bits to the sketch (since Alice's input has size $O(nd\Phi)$ bits).
The bound in~\Cref{thm:ann_ub} then improves to $O\left( n\left(\frac{\log n\cdot\log(1/\epsilon)}{\epsilon^2} + \log\log\Phi + \log\left(\frac{q}{\delta}\right)\right) + \log\Phi \right)$ bits, and the bound in~\Cref{thm:distances_ub} improves to $O\left(\frac{n}{\epsilon^2}\left(\log n\cdot\log(1/\epsilon) + \log(d\Phi)\log\left(\frac{q}{\delta}\right)\right) \right)$ bits.
However, that reduction is non-constructive; we state our bounds so as to describe explicit sketches.

\section{Approximate Nearest Neighbor Search}\label{sec:ann}

We now describe our approximate nearest neighbor search query procedure, and prove~\Cref{thm:ann_ub}.
Suppose Bob wants to report a $(1+\epsilon)$-approximate nearest neighbor in $X$ for a point $y\in Y$.

\paragraph{Algorithm Report Nearest Neighbor:}
\begin{enumerate}
  \item Start at the subtree $T'\in\mathcal F(T)$ that contains the root of $T$.
  \item Recover all surrogates $\{s^*(v):v\in T'\}$, by the subroutine below.
  \item Let $v$ be the leaf of $T'$ that minimizes $\norm{y-s^*(v)}$.
  \item If $v$ is the head of a long edge, recurse on the subtree under that long edge. Otherwise $v$ is a leaf in $T$, and in that case return $c(v)$.
\end{enumerate}

\paragraph{Subroutine Recover Surrogates:}
This is a subroutine that attempts to recover all surrogates $\{s^*(v):v\in T'\}$ in a given subtree $T'\in\mathcal F(T)$, using both Alice's sketch and Bob's point $y$.

Observe that to this end, the only information missing from the sketch is the root surrogate $s^*(r)$, which served as the induction base for defining the rest of the surrogates. The induction steps are fully defined by $\ell(v)$, $\mathrm{in}(v)$, $\gamma(v)$, and $\eta(v)$, which are stored in the sketch for every node $v\neq r$ in the subtree.
The missing root surrogate was defined as $s^*(r)=\mathcal N_{2^{\ell(r)}}[x_{c(r)}]$.
Instead, the sketch stores its hashed value $H_{\ell(r)}(\mathcal N_{2^{\ell(r)}}[x_{c(r)}])$ and the hash function $H_{\ell(r)}$.\footnote{Note that fully storing the root surrogates is prohibitive: $\mathcal N_{2^{\ell(r)}}$ has $\Theta(2\sqrt{d}\Phi/2^{\ell(r)})^d$ cells, hence storing a cell ID takes $\Omega(d\log d)$ bits, and since there can be $\Omega(n)$ subtree roots, this would bring the total sketch size to $\Omega(nd\log d)$.}

The subroutine attempts to reverse the hash.
It enumerates over all points $p\in \mathcal N_{2^{\ell(r)}}$ such that $\norm{p-y}\leq2\cdot2^{\ell(r)}$.
For each $p$ it computes $H_{\ell(r)}(p)$.
If $H_{\ell(r)}(x_{c(r)})=H_{\ell(r)}(p)$ then it sets $s^*(r)=p$ and recovers all surrogates accordingly.
If either no $p$, or more than one $p$, satisfy $H_{\ell(r)}(x_{c(r)})=H_{\ell(r)}(p)$, then it proceeds with $s^*(r)$ set to an arbitrary point (say, the origin in $\R^d$).

\paragraph{Analysis.} Let $r_0,r_1,\ldots$ be the roots of the subtrees traversed on the algorithm.
Note that they reside on a downward path in $T$.

\begin{claim}
$\norm{x_{c(r_0)}-y} \leq 2^{\ell(r_0)}$.
\end{claim}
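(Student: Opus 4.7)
The plan is to prove the claim by a direct diameter bound on the ambient cube. First, I would identify $r_0$ with the root of $T$ itself. The algorithm begins at the subtree $T'\in\mathcal F(T)$ that contains the root of $T$, and since that node has no parent it cannot be severed from any ancestor by a long-edge cut; hence it is the root of its own subtree, so $r_0$ equals the root of $T$. Furthermore, the top-out compression rule only removes the interior nodes $u_1,\ldots,u_{k-\Lambda(v)-1}$ of a $1$-path and leaves the topmost node $u_0$ in place, so the original root of $T^*$ survives intact in $T$.

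Second, by the tree-construction recipe the hierarchy is extended up to level $\lceil\log(2\sqrt d\Phi)\rceil$, and the excerpt explicitly states that after compression $\ell(v)$ continues to denote the pre-compression level. Thus $\ell(r_0)=\lceil\log(2\sqrt d\Phi)\rceil$, giving $2^{\ell(r_0)}\geq 2\sqrt d\Phi$.

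Third, both $y\in Y$ and $x_{c(r_0)}\in X$ lie in $\{-\Phi,\ldots,\Phi\}^d$ by the problem setup in Section~\ref{s:formal}, so
\[ \norm{x_{c(r_0)}-y}\;\leq\;\sqrt d\cdot 2\Phi \;=\; 2\sqrt d\Phi \;\leq\; 2^{\ell(r_0)}, \]
which is exactly the claim. I do not anticipate any real obstacle: the only step that deserves a moment of care is identifying $r_0$ with the topmost node in $T$, which follows directly from how $\mathcal F(T)$ is formed by cutting long edges and from the fact that top-out compression spares $u_0$.
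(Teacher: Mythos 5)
Your proof is correct and takes essentially the same route as the paper: both bound $\norm{x_{c(r_0)}-y}$ by the diameter $2\sqrt d\Phi$ of the bounding hypercube and compare it to $2^{\ell(r_0)}=2^{\lceil\log(2\sqrt d\Phi)\rceil}$. The extra care you take in verifying that $r_0$ is indeed the original root of $T$ and that its level is $\lceil\log(2\sqrt d\Phi)\rceil$ is sound but just spells out what the paper treats as immediate.
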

\begin{proof}
Since $X\cup Y\subset\{-\Phi \ldots \Phi\}^d$, we have $\norm{x_{c(r_0)}-y} \leq 2\sqrt{d}\Phi\leq2^{\lceil\log(2\sqrt d\Phi)\rceil}=2^{\ell(r_0)}$.
\end{proof}

Let $t$ be the smallest such that $r_t$ satisfies $\norm{x_{c(r_t)}-y}>2^{\ell(r_t)}$.
(The algorithm does not identify $t$, but we will use it for the analysis.)

\begin{lemma}\label{lmm:hashes}
With probability $1-\delta/q$, for every $i=0,\ldots,t-1$ simultaneously,
the subroutine recovers $s^*(r_i)$ correctly as $\mathcal N_{2^{\ell(r)}}[x_{c(r)}]$.
(Consequently, all surrogates in the subtree rooted by $r_i$ are also recovered correctly.)
\end{lemma}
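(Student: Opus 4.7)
The plan is to fix one index $i < t$, show that the subroutine recovers $s^*(r_i)$ correctly with probability at least $1-\delta/\bigl(q\log(2\sqrt{d}\Phi)\bigr)$, and then union-bound over the subtree roots $r_0,\ldots,r_{t-1}$ on a single downward path in $T$, of which there are only $O(\log(2\sqrt d\Phi))$.

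The first step is to verify that the true surrogate $s^*(r_i)=\mathcal N_{2^{\ell(r_i)}}[x_{c(r_i)}]$ is actually one of the grid points $p$ that the subroutine enumerates. By the definition of $\mathcal N_{2^{\ell(r_i)}}$, the cell of that grid which contains $x_{c(r_i)}$ has diameter at most $2^{\ell(r_i)}$, so $\norm{s^*(r_i)-x_{c(r_i)}}\leq 2^{\ell(r_i)}$. Because $i<t$, the defining property of $t$ gives $\norm{x_{c(r_i)}-y}\leq 2^{\ell(r_i)}$, and the triangle inequality yields $\norm{s^*(r_i)-y}\leq 2\cdot 2^{\ell(r_i)}$, so $s^*(r_i)$ is indeed enumerated.

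The second step is to show that no other enumerated grid point collides with $s^*(r_i)$ under $H_{\ell(r_i)}$. By~\Cref{clm:gridball}, the enumeration contains at most $C^d$ points for a universal constant $C>0$. Universality of $H_{\ell(r_i)}$ ensures that for each enumerated point $p\neq s^*(r_i)$, the collision probability $\Pr[H_{\ell(r_i)}(p)=H_{\ell(r_i)}(s^*(r_i))]$ is at most $1/m$. A union bound over the at most $C^d$ candidates gives a failure probability at most $C^d/m$, which by the choice $m=O(1)^d\cdot\log(2\sqrt d\Phi)\cdot q/\delta$ (with the constant in $O(1)^d$ chosen to dominate $C^d$) is at most $\delta/\bigl(q\log(2\sqrt d\Phi)\bigr)$. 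Once $s^*(r_i)$ is the unique enumerated preimage, the subroutine sets it correctly, and the inductive definition of the surrogates then recovers every other surrogate in the subtree deterministically from the sketched fields $\mathrm{in}(v)$, $\gamma(v)$, $\eta(v)$.

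The final step is the union bound over $i$. Each jump from $r_i$ to $r_{i+1}$ traverses a long edge and then enters a new subtree whose root sits strictly lower in the uncompressed tree, so $\ell(r_{i+1})<\ell(r_i)$; since $\ell(r_i)\in\{0,1,\ldots,\lceil\log(2\sqrt d\Phi)\rceil\}$, the path has length $O(\log(2\sqrt d\Phi))$. Summing the per-level failure probability over this path yields total failure probability at most $\delta/q$, as claimed. The only step that requires any care is calibrating the universal-family constant against the $C^d$ grid-ball factor in $m$; everything else is a direct combination of~\Cref{clm:gridball}, the universality of $H_\ell$, and the surrogate recovery recurrence.
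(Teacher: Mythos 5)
Your proof is correct and follows essentially the same route as the paper's: show the true root surrogate is one of the $O(1)^d$ enumerated candidates (via the triangle inequality combining the grid quantization error with the defining property of $t$), bound the collision probability using universality and the choice of $m$, then union-bound over the at most $O(\log(2\sqrt d\Phi))$ subtree roots along the downward path. The only cosmetic difference is that you spell out the $C^d/m$ collision calculation explicitly and invoke the grid definition directly rather than citing Lemma~\ref{lmm:surrogates} for the bound $\norm{s^*(r_i)-x_{c(r_i)}}\leq 2^{\ell(r_i)}$; both are valid and amount to the same thing.
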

\begin{proof}
Fix a subtree $T'\in\mathcal F(T)$ rooted in $r$, that satisfies $\norm{y-x_{c(r)}}\leq2^{\ell(r)}$.
Since $\norm{x_{c(r)}-s^*(r)}\leq2^{\ell(r)}$ (by Lemma~\ref{lmm:surrogates}), we have $\norm{y-s^*(r)}\leq2\cdot2^{\ell(r)}$.
Hence the surrogate recovery subroutine tries $s^*(r)$ as one of the hash pre-image candidates, and will identify that $H_{\ell(r)}(s^*(r))$ matches the hash stored in the sketch.
Furthermore, by~\Cref{clm:gridball}, the number of candidates is at most $O(1)^d$.
Since the range of $H_{\ell(r)}$ has size $m=O(1)^d\cdot\log(2\sqrt d\Phi)\cdot q/\delta$, then with probability $1-\delta/(q\log(2\sqrt d\Phi))$ there are no collisions, and $s^*(r)$ is recovered correctly.
The lemma follows by taking a union bound over the first $t$ subtrees traversed by the algorithm, i.e.~those rooted by $r_i$ for $i=0,1,\ldots,t-1$. Noting that $t$ is upper-bounded by the number of levels in the tree, $\log(2\sqrt d\Phi)$, we get that all the $s^*(r_i)$'s are recovered correctly simultaneously with probability $1-\delta/q$.
\end{proof}

From now on we assume that the event in Lemma~\ref{lmm:hashes} succeeds, meaning in steps $0,1,\ldots,t-1$, the algorithm recovers all surrogates correctly. We henceforth prove that under this event, the algorithm returns a $(1+\epsilon)$-approximate nearest neighbor of $y$.
In what follows, let $x^*\in X$ be a fixed true nearest neighbor of $y$ in $X$.

\begin{lemma}\label{lmm:annrounds}
Let $T'\in\mathcal F(T)$ be a subtree rooted in $r$, such that $x^*\in C(r)$.
Let $v$ a leaf of $T'$ that minimizes $\norm{y-s^*(v)}$.
Then either $x^*\in C(v)$,
or every $z\in C(v)$ is a $(1+O(\epsilon))$-approximate nearest neighbor of $y$.
\end{lemma}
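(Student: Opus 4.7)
Let $v^{*}$ be the unique leaf of $T'$ with $x^{*}\in C(v^{*})$. If $v=v^{*}$ there is nothing to prove, so assume $v\neq v^{*}$, and write $\rho:=\norm{y-x^{*}}$ and $L:=\max\!\bigl(2^{\ell(v)},2^{\ell(v^{*})}\bigr)$. My plan is to (i) use the hierarchical separation property to lower-bound the distance between $C(v)$ and $C(v^{*})$ in terms of $L$, (ii) use the fact that $v$ minimizes $\norm{y-s^{*}(\cdot)}$ over leaves of $T'$ in order to upper-bound this same distance in terms of $\rho$ and $L\epsilon$, and (iii) combine the two to conclude that $L=O(\rho)$, at which point a final triangle inequality yields the claimed $(1+O(\epsilon))$ approximation.

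\textbf{Step 1 (separation).} Let $w$ be the least common ancestor of $v$ and $v^{*}$ in $T'$. Since $v\neq v^{*}$, the node $w$ has at least two children in $T'$, and these correspond to distinct clusters at level $\ell(w)-1$ in the uncompressed hierarchical clustering (recall that $T'$ contains no long edges, so edges inside $T'$ drop the uncompressed level by exactly one). Consequently $C(v)$ and $C(v^{*})$ lie in different clusters at level $\ell(w)-1$, and by \Cref{lmm:separation},
\[
\norm{x_{c(v)}-x^{*}} \;\ge\; 2^{\ell(w)-1} \;\ge\; L,
\]
using $\ell(v),\ell(v^{*})\le \ell(w)-1$.

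\textbf{Step 2 (surrogate approximation).} Since $v$ and $v^{*}$ are leaves of subtrees in $\mathcal F(T)$, \Cref{lmm:surrogates} gives $\norm{x_{c(v)}-s^{*}(v)}\le 2^{\ell(v)}\epsilon\le L\epsilon$ and similarly for $v^{*}$, while \Cref{lmm:subtree_leaf} says that each of $C(v)$ and $C(v^{*})$ has diameter at most $L\epsilon$. In particular, $\norm{x^{*}-s^{*}(v^{*})}\le 2L\epsilon$, and for every $z\in C(v)$, $\norm{z-s^{*}(v)}\le 2L\epsilon$.

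\textbf{Step 3 (exploiting the choice of $v$).} By the optimality of $v$ in step 3 of the algorithm, $\norm{y-s^{*}(v)}\le\norm{y-s^{*}(v^{*})}\le \rho+2L\epsilon$. Combining with step 2,
\[
\norm{x_{c(v)}-x^{*}} \;\le\; \norm{x_{c(v)}-s^{*}(v)}+\norm{s^{*}(v)-y}+\norm{y-x^{*}} \;\le\; L\epsilon+(\rho+2L\epsilon)+\rho \;=\; 2\rho+3L\epsilon.
\]
Together with the lower bound from step 1, $L\le 2\rho+3L\epsilon$, so for sufficiently small $\epsilon$ we obtain $L=O(\rho)$.

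\textbf{Step 4 (conclusion).} For any $z\in C(v)$,
\[
\norm{y-z} \;\le\; \norm{y-s^{*}(v)}+\norm{s^{*}(v)-z} \;\le\; (\rho+2L\epsilon)+2L\epsilon \;=\; \rho+4L\epsilon \;=\; (1+O(\epsilon))\rho,
\]
which is exactly the $(1+O(\epsilon))$-approximate nearest neighbor guarantee. The main obstacle is step 1: one must carefully invoke the fact that the cut happens inside the uncompressed tree at level $\ell(w)-1$, since long edges have been removed only across subtrees, so separation still holds at that level; once this is in hand, the rest is a sequence of triangle inequalities with the surrogate quality bounds from the previous section.
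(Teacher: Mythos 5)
Your proof is correct and takes essentially the same approach as the paper's: both hinge on the minimality of $v$ among leaf surrogates, the separation bound $\norm{x_{c(v)}-x^*}\geq L$ from \Cref{lmm:separation}, and the surrogate/diameter bounds from \Cref{lmm:surrogates,lmm:subtree_leaf}. The only cosmetic difference is that the paper establishes $\norm{y-x^*}>\frac14 L$ by contradiction before the final triangle-inequality chain, whereas you derive $L\leq 2\rho+3L\epsilon$ directly and solve for $L$; both yield $L=O(\rho)$ for small $\epsilon$ and then conclude identically.
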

\begin{proof}
Suppose w.l.o.g.~by scaling that $\epsilon<1/6$.
If $x^*\in C(v)$ then we are done. Assume now that $x^*\in C(u)$ for a leaf $u\neq v$ of $T'$.
Let $\ell:=\max\{\ell(v),\ell(u)\}$. We start by showing that $\norm{y-x^*}>\frac{1}{4}\cdot 2^\ell$. Assume by contradiction this is not the case. Since $u$ is a subtree leaf and $x^*\in C(u)$, we have $\norm{x^*-x_{c(u)}}\leq2^{\ell}\epsilon$ by Lemma~\ref{lmm:subtree_leaf}.
We also have $\norm{x_{c(u)}-s^*(u)}\leq2^{\ell}\epsilon$ by Lemma~\ref{lmm:surrogates}. Together, $\norm{y-s^*(u)}\leq(\frac{1}{4}+2\epsilon)2^\ell$. On the other hand, by the triangle inequality,
$\norm{y-s^*(v)} \geq \norm{x^*-x_{c(v)}} - \norm{y-x^*} - \norm{x_{c(v)}-s^*(v)}$.
Noting that $\norm{x^*-x_{c(v)}}\geq2^\ell$ (by Lemma~\ref{lmm:separation}, since $x^*$ and $x_{c(v)}$ are separated at level $\ell$), $\norm{y-x^*}\leq\frac{1}{4}\cdot 2^\ell$ (by the contradiction hypothesis) and $\norm{x_{c(v)}-s^*(v)}\leq2^\ell\epsilon$ (by Lemma~\ref{lmm:surrogates}), we get $\norm{y-s^*(v)}\geq(\frac{3}{4}-\epsilon)2^\ell>(\frac{1}{4}+2\epsilon)2^\ell\geq\norm{y-s^*(u)}$. This contradicts the choice of $v$.

The lemma now follows because for every $z\in C(v)$,
\begin{align}
\norm{y-z} &\leq \norm{y-s^*(v)} + \norm{s^*(v)-x_{c(v)}} + \norm{x_{c(v)}-z} \label{ineq1} \\
&\leq \norm{y-s^*(u)} + \norm{s^*(v)-x_{c(v)}} + \norm{x_{c(v)}-z} \label{ineq2} \\
&\leq \norm{y-x^*} + \norm{x^*-x_{c(u)}} + \norm{x_{c(u)}-s^*(u)} +\norm{s^*(v)-x_{c(v)}} + \norm{x_{c(v)}-z} \label{ineq3} \\
&\leq \norm{y-x^*} + 4\cdot2^\ell\epsilon \label{ineq4} \\
&\leq (1+16\epsilon)\norm{y-x^*}, \label{ineq5}
\end{align}
where~(\ref{ineq1}) and (\ref{ineq3}) are by the triangle inequality, (\ref{ineq2}) is since $\norm{y-s^*(v)}\leq\norm{y-s^*(u)}$ by choice of $v$, (\ref{ineq4}) is by Lemmas~\ref{lmm:subtree_leaf} and~\ref{lmm:surrogates}, and~(\ref{ineq5}) is since we have shown that $\norm{y-x^*}>\frac{1}{4}\cdot 2^\ell$.
Therefore $z$ is a $(1+16\epsilon)$-approximate nearest neighbor of $y$.
\end{proof}

\paragraph{Proof of~\Cref{thm:ann_ub}.}
We may assume w.l.o.g.~that $\epsilon$ is smaller than a sufficiently small constant.
Suppose that the event in Lemma~\ref{lmm:hashes} holds, hence all surrogates in the subtrees rooted by $r_0,r_1,\ldots,r_{t-1}$ are recovered correctly.
We consider two cases. In the first case, $x^*\notin C(r_t)$.
Let $i\in\{1,\ldots,t\}$ be the smallest such that $x^*\notin C(r_i)$.
By applying Lemma~\ref{lmm:annrounds} on $r_{i-1}$, we have that every point in $C(r_i)$ is a $(1+O(\epsilon))$-approximate nearest neighbor of $y$. After reaching $r_i$, the algorithm would return the center of some leaf reachable from $r_i$, and it would be a correct output.

In the second case, $x^*\in C(r_t)$. We will show that every point in $C(r_t)$ is a $(1+O(\epsilon))$-approximate nearest neighbor of $y$, so once again, once the algorithm arrives at $r_t$ it can return anything.
By Lemma~\ref{lmm:subtree_root}, every $x\in C(r_t)$ satisfies
\begin{equation}\label{eq:ann_endgame}
\norm{x-x^*} \leq  2^{\ell(r_t)}\epsilon .
\end{equation}
In particular, $\norm{x_{c(r_t)}-x^*} \leq 2^{\ell(r_t)}\epsilon$.
By definition of $t$ we have $\norm{x_{c(r_t)}-y}>2^{\ell(r_t)}$.
Combining the two yields $\norm{y-x^*} \geq \norm{y-x_{c(r_t)}} - \norm{x_{c(r_t)}-x^*} > (1-\epsilon)2^{\ell(r_t)}$. 
Combining this with~\cref{eq:ann_endgame}, we find that every $x\in C(r_t)$ satisfies $\norm{x-x^*}\leq\frac{\epsilon}{1-\epsilon}\norm{y-x^*}$, and hence $\norm{y-x}\leq(1+2\epsilon)\norm{y-x^*}$ (for $\epsilon\leq1/2$). Hence $x$ is a $(1+2\epsilon)$-nearest neighbor of $y$.

The proof assumes the event in Lemma~\ref{lmm:hashes}, which occurs with probability $1-\delta/q$.
By a union bound, the simultaneous success probability of the $q$ query points of Bob is $1-\delta$ as required. \qed

%

\section{Distance Estimation}\label{sec:dist}
We now prove~\cref{thm:distances_ub}.
To this end, we augment the basic sketch from Section~\ref{sec:sketch} with additional information, relying on the following distance sketches due to~\cite{achlioptas2001database} (following~\cite{johnson1984extensions}) and~\cite{kushilevitz2000efficient}.

\begin{lemma}[\cite{achlioptas2001database}]\label{lmm:binaryjl}
Let $\epsilon,\delta'>0$.
Let $d'=c\epsilon^{-2}\log(1/\delta')$ for a sufficiently large constant $c>0$.
Let $M$ be a random $d'\times d$ matrix in which every entry is chosen independently uniformly at random from $\{-1/\sqrt{d'},1/\sqrt{d'}\}$.
Then for every $x,y\in\R^d$, with probability $1-\delta'$, $\norm{Mx-My}=(1\pm\epsilon)\norm{x-y}$.
\end{lemma}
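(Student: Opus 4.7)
By rescaling and translation-invariance (since $M$ is linear), it suffices to prove the statement for a single unit vector: given a fixed $u\in\R^d$ with $\norm{u}=1$, I will show that $\Pr[\,|\norm{Mu}^2-1|>\epsilon\,]\leq\delta'$, and then apply this to $u=(x-y)/\norm{x-y}$. I plan to organize the argument as a standard second-moment-plus-Chernoff computation tailored to the Rademacher distribution of the entries of $M$.

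First, write $Z_i := \sqrt{d'}\,(Mu)_i = \sum_{j=1}^d \sigma_{ij}u_j$, where the $\sigma_{ij}\in\{-1,1\}$ are independent uniform signs. A direct calculation gives $\E[Z_i^2]=\sum_j u_j^2 = 1$, so $\E\bigl[\norm{Mu}^2\bigr] = \frac{1}{d'}\sum_{i=1}^{d'}\E[Z_i^2] = 1$, which correctly centers the estimator. The variables $Z_1,\dots,Z_{d'}$ are independent, and each is a Rademacher sum with coefficient-vector of $\ell_2$ norm one, so Hoeffding's lemma yields the sub-Gaussian MGF bound $\E[e^{tZ_i}]\leq e^{t^2/2}$ for every $t\in\R$.

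Next, I will transfer this sub-Gaussian control into concentration for the squares. The cleanest route is to expand $\E[e^{\lambda(Z_i^2-1)}]$ by integrating the Gaussian-like tail: for $0\leq\lambda<1/4$ one obtains $\E[e^{\lambda Z_i^2}]\leq(1-2\lambda)^{-1/2}$, and hence $\E[e^{\lambda(Z_i^2-1)}]\leq e^{C\lambda^2}$ for an absolute constant $C$ and all $|\lambda|$ below a fixed threshold. By independence of the $Z_i$, the MGF of $S:=\sum_i(Z_i^2-1)=d'(\norm{Mu}^2-1)$ satisfies $\E[e^{\lambda S}]\leq e^{Cd'\lambda^2}$, and an optimized Markov bound gives the two-sided Chernoff estimate $\Pr[|S|>\epsilon d']\leq 2\exp(-c'\epsilon^2 d')$ for an absolute constant $c'>0$ (valid while $\epsilon$ is below a fixed constant, which we may assume by rescaling).

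Finally, choosing the constant $c$ in $d'=c\epsilon^{-2}\log(1/\delta')$ large enough that $c'\epsilon^2 d'\geq\log(2/\delta')$ makes the failure probability at most $\delta'$. The one step that requires genuine care is the MGF bound on $Z_i^2$: Hoeffding alone controls linear combinations of Rademachers, but squaring introduces heavier tails, so I must verify the sub-exponential estimate $\E[e^{\lambda(Z_i^2-1)}]\leq e^{C\lambda^2}$ by either (i) comparison with the Gaussian MGF $(1-2\lambda)^{-1/2}$ using the sub-Gaussian tail via the layer-cake integral, or (ii) an explicit moment computation exploiting that $Z_i^2=\sum_{j,k}\sigma_{ij}\sigma_{ik}u_ju_k$ has a convenient Hanson--Wright-style decoupling. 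Everything else in the proof is a routine Chernoff calculation.
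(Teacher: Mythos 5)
The paper does not prove Lemma~\ref{lmm:binaryjl}; it is quoted as a black box from~\cite{achlioptas2001database}, so there is no in-paper argument to compare against. Your plan is a correct and standard route, though it differs mildly from Achlioptas's original: he dominates the even moments of each coordinate $(Mu)_i$ directly by the corresponding Gaussian moments (which yields MGF domination term by term), whereas you go through Hoeffding's sub-Gaussian MGF bound and then upgrade to sub-exponential control of the squares. Both are fine; your version is more modular (it only uses isotropy plus sub-Gaussianity of the rows), while Achlioptas's moment-comparison is what lets him handle the sparse $\{-1,0,+1\}$ variant with the same machinery and sharper constants.

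One place needs a patch. The Gaussian-comparison identity you sketch, $\E[e^{\lambda Z_i^2}]\leq(1-2\lambda)^{-1/2}$, is valid only for $0\leq\lambda<1/2$ and therefore only controls the upper tail of $\norm{Mu}^2$. For the lower tail you need an MGF bound for $\lambda<0$, and that does not come from the layer-cake/Gaussian-linearization trick (that identity requires $\lambda\geq0$). The cleanest fix in your framework is to observe that for $\lambda>0$, $e^{-\lambda t}\leq 1-\lambda t+\tfrac{1}{2}\lambda^2 t^2$ for $t\geq 0$, so $\E[e^{-\lambda Z_i^2}]\leq 1-\lambda+\tfrac{1}{2}\lambda^2\E[Z_i^4]\leq e^{-\lambda+\tfrac{3}{2}\lambda^2}$, using the standard bound $\E[Z_i^4]=3\bigl(\sum_j u_j^2\bigr)^2-2\sum_j u_j^4\leq 3$ for a Rademacher sum with unit coefficient vector. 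This gives $\E[e^{-\lambda(Z_i^2-1)}]\leq e^{\tfrac{3}{2}\lambda^2}$, and with both one-sided MGF bounds in hand your Chernoff step goes through exactly as written. Alternatively, note $Z_i^2-1\geq -1$ and invoke a Bernstein bound for sub-exponential variables that are bounded below. Either way the plan is sound once this gap is closed.
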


\begin{lemma}[\cite{kushilevitz2000efficient}]\label{lmm:kor}
Let $R>0$ be fixed and let $\epsilon,\delta'>0$. There is a randomized map $\mathrm{sk}_R$ of vectors in $\R^d$ into $O(\epsilon^{-2}\log(1/\delta'))$ bits, with the following guarantee.
For every $x,y\in\R^d$, given $\mathrm{sk}_R(x)$ and $\mathrm{sk}_R(y)$, one can output the following with probability $1-\delta'$:
\begin{itemize}
  \item If $R\leq\norm{x-y}\leq 2R$, output a $(1+\epsilon)$-estimate of $\norm{x-y}$.
  \item If $\norm{x-y}\leq (1-\epsilon)R$, output ``Small''.
  \item If $\norm{x-y}\geq (1+\epsilon)R$, output ``Large''.  
\end{itemize} 
\end{lemma}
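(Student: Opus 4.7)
The plan is to construct a one-bit-per-coordinate sketch whose Hamming distance encodes $\|x-y\|/R$, following the $p$-stable LSH paradigm. Set $k = C\epsilon^{-2}\log(1/\delta')$ for a sufficiently large constant $C$. For each $i \in [k]$, sample independently a Gaussian vector $g_i \sim N(0, I_d)$ and a uniform shift $t_i \in [0,1]$, and define
\[
 \mathrm{sk}_R(x)_i \;=\; \bigl\lfloor \langle g_i, x\rangle/(cR) + t_i \bigr\rfloor \bmod 2 ,
\]
for an absolute constant $c$ to be fixed below. The output length is $k = O(\epsilon^{-2}\log(1/\delta'))$ bits, matching the required bound.

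Next I would analyze the per-coordinate disagreement probability $p(s) := \Pr[\mathrm{sk}_R(x)_i \neq \mathrm{sk}_R(y)_i]$ as a function of $s = \|x-y\|$. By rotational invariance of the Gaussian, $Z := \langle g_i, x-y\rangle$ is distributed as $N(0, s^2)$; conditioning on its value, the uniform shift $t_i$ makes the two bits differ with probability $(|Z|/(cR)) \bmod 1$ when $|Z| < cR$, and with an oscillating value averaging to $1/2$ for larger $|Z|$. Integrating against the $N(0,s^2)$ density yields a smooth function $p(s)$ that increases from $0$ at $s=0$ toward the saturation value $1/2$ as $s \to \infty$. By choosing $c$ so the steep portion of $p$ is centered near $s = R$, one obtains $p$ monotone on $[R/2, 4R]$ and bi-Lipschitz there with slope $\Theta(1/R)$, taking values in some interval $[a,b]$ with $0 < a < b < 1/2$.

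Given this analytical profile, I would apply a multiplicative Chernoff bound to the independent indicators $X_i = \mathbf{1}\{\mathrm{sk}_R(x)_i \neq \mathrm{sk}_R(y)_i\} \sim \mathrm{Bernoulli}(p(s))$. With $k$ samples the empirical mean $\hat p := k^{-1}\sum_i X_i$ lies within additive error $\epsilon/C'$ of $p(s)$ with probability $\geq 1-\delta'$, for an appropriate constant $C'$. Comparing $\hat p$ to the thresholds $p((1-\epsilon)R)$ and $p((1+\epsilon)R)$ yields the \emph{Small}/\emph{Large} verdicts, while when $\|x-y\| \in [R,2R]$, inverting $p$ at $\hat p$ returns a $(1\pm\epsilon)$-estimate of $\|x-y\|$: by bi-Lipschitzness, an $O(\epsilon)$ additive error in $\hat p$ translates into $O(\epsilon R) = O(\epsilon\|x-y\|)$ error in the reconstructed distance.

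The main obstacle is the Gaussian calculation in the second step: proving that $p$ is monotone and bi-Lipschitz with the correct constants on an interval of the form $[R/2, 4R]$ requires careful control of $\frac{d}{ds} p(s)$, balancing the near-origin linear regime against the oscillatory cancellations over multiple periods of $cR$. Two secondary technicalities remain. First, the construction uses real-valued Gaussians and real shifts, whereas the lemma asserts a pure bit sketch; this is resolved by substituting the $\pm 1/\sqrt{d}$ entries of Lemma~\ref{lmm:binaryjl} for $g_i$ (so that $\langle g_i, x-y\rangle$ is still approximately $N(0,\|x-y\|^2)$ by a Berry--Esseen bound) and discretizing $t_i$ to $O(\log(1/\epsilon))$ bits, both perturbations shifting $p(s)$ by at most $O(\epsilon)$. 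Second, the hash parameters $g_i, t_i$ are shared between the sketcher and estimator via public randomness, so they do not contribute to the per-point sketch length.
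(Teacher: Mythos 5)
This lemma is imported verbatim from Kushilevitz--Ostrovsky--Rabani and the paper gives no proof, so there is nothing internal to compare against; what you have written is a blind reconstruction, and it is a valid one, though not the one in the cited reference. KOR embed $\ell_2$ into the Hamming cube (random projection plus unary quantization at scale $R$) and then sketch Hamming distance by XOR-ing a random $\Theta(1/R)$-density sample of coordinates into each output bit, so their disagreement probability is $\tfrac12\bigl(1-(1-2p)^{d_H(x,y)}\bigr)$. Your construction is instead the random-line hash of Datar, Immorlica, Indyk and Mirrokni reduced mod~$2$; it plays the same role but is a genuinely different family. Both work, and yours has the advantage of not passing through an intermediate Hamming embedding.

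The ``main obstacle'' you flag --- monotonicity and bi-Lipschitzness of $p(s)$ --- is in fact a non-issue and resolves cleanly. Writing $\sigma = s/(cR)$, the per-coordinate disagreement probability is the expectation of the period-$2$ triangle wave $T(u)$ (which is $0$ at even integers and $1$ at odd integers) against the $N(0,\sigma^2)$ density; expanding $T$ in its Fourier series gives
\begin{equation*}
  p(\sigma) \;=\; \frac12 \;-\; \frac{4}{\pi^2}\sum_{k\ \text{odd}} \frac{1}{k^2}\, e^{-k^2\pi^2\sigma^2/2},
  \qquad
  \frac{dp}{d\sigma} \;=\; 4\sigma\sum_{k\ \text{odd}} e^{-k^2\pi^2\sigma^2/2} \;>\; 0 .
\end{equation*}
So $p$ is strictly increasing for all $\sigma$, with no oscillatory dips, and on any fixed interval $\sigma\in[1/c', c'']$ (which the range $s\in[R/2,4R]$ maps into for constant $c$) the derivative is bounded above and below by constants --- exactly the bi-Lipschitz property you need. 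Choosing $c$ around $4$ and the Chernoff precision $\epsilon/C'$ for a sufficiently large constant $C'$ makes the thresholding for Small/Large and the inversion of $p$ on $[R,2R]$ go through as you describe.

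Two minor remarks. First, your Berry--Esseen substitution of $\pm1/\sqrt d$ entries is unnecessary: the lemma only constrains the sketch \emph{output} to $O(\epsilon^{-2}\log(1/\delta'))$ bits, and you already observe that $g_i,t_i$ live in public randomness, so they may be real-valued Gaussians and real shifts with no impact on the bit count. Second, you should make explicit that the three bullet cases overlap on $[(1+\epsilon)R,2R]$ and leave a gap on $((1-\epsilon)R,R)$, so on those ranges any consistent output of your thresholded estimator is acceptable --- which it is, by monotonicity of $p$.
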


%

We augment the basic sketch from Section~\ref{sec:sketch} as follows.
We sample a matrix $M$ from Lemma~\ref{lmm:binaryjl}, with $\delta'=\delta/q$.
In addition, for every level $\ell$ in the tree $T$, we sample a map $\mathrm{sk}_{2^\ell}$ from Lemma~\ref{lmm:kor}, with $\delta'=\delta/(q\log(2\sqrt{d}\Phi))$.
For every subtree root $r$ in $T$, we store $Mx_{c(r)}$ and $\mathrm{sk}_{2^{\ell(r)}}(x_{c(r)})$ in the sketch.
Let us calculate the added size to the sketch:

\begin{itemize}
  \item Since $x_{c(r)}$ has $d$ coordinates of magnitude $O(\Phi)$ each, $Mx_{c(r)}$ has $d'$ coordinates of magnitude $O(d\Phi)$ each. Since there are $O(n)$ subtree roots (cf.~Lemma~\ref{lmm:sketch_size}), storing $Mx_{c(r)}$ for every $r$ adds $O(nd'd\Phi)=O(\epsilon^{-2}n\log(q/\delta)\log(d\Phi))$ bits to the sketch. In addition we store the matrix $M$, which takes $O(d'd)$ bits to store, which is dominated by the previous term.
  \item By Lemma~\ref{lmm:kor}, each $\mathrm{sk}_{2^{\ell(r)}}(x_{c(r)})$ adds $O(\epsilon^{-2}\log(q\log(2\sqrt{d}\Phi)/\delta))$ bits to the sketch, and as above there are $O(n)$ of these. In addition we store the map $\mathrm{sk}_{2^{\ell(r)}}$ for every $\ell$. Each map takes $\mathrm{poly}(d,\log\Phi,\log(q/\delta),1/\epsilon)$ bits to store.
\end{itemize}
In total, we get the sketch size stated in~\Cref{thm:distances_ub}.
Next we show how to compute all distances from a new query point $y$.

\paragraph{Query algorithm.}
Given the sketch, an index $k\in[n]$ of a point in $X$, and a new query point $y$, the algorithm needs to estimate $\norm{y-x_k}$ up to $1\pm O(\epsilon)$ distortion. It proceeds as follows.
\begin{enumerate}
  \item Perform the approximate nearest neighbor query algorithm from Section~\ref{sec:ann}. Let $r_0,r_1,\ldots$ be the downward sequence of subtree roots traversed by it.
  \item For each $r_j$, estimate from the sketch whether $\norm{y-x_{c(r_j)}}\leq 2^{\ell(r_j)}$.
  This can be done by Lemma~\ref{lmm:kor}, since the sketch stores $\mathrm{sk}_{2^{\ell(r_j)}}(x_{c(r_j)})$ and also the map $\mathrm{sk}_{2^{\ell(r_j)}}$, with which we can compute $\mathrm{sk}_{2^{\ell(r_j)}}(y)$.
  
   \item Let $t$ be the smallest $j$ that satisfies $\norm{y-x_{c(r_j)}} > 2^{\ell(r_j)}$ according the estimates of Lemma~\ref{lmm:kor}.
  (This attempts to recover from the sketch the same $t$ as defined in the analysis in Section~\ref{sec:ann}.)
  \item Let $t_k\in\{0,\ldots,t\}$ be the maximal such that $x_k\in C(r_{t_k})$.
  
  (In words, $r_{t_k}$ is the root of the subtree in which $x_k$ and $y$ ``part ways''.)
  \item If $t_k=t$, return $\norm{My-Mx_{c(r_t)}}$. Note that $M$ and $Mx_{c(r_t)}$ are stored in the sketch.
  \item If $t_k<t$, let $v_k$ be the bottom node on the downward path from $r_{t_k}$ to $\mathrm{leaf}(x_k)$ that does not traverse a long edge. Return $\norm{y-s^*(v_k)}$.
\end{enumerate}

\paragraph{Analysis.}
Fix a query point $y$.
Define the ``good event'' $\mathcal A(y)$ as the intersection of the following:
\begin{enumerate}
  \item For every subtree root $r_j$ traversed by the query algorithm above, the invocation of Lemma~\ref{lmm:kor} on $\mathrm{sk}_{2^{\ell(r_j)}}(x_{c(r_j)})$ and $\mathrm{sk}_{2^{\ell(r_j)}}(y)$ succeeds in deciding whether $\norm{y-x_{c(r_j)}}\leq 2^{\ell(r_j)}$.
  Specifically, this ensures that $\norm{y-x_{c(r_j)}}\leq 2^{\ell(r_j)}$ for every $j<t$, and $\norm{y-x_{c(r_t)}}\geq (1-\epsilon)2^{\ell(r_t)}$.
  Recalling that we invoked the lemma with $\delta'=\delta/(q\log(2\sqrt{d}\Phi))$, we can take a union bound and succeed in all levels simultaneously with probability $1-\delta/q$.

  \item $\norm{My-Mx_{c(r_t)}}=(1\pm\epsilon)\norm{y-x_{c(r_t)}}$. By Lemma~\ref{lmm:binaryjl} this holds with probability $1-\delta/q$.
\end{enumerate}
Altogether, $\mathcal A(y)$ occurs with probability $1-O(\delta/q)$.

\begin{lemma}
Conditioned on $\mathcal A(y)$ occuring, with probabiliy $1-\delta/q$, Lemma~\ref{lmm:hashes} holds. Namely, the query algorithm correctly recovers all surrogrates in the subtrees rooted by $r_j$ for $j=0,1,\ldots,t-1$.
\end{lemma}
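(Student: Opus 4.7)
The plan is to follow the same argument as in~\Cref{lmm:hashes}, observing that event $\mathcal A(y)$ supplies exactly the precondition that the proof of that lemma requires, even though in Section~\ref{sec:dist} the index $t$ is defined via the estimates from~\Cref{lmm:kor} rather than via the true distances. Under the first clause of $\mathcal A(y)$, the estimates correctly certify that $\norm{y-x_{c(r_j)}}\leq 2^{\ell(r_j)}$ for every $j<t$, so the collection of subtree roots $\{r_0,r_1,\ldots,r_{t-1}\}$ is exactly the one to which the hash-recovery argument of~\Cref{lmm:hashes} applies.

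First, I would fix any $j<t$ and combine $\norm{y-x_{c(r_j)}}\leq 2^{\ell(r_j)}$ with the surrogate accuracy bound $\norm{x_{c(r_j)}-s^*(r_j)}\leq 2^{\ell(r_j)}$ from~\Cref{lmm:surrogates} via the triangle inequality, obtaining $\norm{y-s^*(r_j)}\leq 2\cdot 2^{\ell(r_j)}$. This places $s^*(r_j)$ inside the set of grid points enumerated as candidate hash preimages by the surrogate recovery subroutine. Hence the true surrogate is always among the candidates tried; the only way recovery can fail is if some other candidate has a colliding hash value.

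Next, I would bound the collision probability. By~\Cref{clm:gridball} the number of candidate grid points in $\mathcal N_{2^{\ell(r_j)}}$ within distance $2\cdot 2^{\ell(r_j)}$ of $y$ is at most $O(1)^d$, with the constant matching the one built into $m=O(1)^d\cdot\log(2\sqrt d\Phi)\cdot q/\delta$. Since $H_{\ell(r_j)}$ is drawn from a universal family of range size $m$, the probability that any specific other candidate hashes to $H_{\ell(r_j)}(s^*(r_j))$ is $1/m$, so by a union bound over the $O(1)^d$ candidates the collision probability at level $\ell(r_j)$ is at most $\delta/(q\log(2\sqrt d\Phi))$. Absent a collision, the subroutine uniquely identifies $s^*(r_j)$, and then the entire subtree's surrogates are recovered correctly since their inductive definition from $s^*(r_j)$ using $\ell,\mathrm{in},\gamma,\eta$ is fully stored in the sketch. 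A final union bound over the at most $\log(2\sqrt d\Phi)$ subtree roots $r_0,\ldots,r_{t-1}$ (recalling that $t$ is bounded by the tree depth) yields total failure probability at most $\delta/q$ conditioned on $\mathcal A(y)$.

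The main thing to be careful about is purely notational: one must verify that the $t$ defined in Section~\ref{sec:dist} from estimates plays the same role as the $t$ defined from true distances in~\Cref{lmm:hashes}, which is precisely what the first clause of $\mathcal A(y)$ guarantees. The probabilistic content is essentially identical to the original proof, and the hash randomness is independent of the randomness underlying $\mathcal A(y)$ (the JL matrix $M$ and the sketches $\mathrm{sk}_{2^\ell}$), so conditioning on $\mathcal A(y)$ does not affect the collision analysis.
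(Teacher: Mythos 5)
Your proposal is correct and takes essentially the same approach as the paper's proof, which is a one-liner: it observes that the proof of Lemma~\ref{lmm:hashes} relied only on having $\norm{y-x_{c(r_j)}}\leq 2^{\ell(r_j)}$ for every $j<t$, and that conditioning on $\mathcal A(y)$ ensures this holds. You expand this by re-deriving the hash-collision argument in full, and you add a helpful explicit remark (left implicit in the paper) that the hash randomness is independent of the JL matrix $M$ and the $\mathrm{sk}_{2^\ell}$ maps defining $\mathcal A(y)$, so conditioning does not distort the collision probability.
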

\begin{proof}
The proof of Lemma~\ref{lmm:hashes} in Section~\ref{sec:ann} relied on having $\norm{y-x_{c(r_j)}}\leq2^{\ell(r_j)}$ for every $j<t$. Conditioning on $\mathcal A(y)$ ensures this holds.
\end{proof}

\paragraph{Proof of~\Cref{thm:distances_ub}.}
Let $\mathcal A^*(y)$ denote the event in which both $\mathcal A(y)$ occurs and the conclusion of Lemma~\ref{lmm:hashes} occurs. By the above lemma, $\mathcal A^*(y)$ happens with probability $1-O(\delta/q)$.
From now on we will assume that $\mathcal A^*(y)$ occurs, and conditioned on this, we will show that the distance from $y$ to any data point can be deterministically estimated correctly.
To this end, fix $k\in[n]$ and suppose our goal is to estimate $\norm{y-x_k}$. Let $t_k$ and $v_k$ be as defined by the distance query algorithm above.
We handle the two cases of the algorithm separately.

\textbf{Case I:} $t_k=t$. This means $x_k\in C(r_t)$. By Lemma~\ref{lmm:subtree_root} we have $\norm{x_k-x_{c(r_t)}}\leq2^{\ell(r_t)}\epsilon$. By the occurance of $\mathcal A^*(y)$ we have $\norm{y-x_{c(r_t)}}>(1-\epsilon)2^{\ell(r_t)}$. Together,
  $\norm{y-x_k} =
  \norm{y-x_{c(r_t)}} \pm \norm{x_k-x_{c(r_t)}} =
  (1\pm2\epsilon)\norm{y-x_{c(r_t)}}$.
This means that $\norm{y-x_{c(r_t)}}$ is a good estimate for $\norm{y-x_k}$. Since $\mathcal A^*(y)$ occurs, it holds that $\norm{My-Mx_{c(r_t)}}=(1\pm\epsilon)\norm{y-x_{c(r_t)}}$, hence $\norm{My-Mx_{c(r_t)}}$ is also a good estimate for $\norm{y-x_k}$, and this is what the algorithm returns.

\textbf{Case II:} $t_k<t$. Let $T_{t_k}$ be the subtree rooted by $r_{t_k}$. 
By the occurance of $\mathcal A^*(y)$, all surrogates in $T_{t_k}$ are recovered correctly, and in particular $s^*(v_k)$ is recovered correctly.
By Lemma~\ref{lmm:surrogates} we have $\norm{x_{c(v_k)}-s^*(v_k)}ֿ\leq2^{\ell(v_k)}\epsilon$,
and by Lemma~\ref{lmm:subtree_leaf} (noting that $x_k\in C(v_k)$ by choice of $v_k$) we have $\norm{x_k-x_{c(v_k)}}\leq2^{\ell(v_k)}\epsilon$. Together,
$\norm{x_k-s^*(v_k)}ֿ\leq2\cdot2^{\ell(v_k)}\epsilon$.

Let $v$ be the leaf in $T_{t_k}$ that minimizes $\norm{y-s^*(v)}$ (over all leaves of $T_{t_k}$). Equivalently, $v$ is the top node of the long edge whose bottom node is $r_{t_k+1}$.
Let $\ell:=\max\{\ell(v),\ell(v_k)\}$.
By choice of $t_k$ we have $v\neq v_k$, hence the centers of these two leaves are separated already at level $\ell$, hence $\norm{x_{c(v_k)}-x_{c(v)}}\geq2^{\ell}$ by Lemma~\ref{lmm:separation}.
By two applications of Lemma~\ref{lmm:surrogates} we have $\norm{x_{c(v_k)}-s^*(v_k)}ֿ\leq2^{\ell}\epsilon$ and $\norm{x_{c(v)}-s^*(v)}ֿ\leq2^{\ell}\epsilon$.
Together, $\norm{s^*(v_k)-s^*(v)}\geq(1-2\epsilon)\cdot2^{\ell}$.
Since $y$ is closer to $s^*(v)$ than to $s^*(v_k)$ (by choice of $v$), we have
$\norm{y-s^*(v_k)} \geq \frac12\cdot\norm{s^*(v_k)-s^*(v)} \geq \left(\frac12-\epsilon\right)\cdot2^{\ell}$.
Combining this with $\norm{x_k-s^*(v_k)}ֿ\leq2\cdot2^{\ell(v_k)}\epsilon$, which was shown above, yields
$\norm{x_k-s^*(v_k)}ֿ\leq\epsilon\cdot\frac{1}{1/2-\epsilon}\cdot\norm{y-s^*(v_k)}=O(\epsilon)\cdot\norm{y-s^*(v_k)}$.
Therefore,
$\norm{y-x_k} = \norm{y-s^*(v_k)} \pm \norm{x_k-s^*(v_k)} = (1\pm O(\epsilon))\cdot\norm{y-s^*(v_k)}$,
which means $\norm{y-s^*(v_k)}$ is a good estimate for $\norm{y-x_k}$, and this is what the algorithm returns.

\textbf{Conclusion:} Combining both cases, we have shown that for any query point $y$, all distances from $y$ to $X$ can be estimated correctly with probability $1-O(\delta/q)$. Taking a union bound over $q$ queries, and scaling $\delta$ and $\epsilon$ appropriately by a constant, yields the theorem. \qed

\paragraph{Acknowledgments}
This work was supported by grants from the MITEI-Shell program, Amazon Research Award and Simons Investigator Award.

\bibliographystyle{amsalpha}
\bibliography{sublinear_space_nn}

\newpage
\appendix
\section{Deferred Proofs from Section~\ref{sec:sketch}}\label{sec:sketch_proofs}

\begin{proof}[Proof of Lemma~\ref{lmm:tree_size}]
Charging the degree-$1$ nodes along every maximal $1$-path to its bottom (non-degree-$1$) node,
the total number of nodes after top-out compression is bounded by
\[ \sum_{v:\mathrm{deg}(v)\neq1}\Lambda(v) . \]
\cite{indyk2017near} show this is at most $O(n\log(1/\epsilon))$.
The difference is that their compression replaces summands larger than $\Lambda(v)$ by zero, while our (top-out) compression trims them to $\Lambda(v)$.
\end{proof}

\begin{proof}[Proof of Lemma~\ref{lmm:subtree_root}]
By top-out compression, $v$ is the top of a downward $1$-path of length $\Lambda(v')$ whose bottom node is $v'$. Since no clusters are joined along a $1$-path, we have $C(v')=C(v)$, hence $x,x'\in C(v')$ and hence $\norm{x-x'} \leq \Delta(v')$. 
Noting that $\ell(v)=\ell(v')+\Lambda(v')=\ell(v')+\log(\Delta(v')/(2^{\ell(v')}\epsilon))=\log(\Delta(v')/\epsilon)$ and rearranging, we find $\Delta(v')=2^{\ell(v)}\epsilon$, which yields the claim.
\end{proof}

\begin{proof}[Proof of Lemma~\ref{lmm:subtree_leaf}]
If $u$ is a leaf in $T$ then $C(u)$ is a singleton cluster, hence $x=x'$.
Otherwise $u$ is the top node of a long edge, and the claim follows by Lemma~\ref{lmm:subtree_root} on the bottom node of that long edge.
\end{proof}

\begin{proof}[Proof of Lemma~\ref{lmm:surrogates}]
The first part of the lemma (where $v$ is any node, not necessarily a subtree leaf) is proved by induction on the ingresses.
In the base case we use that $\norm{x_{c(v)}-s^*(v)}\leq2^{\ell(v)}$ by the choice of grid net.
The induction step is identical to~\cite{indyk2017near}.
The ``furthermore'' part of the lemma then follows as a corollary due to the refined definition of $\gamma(v)$ for a subtree leaf $v$, in the induction step leading to it.
\end{proof}

\begin{proof}[Proof of Lemma~\ref{lmm:sketch_size}]
The sketch of~\cite{indyk2017near} stores the compressed tree $T'$, with each node annotated by its center $c(v)$, ingress $\mathrm{in}(v)$, precision $\gamma(v)$ and quantized displacement $\eta(v)$.
Every long edge is annotated by its length.
They show this takes
\[ O\left( n\left((d+\log n)\log(1/\epsilon) + \log\log\Phi\right)\right) \]
bits;
note that by Lemma~\ref{lmm:tree_size}, top-out compression did not effect this bound.

We additionally store the hashed surrogates of subtree roots.
There are $O(n)$ subtrees,\footnote{By construction, the tree of subtrees in $T$ has no degree-$1$ nodes. Since $T$ has $n$ leaves, there are at most $2n-1$ subtrees.} and each hash takes $\log m$ bits to store, which adds $O(n(d+\log\log\Phi+\log(q/\delta)))$ bits to the above.
Finally, we store the hash functions $H_\ell$ for every $\ell$.
The domain of each $H_\ell$ is $N_{2^\ell}$, which is a subset of $\{-\Phi \ldots \Phi\}^d$, and hence $H_\ell$ can be specified by $O(\log(\Phi^d))$ random bits (\cite{carter1979universal}).
Since we do not require independence between hash functions of different levels, we can use the same random bits for all hash functions, adding a total of $O(d\log\Phi)$ bits to the sketch.
\end{proof}

\section{Approximate Nearest Neighbor Sketching Lower Bound}\label{sec:nnlower}

\begin{theorem}\label{thm:ann_lb}
Suppose that $d\geq\Omega(\epsilon^{-2}\log n)$, $\Phi\geq1/\epsilon$, and $1/n^{0.5-\beta}\leq\epsilon\leq\epsilon_0$ for a constant $\beta>0$ and a sufficiently small constant $\epsilon_0$.
Suppose also that $\delta<1/n^2$.
Then, for the all-nearest-neighbors problem, Alice must use a sketch of at least $\Omega(\beta\epsilon^{-2}n\log n)$ bits.
\end{theorem}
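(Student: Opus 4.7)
The plan is to prove this lower bound by an encoding argument that exploits the very strong per-query success guarantee $\delta<1/n^2$. By a simple union bound, one sketch simultaneously answers any fixed family of $M$ non-adaptive queries correctly with probability at least $1-M/n^2$. In the regime $\epsilon\geq n^{-1/2+\beta}$ we have $\epsilon^{-2}\leq n^{1-2\beta}$, so up to $M=\Theta(n\epsilon^{-2})$ queries can be posed with simultaneous failure probability bounded by $n^{-2\beta}=o(1)$. Each correct $(1+\epsilon)$-approximate NN response is an index in $[n]$ and thus encodes $\log n$ bits, giving a total extractable information budget of $\Theta(n\epsilon^{-2}\log n)$ bits per sketch.

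We match this budget (up to the factor $\beta$) with a hard input distribution and a matching decoder. Fix $d=\Theta(\epsilon^{-2}\log n)$, consistent with the assumption $d\geq\Omega(\epsilon^{-2}\log n)$. The hard distribution draws $X=(x_1,\ldots,x_n)$ so that its entropy is $\Omega(\beta n\epsilon^{-2}\log n)$ bits, and supports the following sequential, augmented-indexing-style decoding: Bob recovers $x_i$ one point at a time, using $\Theta(\beta\epsilon^{-2})$ NN queries per point. Each query is constructed so that, conditional on the previously decoded points $x_1,\ldots,x_{i-1}$ and the tight geometric structure of the distribution, the unique $(1+\epsilon)$-approximate NN of the query in $X$ is forced to be the codeword encoding the next $\log n$-bit chunk of $x_i$. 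Since for any fixed internal randomness the sketch deterministically produces the query responses, and the joint responses uniquely determine $X$, the range of the sketch (for that fixed randomness) has size at least $2^{\Omega(\beta n\epsilon^{-2}\log n)}$, so its bit length must be $\Omega(\beta n\epsilon^{-2}\log n)$.

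I expect the main technical obstacle to be the joint design of the hard distribution and the decoding queries so that $(1+\epsilon)$-approximate -- rather than exact -- NN responses suffice to reveal a full $\log n$ bits per query. This requires the input distribution to exhibit a tight $(1+\epsilon)$-gap: candidate codewords for each $x_i$ must be geometrically separated with pairwise distances concentrated within an $O(\epsilon)$ window, and the queries must sit precisely inside the $(1+\epsilon)$-gap so that the returned approximate NN is uniquely determined. A natural template comes from the product-of-instances constructions used for distance-sketching lower bounds in~\cite{molinaro2013beating} and related works, and the factor $\beta$ in the final bound reflects the unavoidable loss in codebook size imposed by the tight distance concentration required for the NN-only access model in our parameter regime.
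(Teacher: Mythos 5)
Your outer framing is sound: with $\delta < 1/n^2$ one can union-bound over $\mathrm{poly}(n)$ non-adaptive queries, and an encoding argument then forces the sketch to carry $\Omega(\beta\epsilon^{-2}n\log n)$ bits of information. But the entire technical content — a hard family of datasets $X$ and a query set whose $(1+\epsilon)$-approximate-NN answers determine $X$ — is exactly what you defer. You name ``the joint design of the hard distribution and the decoding queries'' as the main obstacle, propose extracting $\log n$ bits per query via an augmented-indexing / Molinaro-style direct sum, and supply neither the point set nor the queries. That is not a side detail: it is the proof. It is also far from clear that the $\log n$-bits-per-query route is workable here, since it requires placing $\Theta(n)$ candidate codewords so that, for each of $\Theta(\beta n\epsilon^{-2})$ queries, exactly one candidate is the unique $(1+\epsilon)$-approximate NN while all others sit just outside the $(1+\epsilon)$ window, with coordinates bounded in $\{-\Phi\ldots\Phi\}$ and $d = \Theta(\epsilon^{-2}\log n)$. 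You do not exhibit such a configuration or argue it exists.

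The paper's construction is fully explicit and takes a different route: it extracts at most one bit per query, not $\log n$, and poses $n^2$ queries rather than $\Theta(\beta n\epsilon^{-2})$. Concretely, the dataset is $\{x_1,\ldots,x_n\}\cup\{z_1,\ldots,z_n\}$ in dimension $n+1+\log n$ (reducible to $\Theta(\epsilon^{-2}\log n)$ by a JL map): each $x_i$ has a free $k$-sparse support with $k=1/\epsilon^2$ and nonzero entries $1/\sqrt{k}$ in the first $n$ coordinates, $z_i$ has zeros there and $\sqrt{1-\epsilon}$ in coordinate $n+1$, and the last $\log n$ coordinates of both encode $i$ scaled by $10$ so that the query $y_{ij}$ (which is $e_j$ on the first block plus the $i$-encoding) sees only $x_i$ and $z_i$. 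Then $\|y_{ij}-x_i\|$ is $\sqrt{2-2\epsilon}$ or $\sqrt{2}$ according to whether $x_i(j)\neq 0$, straddling $\|y_{ij}-z_i\|=\sqrt{2-\epsilon}$, so a $(1+\epsilon/8)$-approximate NN answer ($x_i$ vs.\ $z_i$) reveals that bit. A union bound over all $n^2$ queries uses $\delta<1/n^2$ exactly, and there are $\binom{n}{k}^n$ choices of supports, giving $\log\binom{n}{k}^n \geq nk\log(n/k) = \Omega(\beta\epsilon^{-2}n\log n)$ bits. No augmented-indexing machinery or direct-sum theorem from~\cite{molinaro2013beating} appears here (those are used for the distance-estimation lower bound, \Cref{thm:distances_lb}, not this one). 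Your proposal identifies the right target and the right use of $\delta<1/n^2$, but it is a plan rather than a proof, and the plan it sketches is not the path the paper actually follows.
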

\begin{proof}
We start with dimension $d=n+1+\log n$; it can then be reduced by standard dimension reduction.
Fix $k=1/\epsilon^2$ and assume w.l.o.g.~that $k$ is a square integer (by taking $\epsilon$ to be appropriately small). Note that since $\epsilon>1/\sqrt{n}$ we have $k\leq n$, and that since $\Phi\geq1/\epsilon$ we have $\sqrt{k}\leq\Phi$. 

The data set will consist of $2n$ points, $x_1,\ldots,x_n$ and $z_1,\ldots,z_n$.
Let $i\in[n]$. We choose the first $n$ coordinates of $x_i$ to be an arbitrary $k$-sparse vector, in which each nonzero coordinate equals $1/\sqrt{k}$. Note that the norm of this part is $1$. The $(n+1)$th coordinate of $x_i$ is set to $0$. The remaining $\log n$ coordinates encode the binary encoding of $i$, with each coordinate multiplied by $10$.

Next we define $z_i$. The first $n$ coordinates are $0$. The $(n+1)$th coordinate equals $\sqrt{1-\epsilon}$. The remaining $\log n$ coordinates encode $i$ similarly to $x_i$.

The number of different choices for $\{x_1,\ldots,x_n\}$ is ${n\choose k}^n$.
Therefore if we show that one can fully recover $x_1,\ldots,x_n$ from a given all-nearest-neighbor sketch of the dataset, we would get the desired lower bound
\[
  \log\left({n\choose k}^n\right) \geq
   nk\log(n/k) = \epsilon^{-2}n\log(\epsilon^2n) =\epsilon^{-2}n\log(n^{2\beta}) = 2\beta\epsilon^{-2}n\log n .
\]

Suppose we have such a sketch.
For given $i,j\in[n]$ we now show how to recover the $j$th coordinate of $x_i$, denoted $x_i(j)$, with a single approximate nearest neighbor query.
Let $y_{ij}$ be the following vector in $\R^d$: The first $n+1$ coordinates are all zeros, except for the $j$th coordinate which is set to $1$. The last $\log n$ coordinates encode $i$ similary to $x_i$ and $z_i$.

Consider the distances from $y_{ij}$ to all data points. We start with $x_i$. It is identical to $y_i$ in the last $\log n+1$ coordinates, so we will restrict both to the first $n$ coordinates and denote the restricted vectors by $x_i^{:n}$ and $y_{ij}^{:n}$. $x_i^{:n}$ is a $k$-sparse vector with nonzero entries equal to $1/\sqrt{k}$, hence $\norm{x_i^{:n}}=1$. $y_{ij}^{:n}$ is just the standard basis vector $e_j$ in $\R^n$. Hence,
\[
  \norm{x_i-y_{ij}}^2 =
  \norm{x_i^{:n}-y_{ij}^{:n}}^2 =
  \norm{x_i^{:n}}^2 + \norm{y_{ij}^{:n}}^2 - 2(x_i^{:n})^\top y_{ij}^{:n} =
  2-2x_i(j).
\]
This equals $2$ if $x_i(j)=0$ and $2-2/\sqrt k=2-2\epsilon$ if $x_i(j)=1/\sqrt k$.

Next consider $z_i$. It is identical to $y_{ij}$ in all except the $j$th coordinate, which is $0$ in $z_i$ and $1$ in $y_{ij}$, and the $(n+1)$th coordinate, which is $0$ for $y_{ij}$ and $\sqrt{1-\epsilon}$ for $z_i$. Therefore, $\norm{z_i-y_{ij}}^2=2-\epsilon$.

Finally, for every $i'\neq i$, both $x_{i'}$ and $z_{i'}$ are at distance at least $10$ from $y_{ij}$ due to the encoding of $i$ (as binary multiplied by $10$) in the last $\log n$ coordinates.

In summation we have established the following:
\begin{itemize}
  \item If $x_i(j)\neq0$, then the closest point to $y_{ij}$ in the dataset is $x_i$ at distance $\sqrt{2-2\epsilon}$, and the next closest point is $z_i$ at distance $\sqrt{2-\epsilon}$.
  \item If $x_i(j)=0$, then the closest point to $y_{ij}$ in the dataset is $z_i$ at distance $\sqrt{2-\epsilon}$, and the next closest point is $x_i$ at distance $2$.
\end{itemize}
Therefore, if the sketch supports $(1+\frac{1}{8}\epsilon)$-approximate nearest neighbors, we can recover the true nearest neighbor of $y_{ij}$ and thus recover $x_i(j)$. By hypothesis, the query succeeds with probability $\delta<1/n^2$.
By a union bound over all $i,j\in[n]$ we can recover all of $x_1,\ldots,x_n$ simultaneously, and the theorem follows.
\end{proof}

\section{Lower Bound for Distance Estimation}\label{sec:dist_lb}
In this section we prove~\Cref{thm:distances_lb}.
We handle the two terms in the lower bound separately.

\subsection{First Lower Bound Term}
\begin{lemma}\label{lmm:lb1}
Suppose that $d\geq\Omega(\epsilon^{-2}\log n)$; $\Phi\geq1/\epsilon$; $\epsilon$ is at most a sufficiently small constant; and $\epsilon\geq1/n^{0.5-\rho'}$ for a constant $\rho'>0$.
Then, for the all-cross-distances problem, Alice must use a sketch of at least $\Omega(\rho'\epsilon^{-2}n\log n)$ bits.
\end{lemma}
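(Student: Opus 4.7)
The proof will closely mirror that of~\Cref{thm:ann_lb}, with two simplifications enabled by the fact that distance estimation is a strictly stronger primitive than nearest-neighbor reporting: the construction needs no auxiliary ``decoy'' points $z_i$, and a single query $y_j=e_j$ simultaneously reveals the $j$-th coordinate of \emph{every} data point, so only $q=n$ queries are required.

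\emph{Construction.} Fix a sufficiently large absolute constant $c$ (we will see that $c\geq 25$ works) and set $k=\lfloor 1/(c\epsilon^2)\rfloor$. The hypothesis $\epsilon\geq 1/n^{0.5-\rho'}$ ensures $k\leq n$, and $\Phi\geq 1/\epsilon$ ensures that $1/\sqrt k$ lies in the representable coordinate range. For each $i\in[n]$, let Alice's point $x_i\in\R^n$ be the $k$-sparse vector whose nonzero coordinates equal $1/\sqrt k$ and are supported on a set $S_i\in\binom{[n]}{k}$. Alice's input thus lies in a family of size $\binom{n}{k}^n$.

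\emph{Queries and distinguishability.} Bob uses the $q=n$ queries $y_j=e_j$ for $j\in[n]$. A direct expansion gives
\[
  \|x_i-y_j\|^2 \;=\; \begin{cases} 2 & \text{if $j\notin S_i$,}\\ 2-2/\sqrt k & \text{if $j\in S_i$.}\end{cases}
\]
The key algebraic step is to check that for $c$ large enough the ratio $\sqrt 2/\sqrt{2-2/\sqrt k}$ strictly exceeds $(1+\epsilon)/(1-\epsilon)$, equivalently $1/\sqrt k>4\epsilon/(1+\epsilon)^2$. Since $k\leq 1/(c\epsilon^2)$ gives $1/\sqrt k\geq\sqrt c\cdot\epsilon$, it suffices to choose $c$ with $\sqrt c>4/(1+\epsilon)^2$; this is satisfied for any $c\geq 25$ once $\epsilon$ is below a fixed constant. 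Hence any $(1\pm\epsilon)$-distance estimate $E(i,j)$ unambiguously determines whether $j\in S_i$, and from a correct sketch Bob recovers Alice's entire input exactly.

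\emph{Information-theoretic bound and dimension reduction.} With probability $1-\delta$ the sketch provides correct estimates for all $n^2$ pairs simultaneously. Assuming $\delta$ is bounded away from $1$ (the lemma imposes no upper bound on $\delta$, and smaller $\delta$ only strengthens the lower bound), a standard Fano-type argument forces the sketch size to be at least
\[
  \Omega\!\left(n\log\binom{n}{k}\right) \;=\; \Omega(nk\log(n/k)) \;=\; \Omega\!\left(\epsilon^{-2}n\log(cn\epsilon^2)\right) \;=\; \Omega(\rho'\epsilon^{-2}n\log n),
\]
where the last step uses $n\epsilon^2\geq n^{2\rho'}$ from the hypothesis on $\epsilon$. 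The construction lives in $\R^n$; to obtain the bound for the full range $d\geq\Omega(\epsilon^{-2}\log n)$, we precompose with a standard Johnson--Lindenstrauss projection, which preserves all pairwise distances up to $1\pm O(\epsilon)$ at the cost of adjusting the distortion by a constant factor.

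\emph{Main obstacle.} The only delicate point is selecting $c$ so that the two possible squared distances $\{2,2-2/\sqrt k\}$ remain separable under $(1\pm\epsilon)$-noise while keeping $k$ large enough that the entropy $\log\binom{n}{k}=\Omega(\epsilon^{-2}\log(n\epsilon^2))$ is not degraded by more than a constant factor. Both are achieved by any fixed $c$ above a modest threshold; the rest of the argument is a direct translation of the proof of~\Cref{thm:ann_lb}.
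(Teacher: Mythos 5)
Your construction correctly encodes $\Omega(\rho'\epsilon^{-2}n\log n)$ bits into the dataset $X$ and shows they can be decoded from accurate distance estimates. However, the argument has a genuine gap: it requires $q=n$ query points (all of $e_1,\ldots,e_n$) to extract that information, so it only establishes a lower bound for the all-cross-distances problem in the special parameter regime $q=n$. The lemma must hold for \emph{all} $q$, and in particular for $q=1$: this is forced by the structure of~\Cref{thm:distances_lb}, where the first term $\log n$ is $q$-independent and must match the $q$-independent $\log n\cdot\log(1/\epsilon)$ term in the upper bound~\Cref{thm:distances_ub}. A lower bound for $q=n$ does not imply one for $q=1$, since a sketch only has to be larger when it must serve more simultaneous queries. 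With a single query $y=e_j$ your argument recovers only the $j$-th coordinate of every $x_i$ (at most $n$ bits), so the information-theoretic step collapses.

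The paper's proof sidesteps this obstacle entirely by avoiding a fresh construction. It invokes the in-sample lower bound of~\cite{indyk2017near} (for sketching all pairwise distances within $X$) as a black box, and shows a \emph{reduction} from that problem to all-cross-distances with $q=1$: given a $q=1$ all-cross-distances sketch of per-point size $s$, one sketches $X$, notes that with constant probability the sketch answers correctly for a constant fraction of the points of $X$ used as queries, and recurses on the remaining points. After $O(\log n)$ rounds the concatenated sketches recover all in-sample pairwise distances, with total size $O(ns)$, so $s=\Omega(\rho'\epsilon^{-2}\log n)$ follows from the known in-sample lower bound. If you want to keep a direct construction, you would need a different mechanism that extracts $\Omega(\epsilon^{-2}n\log n)$ bits from a sketch that only guarantees correct answers for a single out-of-sample query; the paper's recursive bootstrap is precisely such a mechanism.
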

\begin{proof}
Consider the following problem: Given a dataset $X\subset\{-\Phi \ldots \Phi\}^d$ consisting of $n$ points, we need to produce a sketch, from which we can recover (deterministically) all distances $\{\norm{x-x'}:x,x'\in X\}$ up to distortion $1\pm\epsilon$.
This is the problem considered in~\cite{indyk2017near}, and they show a lower bound of $\Omega(\rho'\epsilon^{-2}n\log n)$ under the asssumptions of the current lemma. We will obtain the same lower bound by reducing this problem to all-cross-distances.

To this end, suppose we have a given sketching procedure for the all-cross-distances problem that uses $s=s(n,d,\Phi,1,\epsilon,\delta)$ amortized bits per point.
We invoke it on $X$ and denote the resulting sketch by $S_0$. For every point $y\in\{-\Phi \ldots \Phi\}^d$, with probabiliy $1-\delta$, all distances $\{\norm{x-y}:x\in X\}$ can be recovered from $S_0$. In particular, this holds in expectation for $(1-\delta)n$ of the points in $X$.
By Markov's inequality, this holds for $\frac12(1-\delta)n>\frac14n$ of the points in $X$ with probability at least $1/2$.
We proceed by recursion on the remaining $\frac34n$ points in $X$.
The sketch produced in the $i$th step of the recursion is denoted by $S_i$ and has total size $(\frac34)^ins$ bits.
After $t=O(\log n)$ steps, with nonzero probability $1/n^{O(1)}$, we have produced a sequence of sketches $S_0,\ldots,S_t$ from which every distance in $\{\norm{x-x'}:x,x'\in X\}$ can be recovered, with a total size of $O(\sum_{i=0}^t(\frac34)^ins)=O(ns)$ bits. This yields the desired lower bound.
\end{proof}

\subsection{Second Lower Bound Term}

\begin{lemma}\label{lmm:lb2}
Suppose that $d^{1-\rho}\geq\epsilon^{-2}\log(q/\delta)$ for a constant $\rho>0$, and $\epsilon$ is at most a sufficiently small constant.
Then, for the all-cross-distances problem, Alice must use a sketch of at least $\Omega(\epsilon^{-2}n\log(d\Phi)\log(q/\delta))$ bits.
\end{lemma}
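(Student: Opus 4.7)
The strategy is an encoding argument: we exhibit a family of $2^N$ datasets, with $N = \Omega(n\epsilon^{-2}\log(d\Phi)\log(q/\delta))$, together with a fixed set $Y$ of $q$ queries, such that any $(1\pm\epsilon)$-correct set of cross-distances determines (w.h.p.) which dataset from the family was encoded. A standard communication-complexity / information-theoretic argument then forces the sketch size to be $\Omega(N)$ bits. Concretely, I would reduce from Augmented Indexing, stratified across $L = \Theta(\log(d\Phi))$ exponentially separated scales; this is the natural way to simultaneously exploit the $\log(d\Phi)$ factor (many scales) and the $\log(q/\delta)$ factor (many queries, each succeeding with probability $1-\delta/q$).

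The first step is to build a single-scale hard instance. Fix a scale $j\in[L]$ with representative magnitude $2^j$. On $d' = \Theta(\epsilon^{-2}\log(q/\delta))$ coordinates (well within $d$ thanks to the hypothesis $d^{1-\rho}\geq\epsilon^{-2}\log(q/\delta)$), encode a uniformly random $\pm 1$ matrix $B_j\in\{-1,+1\}^{n\times d'}$ scaled by $2^j/\sqrt{d'}$, and let the $q$ queries be the standard-basis probes (also scaled by $2^j$). Then a $(1\pm\epsilon)$-estimate of $\|x_i-y_k\|$ at this scale reveals the sign of $B_j[i,k]$, which is one bit per (point, query) pair. By the distributional JL lower bound of Jayram--Woodruff / Kane--Meka--Nelson style (applied with failure probability $\delta/q$ and then a union bound over $q$ queries), recovering all $n d'$ bits at this scale with success probability $1-\delta/L$ forces the sketch contribution at scale $j$ to be $\Omega(n\epsilon^{-2}\log(q/\delta))$ bits.

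The second step is to stack the $L$ single-scale instances into one dataset. Set $x_i := \sum_{j=1}^{L} 2^j\cdot v_{i,j}$, where $v_{i,j}$ lives in a $d'$-dimensional subspace and encodes $B_j[i,\cdot]$ as above; the coordinates fit in $\{-\Phi,\ldots,\Phi\}^d$ because $\sum_j 2^j = O(d\Phi)$. In the Augmented Indexing reduction Bob, while trying to recover bits at scale $j$, \emph{already knows} the layers at scales $j+1,\ldots,L$, so he can subtract them: his effective query for a scale-$j$ probe $p$ is $y = \sum_{\ell>j} 2^\ell v_{i,\ell} + p$. The distance $\|x_i-y\|$ is then dominated by the scale-$j$ contribution, with noise from scales $<j$ being a geometric sum bounded by $O(2^{j})$ but averaged with vanishing correlation (the $d'\geq\epsilon^{-2}\log(q/\delta)$ bound ensures concentration), so that a $(1\pm\epsilon)$ estimate still recovers the scale-$j$ bit. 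Hence the single-scale lower bound applies independently at each of the $L$ scales, and they sum to $\Omega(nL\epsilon^{-2}\log(q/\delta)) = \Omega(n\epsilon^{-2}\log(d\Phi)\log(q/\delta))$ bits.

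The main obstacle, and the place that will consume most of the work, is the cross-scale isolation argument in the second step: one has to verify that the lower scales really do not corrupt the scale-$j$ probe, even in the adversarial worst case, and that the queries Bob forms stay within the bounded integer grid $\{-\Phi,\ldots,\Phi\}^d$. The separation of scales is only by a factor of $2$, so cancellation from lower layers must be controlled via concentration of random $\pm 1$ sums (this is where the hypothesis $d^{1-\rho}\ge \epsilon^{-2}\log(q/\delta)$ buys enough dimension). A secondary technical point is converting the distributional/information-theoretic bound into a worst-case bit lower bound via Yao's minimax and the standard Augmented Indexing hardness, but this is routine once the geometric construction goes through.
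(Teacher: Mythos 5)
Your high-level plan is the same as the paper's: reduce from a form of augmented indexing stratified across $\Theta(\log(d\Phi))$ scales, and extract the $\log(q/\delta)$ factor from the requirement that each of $q$ queries succeed with probability $1-\delta/q$. But two steps in your concretization would fail.

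First, the single-scale gadget is wrong. With $x_i = (2^j/\sqrt{d'})\,B_j[i,\cdot]$ for a dense random $B_j\in\{-1,+1\}^{n\times d'}$ and $y_k = 2^j e_k$, you get $\norm{x_i - y_k}^2 = 2^{2j}\bigl(2 - 2B_j[i,k]/\sqrt{d'}\bigr)$, so the multiplicative gap between the two cases is $1 + O(1/\sqrt{d'}) = 1 + O\bigl(\epsilon/\sqrt{\log(q/\delta)}\bigr)$, which is \emph{smaller} than $\epsilon$ precisely because you chose $d' = \Theta(\epsilon^{-2}\log(q/\delta))$. A $(1\pm\epsilon)$-estimate of the distance therefore reveals nothing about $B_j[i,k]$. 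Moreover, with only $q$ queries you can probe at most $q$ columns, so "recovering all $nd'$ bits" is unavailable when $q<d'$. The paper avoids both issues by using the Jayram--Woodruff encoding (its Lemma~\ref{lmm:jw}): Alice's per-row payload is a \emph{set} of size $m=1/\epsilon^2$ from a universe of size $m/\eta$ (entropy $\Theta(\epsilon^{-2}\log(1/\eta))$ bits), and the gadget is engineered so that membership of a single probed element changes $\norm{v_A-v_B}^2$ by a $\Theta(\epsilon)$ multiplicative factor; the $\log(q/\delta)$ bits per (point, scale) come from the universe size $\eta=\delta/q$, not from a dense sign matrix. The paper also replaces your union bound by the direct-sum machinery of~\cite{molinaro2013beating} for the $q$-fold problem, formalized via $q$-$AugSetList$ and the abortion model.

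Second, your cross-scale stacking does not isolate. If the layers $v_{i,j}$ overlap in coordinates, then after Bob subtracts the known scales $\ell > j$, the residual noise from scales $\ell < j$ has norm $\sum_{\ell<j}2^\ell\norm{v_{i,\ell}} = \Theta(2^j)$, i.e., comparable to the scale-$j$ signal; concentration of $\pm1$ sums only reduces this by $\sqrt{d'}$, which again leaves a gap far too coarse for a $(1\pm\epsilon)$ estimate to see the scale-$j$ bit. If instead you place the layers in disjoint coordinate blocks to avoid interference, the total dimension becomes $L\cdot d' = \Theta(\log(d\Phi)\cdot\epsilon^{-2}\log(q/\delta))$, which does not fit in $d$ in general. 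The paper's Lemma~\ref{lmm:jw} handles exactly this by paying an $\exp(r)$ dimension blowup for $r$ nested scales (disjoint blocks baked into the gadget), and then it splits the proof into two reductions: $r=\rho\log d$, where the hypothesis $d^{1-\rho}\geq\epsilon^{-2}\log(q/\delta)$ makes the blown-up dimension $O(d)$, and $r=\rho\log\Phi$, where a Johnson--Lindenstrauss projection brings the dimension back to $O(\epsilon^{-2}\log(q/\delta))$ while keeping coordinate magnitudes $O(\Phi)$. You would need some equivalent mechanism; without it, the two lower-bound terms ($\log d$ and $\log\Phi$) do not combine.
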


The proof is by adapting the framework of~\cite{molinaro2013beating}, who proved (among other results) this statement for the case $q=n$.
We describe the adaption and refer to~\cite{molinaro2013beating} for missing details that remain similar.

\subsubsection{Preliminaries}
We follow the approach of~\cite{jayram2013optimal}, of proving one-way communication lower bounds by reduction to variants of the augmented indexing problem, defined next.
\begin{definition}[Augmented Indexing]\label{def:augind}
In the Augmented Indexing problem $AugInd(k,\delta)$, Alice gets a vector $A$ with $k$ entries, whose elements are entries of a universe of size $20/\delta$.
Bob gets an index $i\in[k]$, an element $e$, and the elements $A(i')$ for every $i'<i$.
Bob needs to decide whether $e=A(i)$, and succeed with probability $1-\delta$.
\end{definition}

\cite{jayram2013optimal} give a one-way communication lower bound of $\Omega(k\log(1/\delta))$ for this problem.
The main component in~\cite{molinaro2013beating} is a modified one-way communication model, in which the protocol is allowed to abort with a substantially larger (constant) probability than it is allowed to err.
We will refer to it simply as the~\emph{abortion model} and refer to~\cite{molinaro2013beating} for the exact definition (which we will not require).
They prove the same lower bound for Augmented Indexing.
\begin{lemma}[informal]\label{lmm:augind}
In the abortion model, the one-way communication complexity of $AugInd(k,\delta)$ is $\Omega(k\log(1/\delta))$.
\end{lemma}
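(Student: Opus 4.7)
The plan is to prove the lower bound via a direct sum reduction from $AugInd(k,\delta)$ to $k$ independent copies of a single-coordinate base problem, and then to establish the $\Omega(\log(1/\delta))$ bound for the base problem directly in the abortion model. This parallels the classical Jayram--Woodruff style proof in the standard error model, with the new ingredient being that one must track how an abortion-allowing protocol interacts with information-theoretic inequalities.

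First, I would set up a hard distribution: let $A \in [20/\delta]^k$ be uniform, let $i \in [k]$ be uniform, and let $e$ equal $A(i)$ with probability $\tfrac12$ and a uniform element of $[20/\delta]$ otherwise. Write $M$ for Alice's message and $|\Pi|$ for the protocol's communication. The augmented-indexing structure (Bob knows the entire prefix $A({<}i)$) yields the chain-rule decomposition
\[ |\Pi| \;\geq\; I(M;A) \;=\; \sum_{i=1}^{k} I\bigl(M \,;\, A(i) \,\big|\, A({<}i)\bigr), \]
so it suffices to show that for the average coordinate $I(M; A(i) \mid A({<}i)) = \Omega(\log(1/\delta))$. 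Fixing any prefix, a successful abortion-model protocol on the full instance induces one for the single-coordinate base problem with the same abort probability $\beta$ and conditional error $\delta$.

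The single-coordinate problem is: Alice has a uniform $a \in [20/\delta]$, Bob has $e$, and must decide whether $e = a$. I would argue as follows. Conditioned on a non-abort transcript $M$, Bob's decision rule induces an ``accept set'' $S(M) \subseteq [20/\delta]$. Soundness against uniform $e \ne a$ forces $|S(M)|/(20/\delta) \leq O(\delta)$, i.e.\ $|S(M)| = O(1)$; completeness forces $a \in S(M)$ with probability $\geq 1 - O(\delta)$ over $(a,M)$. Together these pin $a$ down to an $O(1)$-sized subset of a universe of size $20/\delta$, so $H(a \mid M, \text{no abort}) = O(1)$ and hence $I(M;a \mid \text{no abort}) = \Omega(\log(1/\delta))$. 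Since the abort event has probability at most a fixed constant $\beta < 1$ independent of the input distribution, a standard conditioning argument propagates this to the unconditional bound $I(M;a) = \Omega(\log(1/\delta))$.

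The main obstacle will be the interaction of abortion with the information measures: a strategic protocol could try to abort precisely on those inputs whose transcripts would otherwise carry the most information, thereby lowering $I(M;a)$. The key technical step is to exploit the fact that $\beta$ is bounded away from $1$, so that a constant fraction of input pairs $(a,e)$ force a non-abort decision, and on that forced event the soundness/completeness argument applies verbatim. Formalizing this robustness --- that one cannot hide information behind an abort event of constant probability --- is where I would follow the framework of Molinaro et al.\ most closely, combining their single-coordinate abortion-to-entropy argument with the direct sum over the $k$ coordinates of $AugInd$.
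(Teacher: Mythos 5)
The paper does not actually prove Lemma~\ref{lmm:augind}: it is labeled ``informal'' and attributed directly to~\cite{molinaro2013beating}, and the surrounding text explicitly declines even to reproduce the abortion model's definition (``which we will not require''). So there is no in-paper argument to compare against; what you have written is a reconstruction of the Molinaro--Woodruff--Yaroslavtsev proof, not an alternative to something in this paper.

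As a reconstruction, the blueprint is the right one: the chain-rule decomposition $I(M;A)=\sum_i I(M;A(i)\mid A({<}i))$ together with the per-coordinate ``accept set'' argument (soundness forces $|S(M)|=O(1)$, completeness forces $a\in S(M)$ with probability $1-O(\delta)$, hence $H(a\mid M,\text{no abort})=O(1)$ and each coordinate contributes $\Omega(\log(1/\delta))$) is exactly the standard information-complexity route to an $\Omega(k\log(1/\delta))$ bound for augmented indexing over an alphabet of size $\Theta(1/\delta)$. The gap is that the one step you yourself flag as the ``key technical step'' --- that the protocol cannot lower $I(M;a)$ by aborting selectively on informative inputs --- is precisely the content that distinguishes the abortion model from the ordinary Jayram--Woodruff setting, and your sketch resolves it by ``following the framework of Molinaro et al.\ most closely.'' In other words, the part of the argument that makes this a lemma about the abortion model is the part you defer back to the source being cited, so the sketch does not stand on its own. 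If you wanted to close it, two things need to be made explicit: (i) the abort guarantee must be worst-case (probability $\leq\beta$ on \emph{every} input), not merely on average over the hard distribution --- under an average-case abort budget the protocol could abort on all but $O(1)$ values of $a$ and $I(M;a)$ would collapse, so ``independent of the input distribution'' should be promoted from a parenthetical to a hypothesis; and (ii) the passage from $H(a\mid M,\text{no abort})=O(1)$ to $I(M;a)=\Omega(\log(1/\delta))$ uses that the conditional law of $a$ given no-abort is within a constant factor of uniform, which again follows only from the worst-case constraint and should be stated rather than folded into ``a standard conditioning argument.''
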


\subsubsection{Variants of Augmented Indexing}

We start by defining a variant of augmented indexing that will be suitable for out purpose.

\begin{definition}
In the Matrix Augmented Indexing problem $MatAugInd(k,m,\delta)$, Alice gets a matrix $A$ of order $k\times m$, whose entries are elements of a universe of size $1/\delta$.
Bob gets indices $i\in[k]$ and $j\in[m]$, an element $e$, and the elements $A(i,j')$ for every $j<j'$.
Bob needs to decide whether $e=A(i,j)$, and succeed with probability $1-\delta$.
\end{definition}

This problem is clearly at least as difficult as $AugInd(km,\delta)$ from Definition~\ref{def:augind}, since in the latter Bob gets more information (namely, if we arrange the vector $A$ in $AugInd(km,\delta)$ as a $k\times m$ matrix, then 
Bob gets all entries of $A$ which lexicographically precede $A(i,j)$).
We get the following immediate corollary from Lemma~\ref{lmm:augind}.

\begin{corollary}\label{cor:mataugind}
In the abortion model,
the one-way communication complexity of $MatAugInd(k,m,\delta)$ is $\Omega(km\log(1/\delta))$.
\end{corollary}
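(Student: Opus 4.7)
The plan is to prove the corollary by a direct reduction from $AugInd(km, \delta)$ to $MatAugInd(k, m, \delta)$, formalizing the ``Bob strictly sees more'' observation made in the paragraph just above the statement. Since Lemma \ref{lmm:augind} gives the abortion-model lower bound $\Omega(k' \log(1/\delta))$ for $AugInd(k', \delta)$, instantiating $k' = km$ yields the desired $\Omega(km \log(1/\delta))$, provided the reduction preserves communication, aborts, and errors.

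For the reduction, consider an instance of $AugInd(km, \delta)$ in which Alice holds a vector $A$ of length $km$ and Bob holds an index $\ell \in [km]$, a candidate element $e$, and all prior entries $\{A(\ell') : \ell' < \ell\}$. Alice reshapes $A$ into a $k \times m$ matrix $A'$ by row-major order, $A'(i, j) := A((i-1)m + j)$, so Bob's index $\ell$ corresponds to a unique pair $(i, j)$. Then every entry $A'(i, j')$ with $j' < j$ sits at position $(i-1)m + j' < \ell$ of the original vector and is therefore already known to Bob; the earlier rows $\{A'(i', \cdot) : i' < i\}$ that Bob also receives are simply discarded. Alice and Bob now execute the assumed protocol for $MatAugInd(k, m, \delta)$ on $(A', i, j, e)$, relaying messages verbatim. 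Both the communication cost and the probabilities of abort and of error are inherited unchanged, so the lower bound transfers.

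The only technicality is that the two problems use different universe sizes, namely $20/\delta$ for $AugInd$ and $1/\delta$ for $MatAugInd$. To embed the former inside the latter, I would invoke $MatAugInd(k, m, \delta/20)$, whose universe has size $20/\delta$ and whose error requirement is only stronger. The reduction then forces $\Omega(km \log(1/\delta))$ communication for $MatAugInd(k, m, \delta/20)$, and since replacing $\delta$ by $\delta/20$ shifts $\log(1/\delta)$ by only an additive constant, the stated bound holds verbatim for $MatAugInd(k, m, \delta)$. I do not foresee a real obstacle; the main point that deserves attention is simply that the reduction is carried out in the abortion model rather than the standard one-way model, but since Alice and Bob merely relay messages of the assumed $MatAugInd$ protocol without alteration, abort events carry over with the same probability and the application of Lemma \ref{lmm:augind} is automatic.
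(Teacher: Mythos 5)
Your core argument is exactly the one the paper uses: reshape the $AugInd(km,\delta)$ vector into a $k\times m$ matrix, observe that Bob's $AugInd$ side information strictly contains the $MatAugInd$ side information, relay the $MatAugInd$ protocol's messages unchanged, and invoke Lemma~\ref{lmm:augind}. That matches the paper's one-line ``immediate corollary'' reasoning, and your observation that aborts pass through the relay untouched correctly handles the abortion model.

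Your attempt to patch the universe-size mismatch, however, runs in the wrong direction. You reduce $AugInd(km,\delta)$ (universe $20/\delta$) to $MatAugInd(k,m,\delta/20)$ (universe $20/\delta$), and thereby obtain a lower bound for $MatAugInd(k,m,\delta/20)$. But $MatAugInd(k,m,\delta/20)$ is a \emph{harder} problem than $MatAugInd(k,m,\delta)$ — it has both a larger universe and a stricter error requirement — so a lower bound on it does not transfer ``verbatim'' to the easier $MatAugInd(k,m,\delta)$; that inference goes the wrong way. The clean fix is to start from a protocol $\Pi$ for $MatAugInd(k,m,\delta)$, whose universe has size $1/\delta$, and reduce $AugInd(km,20\delta)$ to it: the $AugInd(km,20\delta)$ universe has size $20/(20\delta)=1/\delta$, so it embeds exactly, and $\Pi$'s error $\delta\le 20\delta$ meets the requirement. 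Lemma~\ref{lmm:augind} then gives $\Omega(km\log(1/(20\delta)))=\Omega(km\log(1/\delta))$ bits for $\Pi$. (The paper silently ignores this constant-factor mismatch, so you were right to worry about it — just flip the direction.)
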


\cite{molinaro2013beating} reformulate Augmented Indexing so that Alice's input is a set instead of vector.
Similary, we reformulate Matrix Augmented Indexing as follows.

\begin{definition}
Let $m>0$ and $k>0$ be integers, and $\delta\in(0,1)$.
Partition the interval $[m/\delta]$ into $m$ intervals $I_1,\ldots,I_m$ of size $1/\delta$ each.

In the Augmented Set List problem $AugSetList(k,m,\delta)$, Alice gets a list of subsets $S_1,\ldots,S_k\subset [m/\delta]$, such that each $S_i$ has size exactly $m$ and contains exactly one element from each interval $I_1,\ldots,I_m$.
Bob gets an index $i\in[k]$, an element $e\in[m/\delta]$ and a subset $T$ of $S_i$ that contains exactly the elements of $S_i$ that are smaller than $e$.
Bob needs to decide whether $e\in S_i$, and succeed with probability at least $1-\delta$.
\end{definition}

The equivalence to Matrix Augmented Indexing is not hard to show; the details are similar to~\cite{molinaro2013beating} and we omit them here. By the equivalence, we get the following corollary from Corollary~\ref{cor:mataugind}.
\begin{corollary}\label{cor:setlist}
In the abortion model,
the one-way communication complexity of $AugSetList(k,m,\delta)$ is $\Omega(km\log(1/\delta))$.
\end{corollary}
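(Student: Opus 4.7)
The plan is to establish Corollary \ref{cor:setlist} by exhibiting an explicit reduction from $MatAugInd(k,m,\delta)$ to $AugSetList(k,m,\delta)$, which, combined with Corollary \ref{cor:mataugind}, yields the claimed $\Omega(km\log(1/\delta))$ lower bound in the abortion model. The reduction is essentially a re-encoding: a row of Alice's matrix in the matrix-augmented-indexing problem becomes a single set in the set-list problem, with the $j$-th matrix coordinate encoded as the chosen element from the $j$-th interval.

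Concretely, given an instance of $MatAugInd(k,m,\delta)$ with Alice's matrix $A\in[1/\delta]^{k\times m}$, Alice constructs sets $S_1,\ldots,S_k\subset[m/\delta]$ by setting
\[
  S_i \;:=\; \bigl\{\,(j-1)/\delta + A(i,j) \;:\; j\in[m]\,\bigr\},
\]
so that $S_i$ has exactly one element in each interval $I_j=\{(j-1)/\delta+1,\ldots,j/\delta\}$, as required. Bob, who in the matrix problem holds indices $i,j$, a candidate element $e$, and the entries $\{A(i,j'):j'<j\}$, translates his input to the set-list problem by keeping the same row index $i$, defining the query element $e':=(j-1)/\delta+e$, and forming
\[
  T \;:=\; \bigl\{\,(j'-1)/\delta + A(i,j') \;:\; j'<j\,\bigr\}.
\]
A direct check shows that $T$ contains exactly the elements of $S_i$ smaller than $e'$, matching the $AugSetList$ specification. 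Moreover $e\in\{A(i,j)\}$ iff $e'\in S_i$, so any protocol solving $AugSetList(k,m,\delta)$ in the abortion model immediately yields one for $MatAugInd(k,m,\delta)$ with the same communication, error, and abort guarantees.

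The main step that requires care is simply verifying that the reduction is information-preserving on Bob's side: the set $T$ truly encodes all and only the entries $A(i,j')$ for $j'<j$, and the test $e'\in S_i$ correctly decides $e=A(i,j)$ since $e'$ lies in $I_j$ and $S_i\cap I_j=\{(j-1)/\delta+A(i,j)\}$. Once this is in place, the communication lower bound transfers verbatim: by Corollary \ref{cor:mataugind}, any abortion-model protocol for $MatAugInd(k,m,\delta)$ uses $\Omega(km\log(1/\delta))$ bits, so the same bound applies to $AugSetList(k,m,\delta)$. I do not foresee a substantive obstacle beyond bookkeeping of indices; the construction is a straightforward relabeling and no probabilistic argument is needed on top of what is already used for $MatAugInd$.
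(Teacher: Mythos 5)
Your high-level approach is exactly the one the paper takes: observe that $AugSetList(k,m,\delta)$ is essentially a re-encoding of $MatAugInd(k,m,\delta)$, re-encode Alice's matrix row-by-row as a set with one element per interval, and let the lower bound of Corollary~\ref{cor:mataugind} transfer. The paper itself omits all details of this equivalence, deferring to~\cite{molinaro2013beating}, so filling them in is exactly what's needed.

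However, there is a genuine gap in your ``direct check,'' and it is not mere bookkeeping. You claim that $T := \{(j'-1)/\delta + A(i,j') : j'<j\}$ is exactly the set of elements of $S_i$ smaller than $e' := (j-1)/\delta + e$. This fails in the case $A(i,j) < e$: then the element $(j-1)/\delta + A(i,j) \in S_i$ lies in interval $I_j$ and is strictly smaller than $e'$, so the correct $T$ of the $AugSetList$ definition must contain it --- but Bob in $MatAugInd$ does not know $A(i,j)$ and therefore cannot include it. Consequently the triple $(i,e',T)$ you hand to the $AugSetList$ protocol is not a valid input (the promise on $T$ is violated), and nothing constrains the protocol's behavior on it. Note this is not an artifact of your particular encoding: any order-monotone re-encoding of $A(i,j)$ into $I_j$ has the symmetric problem (either for $A(i,j)<e$ or for $A(i,j)>e$), because whether the $I_j$-element of $S_i$ lands below $e'$ depends on exactly the information Bob lacks.

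To close the gap you need one of the following: (a) observe that when $A(i,j)<e$ the $AugSetList$ instance is trivial for Bob (he sees an $I_j$-element in $T$ and can output ``NO'' with zero communication), so the lower bound only needs to be established on the promise $e \le A(i,j)$, and then argue that $MatAugInd$ remains $\Omega(km\log(1/\delta))$-hard under that promise; or (b) use the formulation of $T$ that~\cite{molinaro2013beating} actually works with, namely $T$ is the set of elements of $S_i$ lying in intervals $I_1,\ldots,I_{j-1}$ (equivalently, elements of $S_i$ below the \emph{interval} containing $e$, not below $e$ itself), which Bob can always compute from the augmented prefix $\{A(i,j'):j'<j\}$; under that reading your construction is literally correct. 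As written, though, the statement ``$T$ contains exactly the elements of $S_i$ smaller than $e'$'' is false on a large fraction of inputs and the reduction does not go through.
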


Next we define the $q$-fold version of the same problem.
\begin{definition}\label{def:qaugsetlist}
In the problem $q$-$AugSetList(k,m,\delta)$, Alice and Bob get $q$ instances of AugSetList$(k,m,\delta/q)$, and Bob needs to answer correctly on all of them simoultaneously with probability at least $1-\delta$.
\end{definition}

The main tehcnical result of~\cite{molinaro2013beating} is, loosely speaking, a direct-sum theorem which lifts a lower bound in the abortion model to a $q$-fold lower bound in the usual model.
Applying their theorem to Corollary~\ref{cor:setlist}, we obtain the following.
\begin{corollary}\label{cor:augsetlist}
The one-way communication complexity of $q$-$AugSetList(k,m,\delta)$ is $\Omega(qkm\log(q/\delta))$.
\end{corollary}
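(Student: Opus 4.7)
\medskip

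The plan is to derive Corollary~\ref{cor:augsetlist} by invoking the direct-sum theorem of~\cite{molinaro2013beating} as a black box, applied to the single-instance bound already stated as Corollary~\ref{cor:setlist}. The mental picture is: the abortion model exists precisely so that per-instance lower bounds \emph{lift} additively under $q$-fold composition without losing the usual $\log(1/q)$ factor that one would incur from a naive union bound. Corollary~\ref{cor:setlist} supplies the per-instance ingredient; Molinaro--Woodruff--Yaroslavtsev supplies the lifting machinery.

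Concretely, I would proceed as follows. First, rescale the per-instance failure probability: by Definition~\ref{def:qaugsetlist}, each of the $q$ component instances of $q$-$AugSetList(k,m,\delta)$ is an instance of $AugSetList(k,m,\delta/q)$. Plugging $\delta/q$ into Corollary~\ref{cor:setlist}, the one-way communication complexity of a single such component in the abortion model is $\Omega(km\log(q/\delta))$. Second, invoke the direct-sum theorem of~\cite{molinaro2013beating}: if a problem $\Pi$ has one-way communication complexity $T$ in the abortion model with sufficiently small error, then the $q$-fold version of $\Pi$, where Bob must answer all $q$ instances correctly simultaneously with probability $1-\delta$ in the standard model, has one-way communication complexity $\Omega(qT)$. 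Applied here with $T=\Omega(km\log(q/\delta))$, this yields the claimed $\Omega(qkm\log(q/\delta))$ bound.

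The only mildly nontrivial step is checking that the abortion-model parameters one gets from Corollary~\ref{cor:setlist} genuinely satisfy the hypotheses of~\cite{molinaro2013beating}'s lifting theorem (which is stated there for a specific regime of abortion probability versus error probability). I would spell out that the abortion parameter in Corollary~\ref{cor:setlist} is a fixed small constant while the error probability is $\delta/q$, which sits within their admissible regime, and that the component problems are identical and independent across the $q$ copies, which is the product structure their theorem requires. Once that is verified, the derivation is essentially a one-line substitution and there is no additional calculation to perform. I do not anticipate a real obstacle; the work has been done upstream, and the role of this corollary is simply to package the direct-sum output in the form needed by the subsequent reduction to all-cross-distances in the proof of Lemma~\ref{lmm:lb2}.
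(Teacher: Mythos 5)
Your proposal matches the paper's argument: the paper likewise derives this corollary by applying the Molinaro--Woodruff--Yaroslavtsev direct-sum theorem to the abortion-model bound in Corollary~\ref{cor:setlist}, with the $\delta\mapsto\delta/q$ rescaling implicit in Definition~\ref{def:qaugsetlist}. Your version spells out that rescaling explicitly, which is a helpful clarification rather than a departure.
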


Finally, we construct a ``generalized augmented indexing'' problem over $r$ copies of the above problem.
\begin{definition}
In the problem $r$-$Ind(q$-$AugSetList(k,m,\delta))$,
Alice gets $r$ instances $A_1,\ldots,A_r$ of $q$-$AugSetList(k,m,\delta)$.
Bob gets an index $j\in[r]$, his part $B_j$ of instance $j$, and Alice's instances $A_1,\ldots,A_{j-1}$.
Bob needs to solve instance $j$ with success probability at least $1-\delta$.
\end{definition}

By standard direct sum results in communication complexity (reproduced in~\cite{molinaro2013beating}) we obtain from Corollary~\ref{cor:augsetlist} the final lower bound we need.
\begin{proposition}\label{prp:lb}
The one-way communication complexity of $r$-$Ind(q$-$AugSetList(k,m,\delta))$ is $\Omega(rqkm\log(q/\delta))$.
\end{proposition}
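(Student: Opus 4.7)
The plan is to deduce Proposition~\ref{prp:lb} from Corollary~\ref{cor:augsetlist} by a standard information-theoretic direct sum argument of the kind reproduced in~\cite{molinaro2013beating}. Consider any one-way protocol $\Pi$ for $r$-$Ind(q$-$AugSetList(k,m,\delta))$ in which Alice sends an $s$-bit message $M$; the goal is to show $s = \Omega(rqkm\log(q/\delta))$. I would take the hard input distribution $\mu$ on a single $q$-$AugSetList(k,m,\delta)$ instance certifying Corollary~\ref{cor:augsetlist}, and draw Alice's inputs $A_1,\ldots,A_r$ i.i.d.\ from $\mu$.

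Next I would apply the chain rule to Alice's message:
\[
\sum_{j=1}^r I(M ; A_j \mid A_1,\ldots,A_{j-1}) \;=\; I(M ; A_1,\ldots,A_r) \;\leq\; H(M) \;\leq\; s.
\]
Hence the average over $j\in[r]$ of $I(M;A_j\mid A_1,\ldots,A_{j-1})$ is at most $s/r$. At the same time, $\Pi$ must succeed on each fixed index $j$ with probability at least $1-\delta$. Combining these two facts by an averaging argument, one extracts a single-instance protocol for $q$-$AugSetList(k,m,\delta)$ whose information cost under $\mu$ is $O(s/r)$: Alice and Bob use public randomness to sample the ``prior'' inputs $A_1,\ldots,A_{j-1}$ (which Bob is allowed to know for free), Alice privately samples the ``future'' inputs $A_{j+1},\ldots,A_r$, embeds the true single-instance input as $A_j$, and then they invoke $\Pi$ on the resulting composite input together with Bob's query index $j$.

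The key point that closes the argument is that the lower bound of Corollary~\ref{cor:augsetlist} is in fact proved in~\cite{molinaro2013beating} at the level of information complexity (this is precisely what the abortion model buys: a single-copy bound that composes under direct sum). Plugging this in yields $s/r = \Omega(qkm\log(q/\delta))$, and therefore $s = \Omega(rqkm\log(q/\delta))$. The main obstacle in the overall chain of lower bounds is the abortion-model bound underlying Corollary~\ref{cor:augsetlist}; the indexing direct sum used here is a routine information-theoretic manipulation, following the standard template of~\cite{jayram2013optimal,molinaro2013beating}, so once the abortion-model bound is granted the proposition follows with no further combinatorial work.
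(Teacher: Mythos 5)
Your proposal is essentially the argument the paper is gesturing at when it writes ``by standard direct sum results in communication complexity (reproduced in \cite{molinaro2013beating}),'' and it fills in the right details. The chain rule $\sum_{j} I(M;A_j\mid A_{<j}) \le H(M) \le s$, the averaging to find a good $j$, and the embedding (public randomness for $A_{<j}$, private sampling of $A_{>j}$) are exactly the standard augmented-indexing direct-sum template, and they are correct here. One small check you implicitly rely on and should make explicit: the quantity you bound is $I(M;A_j\mid A_{<j})$, whereas the single-instance information cost is $I(M;A_j\mid A_{<j},B_j)$; these coincide (up to the right inequality) because $M$ is a function of Alice's inputs and private randomness alone, hence independent of $B_j$ given $A_j$.

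The one point that deserves more care than your phrasing suggests is the final step, where you write that Corollary~\ref{cor:augsetlist} ``is in fact proved at the level of information complexity.'' As stated in the paper, Corollary~\ref{cor:augsetlist} is a \emph{communication} lower bound, and communication lower bounds do not in general transfer to information-cost lower bounds, so you cannot simply plug it into a chain-rule/information argument. What actually makes the argument go through is that the entire Molinaro--Woodruff--Yaroslavtsev machinery is information-theoretic: the abortion-complexity bound of Corollary~\ref{cor:setlist} lower-bounds the \emph{information cost} of any low-error single-copy protocol, and the direct-sum theorem composes that information-cost bound to give an information-cost (and hence communication) bound for the $q$-fold problem. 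So the correct citation is not Corollary~\ref{cor:augsetlist} as a black-box communication bound, but the information-cost bound implicit in its proof (equivalently, one more application of the Molinaro et al.\ composition to the abortion bound of Corollary~\ref{cor:setlist}). This is exactly what you assert in the parenthetical, but the logic would be cleaner if you invoked the abortion-model bound directly rather than claiming the corollary itself is ``really'' an information bound. With that clarification, your route matches the paper's intended derivation.
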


\subsubsection{Reductions to All-Cross-Distances}

We now prove Lemma~\ref{lmm:lb2}, by reducing $r$-$Ind(q$-$AugSetList(k,m,\delta))$ to the all-cross-distances problem.
We will use two reductions, to get a lower bound once in terms of $d$ and once in terms of $\Phi$.
Specifically, in the first reduction
we will set $m=1/\epsilon^2$, $k=n/q$ and $r=\rho\log d$ (where $\rho$ is the constant from the statement of Lemma~\ref{lmm:lb2}). Then the lower bound we would get by Proposition~\ref{prp:lb} is $\Omega\left(\epsilon^{-2}n\log d\log(q/\delta)\right)$.
In the second reduction we will set $r=\rho\log\Phi$, yielding the lower bound $\Omega\left(\epsilon^{-2}n\log\Phi\log(q/\delta)\right)$.
Together they lead to Lemma~\ref{lmm:lb2}.

In both settings, recall we are reducing to the following problem: For dimension $d=\Omega(\epsilon^{-2}\log(q/\delta))$ and aspect ratio $\Phi$, Alice gets $n$ points, Bob gets $q$ points, and Bob needs to estimate all cross-distances up to distortion $1\pm\epsilon$.

Consider an instance of $r$-$Ind(q$-$AugSetList(k,m,\delta))$.
It can be visualized as follows: Alice gets a matrix $S$ with $n=qk$ rows and $r$ columns, where each entry contains a set of size $m$.
Bob gets an index $j\in[r]$, indices $i_1,\ldots,i_q\in[k]$, elements $e_1,\ldots,e_q$, subsets $T_1\subset S(i_1,j),\ldots, T_q\subset S(i_q,j)$, and the first $j-1$ columns of the matrix $S$.

We now use the encoding scheme of~\cite{jayram2013optimal}, in the set formulation which was given in~\cite{molinaro2013beating}. We restate the result.
\begin{lemma}[\cite{jayram2013optimal}]\label{lmm:jw}
Let $m=1/\epsilon^2$ and $0<\eta<1$. Suppose we have the following setting:
\begin{itemize}
  \item Alice has subsets $S_1,\ldots,S_r$ of $[m/\eta]$.
  \item Bob has an index $j\in[r]$, an element $e\in[m/\eta]$, the subset $T\subset S_j$ of elements smaller than $e$, and the sets $S_1,\ldots,S_{j-1}$.
\end{itemize}
There is a shared-randomness mapping of their inputs into points $v_A,v_B$ and a scale $\Psi>0$ (the scale is known to both), such that
\begin{enumerate}
  \item $v_A,v_B\in\B^D$ for $D=O(\epsilon^{-2}\log(\frac1\eta)\exp(r))$.
  \item If $e\in S_j$ (YES instance) then w.p.~$1-\eta$, $\norm{v_A-v_B}^2 \leq (1-2\epsilon)\Psi$.
  \item If $e\notin S_j$ (NO instance) then w.p.~$1-\eta$, $\norm{v_A-v_B}^2 \geq (1-\epsilon)\Psi$
\end{enumerate}
\end{lemma}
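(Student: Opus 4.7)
The plan is to reproduce the Jayram--Woodruff encoding for leveled augmented indexing, a standard ingredient for $\ell_2$-distance sketching lower bounds at $(1+\epsilon)$-distortion. The construction builds $v_A$ and $v_B$ as concatenations of $r$ binary blocks, one per level, so that Bob's side information makes early blocks match Alice's exactly, the target block encodes the single-bit query with a Gap-Hamming-style $\epsilon$-biased signal, and later blocks contribute only geometrically decaying noise.

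First, I would set up a per-level randomized encoding $\psi$, driven by shared randomness, that maps a subset $S\subseteq[m/\eta]$ (equivalently, under the AugSetList structure, a symbol sequence of length $m$ over alphabet $[1/\eta]$) into a binary block of length $b=\Theta(\epsilon^{-2}\log(1/\eta))$. The desired property is that if Alice's input is $S$ and Bob's reconstruction is $\tilde S$ (built from his guess of the query bit), then $\norm{\psi(S)-\psi(\tilde S)}^2$ concentrates around a known mean with additive deviation at most $O(\epsilon b)$ with failure probability $\eta$, and the mean itself shifts by a deterministic $\Theta(\epsilon b)$ between the YES and NO cases. This is a Gap-Hamming reduction from augmented indexing to $\ell_2$-distance estimation: the $m=1/\epsilon^2$ scaling converts a single-symbol change into an $\epsilon$-relative bias, and the $\log(1/\eta)$ factor in $b$ is the Chernoff overhead for the concentration.

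Second, I would stack the per-level encodings into blocks of geometrically decreasing size. Block $i$ has length $b_i=b\cdot\beta^{r-i}$ for a sufficiently large constant $\beta$ (obtained by repeating each coordinate $\beta^{r-i}$ times so the output remains Boolean), giving total dimension $D=O(b\cdot\beta^r)=O(\epsilon^{-2}\log(1/\eta)\exp(r))$. Alice places $\psi(S_i)$ at block $i$. Bob matches Alice exactly at blocks $i<j$ using his knowledge of $S_1,\dots,S_{j-1}$, contributing zero; at block $j$ he uses $\psi(\tilde S_j)$ built from $T$, $e$, and his guess, so that $\psi(S_j)-\psi(\tilde S_j)$ carries the single-bit signal; at blocks $i>j$ he places a default string $\psi(\emptyset)$, contributing a block-size-proportional noise term. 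The scale $\Psi$ is taken to be the common midpoint between the expected squared distances in the YES and NO cases, which depends only on $j$ and publicly known structural parameters and is therefore shared.

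The main obstacle is to verify two simultaneous concentration claims with probability $1-\eta$: \emph{(i)} the single-element signal at block $j$ produces an expected gap of at least $2\epsilon\Psi$ between the YES and NO cases; and \emph{(ii)} the random fluctuations --- the per-block concentration error at block $j$ together with the noise contributions from blocks $i>j$ --- are simultaneously at most $\epsilon\Psi/2$. Claim \emph{(ii)} relies on the geometric decay of block sizes, which gives $\sum_{i>j} b_i = O(b_j/(\beta-1))$, a small constant fraction of block $j$ once $\beta$ is large enough; a union bound over the $r$ blocks is absorbed into the $\log(1/\eta)$ factor by mildly inflating $b$. Combining \emph{(i)} and \emph{(ii)} yields $\norm{v_A-v_B}^2\leq(1-2\epsilon)\Psi$ in the YES case and $\norm{v_A-v_B}^2\geq(1-\epsilon)\Psi$ in the NO case.
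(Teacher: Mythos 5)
This lemma is not proved in the paper at all; it is quoted directly from Jayram--Woodruff as restated by Molinaro--Woodruff, so there is no in-paper proof to compare your proposal against. That said, your reconstruction captures the broad shape of the JW argument (concatenated blocks, geometric scaling across blocks so that the dimension is $O(\epsilon^{-2}\log(1/\eta)\exp(r))$, Bob cancelling blocks $i<j$ exactly and treating blocks $i>j$ as controlled noise), so it is on the right track at the level of the outer composition.

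The genuine gap is in the per-block encoding $\psi$, which you treat as a black box with two asserted properties that are in fact the technical heart of the lemma. You describe Bob's reconstruction $\tilde S_j$ as ``built from his guess of the query bit,'' as though $\tilde S_j$ can differ from $S_j$ in a single element. But in the AugSetList structure Bob only receives $T$, the elements of $S_j$ smaller than $e$; the $m-|T|$ elements of $S_j$ in intervals at or beyond the one containing $e$ are unknown to him, so $\tilde S_j$ necessarily disagrees with $S_j$ in up to $\Theta(m)=\Theta(\epsilon^{-2})$ positions, not one. With a naive concatenation-of-codewords $\psi$ (a random $\Theta(\log(1/\eta))$-bit codeword per interval), the mean squared distance at block $j$ is $\Theta((m-|T|)\cdot\log(1/\eta))$ and the YES/NO mean shift from the single disputed interval is only $\Theta(\log(1/\eta))$, i.e.\ a $\Theta(1/(m-|T|))$ relative gap, which degrades to $\Theta(\epsilon^2)$ rather than the claimed $\Theta(\epsilon)$ when $|T|$ is small. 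The whole point of the JW construction is a Gap-Hamming/sign-majority style $\psi$ that makes a one-interval disagreement register as a $\Theta(\epsilon b)$ shift against a $\Theta(b)$ baseline uniformly in $|T|$, and this is exactly what your sketch does not construct or justify. Relatedly, for $\Psi$ to be ``known to both'' as the lemma requires, the contribution of the unknown blocks $i>j$ and of the unknown tail of $S_j$ must concentrate around a mean that is a deterministic function of public parameters ($j$, $|T|$, the block lengths) rather than of Alice's private sets; your proposal gestures at ``a known mean'' but never explains why Bob can compute it. Until $\psi$ and this determinism are spelled out, the proposal is a plausible outline of the composition step but does not yet constitute a proof of the lemma.
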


\subsubsection{Lower Bound in terms of $d$}
We start with the first reduction that yields a lower bound in terms of $d$.
\begin{lemma}\label{lmm:lb2a}
Under the assumptions of Lemma~\ref{lmm:lb2}, for the all-cross-distances problem, Alice must use a sketch of at least $\Omega(\epsilon^{-2}n\log(d)\log(q/\delta))$ bits.
\end{lemma}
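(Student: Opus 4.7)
The plan is to reduce from the problem $r$-$Ind(q$-$AugSetList(k,m,\delta))$, using the Jayram--Woodruff encoding of Lemma~\ref{lmm:jw} to turn a set-valued instance into a geometric one. Specifically, I set $m=1/\epsilon^2$, $k=n/q$, and $r=\rho'\log d$ where $\rho'>0$ is a sufficiently small constant (smaller than the constant $\rho$ from the hypothesis of Lemma~\ref{lmm:lb2}). With these parameters, Proposition~\ref{prp:lb} gives a one-way communication lower bound of $\Omega(rqkm\log(q/\delta)) = \Omega(\epsilon^{-2}n\log(d)\log(q/\delta))$ bits, matching the target bound.

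The reduction itself is as follows. Alice views her input as a matrix $S$ with $n=qk$ rows and $r$ columns of sets. For each row $i\in[n]$, she applies Alice's side of the encoding in Lemma~\ref{lmm:jw}, with parameter $\eta=\Theta(\delta/q)$, to the $r$ sets in row $i$, producing a point $v_A^{(i)}\in\{0,1\}^D$. These $n$ binary vectors serve as her data set. Bob, given an index $j\in[r]$, indices $i_1,\ldots,i_q\in[k]$, elements $e_1,\ldots,e_q$, partial subsets $T_1,\ldots,T_q$, and all columns of $S$ before $j$, applies Bob's side of the encoding to each of his $q$ instances, producing query points $v_B^{(1)},\ldots,v_B^{(q)}\in\{0,1\}^D$ and the scale $\Psi$. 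He invokes the all-cross-distances sketch, which estimates each of the $q$ specific distances $\norm{v_A^{(i_t)}-v_B^{(t)}}$ (he can ignore the other $nq-q$ estimates) to within a factor $1\pm\epsilon'$ for $\epsilon'=\epsilon/C$ with a sufficiently large constant $C$. This accuracy suffices to distinguish the YES case $\norm{v_A^{(i_t)}-v_B^{(t)}}^2\leq(1-2\epsilon)\Psi$ from the NO case $\norm{v_A^{(i_t)}-v_B^{(t)}}^2\geq(1-\epsilon)\Psi$, and hence to answer the $t$-th instance of $q$-$AugSetList$. A union bound over the $q$ queries handles both the $\eta$-failures of Lemma~\ref{lmm:jw} and the $\delta$-failure of the sketch.

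The main technical step to verify is that the reduction actually lands in the ambient dimension $d$. By Lemma~\ref{lmm:jw}, $D = O(\epsilon^{-2}\log(1/\eta)\cdot\exp(r)) = O(\epsilon^{-2}\log(q/\delta)\cdot d^{O(\rho')})$. The hypothesis $d^{1-\rho}\geq\epsilon^{-2}\log(nq/\delta)$ from Lemma~\ref{lmm:lb2} gives $\epsilon^{-2}\log(q/\delta)\leq d^{1-\rho}$, so $D\leq d^{1-\rho+O(\rho')}$; choosing $\rho'$ small enough relative to $\rho$ ensures $D\leq d$ (and we can pad coordinates with zeros to fit exactly in $\R^d$). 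Since Alice's and Bob's points lie in $\{0,1\}^D$, the constraint $X,Y\subset\{-\Phi,\ldots,\Phi\}^d$ is satisfied for any $\Phi\geq1$. Thus any all-cross-distances sketch yields a protocol for $r$-$Ind(q$-$AugSetList(k,m,\delta))$ of the same size, and the claimed lower bound follows.

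The main obstacle I anticipate is bookkeeping: matching failure probabilities through the chain (per-query encoding failure $\eta$, sketch failure $\delta$, and $r$-Ind failure $\delta$) and ensuring the $(1\pm\epsilon)$ distortion of the sketch, once squared, still separates the two constant-multiplicative YES/NO cases of Lemma~\ref{lmm:jw}. Both are handled by setting $\eta=c\delta/q$ and running the sketch with accuracy $\epsilon/C$ and confidence $\delta/2$ for appropriate constants $c, C$, rescaling $\epsilon$ and $\delta$ by constants at the end.
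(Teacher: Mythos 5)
Your proof is correct and follows essentially the same route as the paper: reduce from $r$-$Ind(q$-$AugSetList(k,m,\delta))$ via the Jayram--Woodruff encoding (Lemma~\ref{lmm:jw}) with $m=1/\epsilon^2$, $k=n/q$, $r=\Theta(\log d)$, $\eta=\Theta(\delta/q)$, check that the encoding dimension $D$ fits within $d$ using the hypothesis $d^{1-\rho}\geq\epsilon^{-2}\log(nq/\delta)$, and union-bound the failure probabilities. Your extra care in choosing $\rho'$ small relative to $\rho$ and in rescaling $\epsilon$ and $\delta$ by constants is fine and matches what the paper implicitly does.
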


\begin{proof}
We invoke Lemma~\ref{lmm:jw} with $r=\rho\log d$ and $\eta=\delta/q$.
Note that the latter is the desired success probability in each instance of $q$-$AugSetList(k,m,\delta)$ (cf.~Definition~\ref{def:qaugsetlist}).
Alice encodes each row of the matrix, $(S(i,1),\ldots,S(i,r))$, into a point $x_i$, thus $n$ points $x_1,\ldots,x_n$.
Bob encodes $(S(i,1),\ldots,S(i_z,j-1),T_i,j,e_z)$ for each $z\in[q]$ into a point $y_z$, thus $q$ points $y_1,\ldots,y_z$.
For every $z\in[q]$, the problem represented by row $i_z$ in the matrix $S$ is reduced by Lemma~\ref{lmm:jw} to estimating the distance $\norm{x_{i_z}-y_z}$.
By Item 1 of Lemma~\ref{lmm:jw}, the points $\{x_i\}_{i\in[n]}$, $\{y_z\}_{z\in[q]}$ have binary coordinates and dimension $D=O(\epsilon^{-2}\log(q/\delta)d^\rho)$. 
By the hypothesis $d^{1-\rho}\geq\epsilon^{-2}\log(q/\delta)$ of Lemma~\ref{lmm:lb2}, $D=O(d)$.
Therefore Alice and Bob can now feed them into a given black-box solution of the all-cross-distances problem, which estimates all the required distances and solves $r$-$Ind(q$-$AugSetList(k,m,\delta))$.

Let us establish the success probability of the reduction.
Since we set $\eta=\delta/q$ in Lemma~\ref{lmm:jw}, it preserves each distance $\norm{x_{i_z}-y_z}$ for $z\in[q]$ with probability $1-\delta/q$. By a union bound, it preserves all of them simultaneously with probability $1-\delta$.
The success probability of the all-cross-distances problem, simultaneously on all query points $\{\tilde y_z:z\in[q]\}$, is again $1-\delta$. Altogether, the reduction succeeds with probability $1-O(\delta)$. As a result, the all-cross-distances problem solves the given instance of  $r$-$Ind(q$-$AugSetList(k,m,\delta))$, and Lemma~\ref{lmm:lb2a} follows.
\end{proof}

\subsubsection{Lower Bound in terms of $\Phi$}
We proceed to the second reduction that would yield a lower bound in terms of $\Phi$.
\begin{lemma}\label{lmm:lb2b}
Under the assumptions of Lemma~\ref{lmm:lb2}, for the all-cross-distances problem, Alice must use a sketch of at least $\Omega(\epsilon^{-2}n\log(\Phi)\log(q/\delta))$ bits.
\end{lemma}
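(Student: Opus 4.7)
The proof parallels that of Lemma~\ref{lmm:lb2a}, with the aspect ratio $\Phi$ now playing the role that the dimension $d$ played there as the resource supplying the extra logarithmic factor. Concretely, in Proposition~\ref{prp:lb} I would set $r=\rho\log\Phi$ (still with $qk=n$ and $m=1/\epsilon^2$), so that the communication lower bound for $r$-$Ind(q$-$AugSetList(k,m,\delta))$ becomes $\Omega(\epsilon^{-2}n\log\Phi\log(q/\delta))$. It then suffices to reduce this instance to the all-cross-distances problem in $\{-\Phi,\ldots,\Phi\}^d$.

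Na\"ively applying Lemma~\ref{lmm:jw} with $r=\rho\log\Phi$ fails, because the resulting dimension $O(\epsilon^{-2}\log(q/\delta)\exp(r))=O(\epsilon^{-2}\log(q/\delta)\Phi^\rho)$ far exceeds $d$. Instead, I would apply Lemma~\ref{lmm:jw} \emph{separately} to each of the $r$ columns of the Alice/Bob matrix, with a constant inner parameter (its own $r$ set to $1$) and $\eta=\delta/q$. This produces, for each row $i$ and column $\ell\in[r]$, binary vectors $v_A^{(\ell)}(i)\in\{0,1\}^D$ of dimension $D=O(\epsilon^{-2}\log(q/\delta))$. Alice's point for row $i$ is then the concatenation of her $r$ per-column vectors at geometrically decaying scales, with the earliest columns (the ones Bob recovers from the augmentation) placed at the largest scales:
\[
  x_i = \bigl(B^{r}\,v_A^{(1)}(i);\; B^{r-1}\,v_A^{(2)}(i);\; \ldots;\; B\cdot v_A^{(r)}(i)\bigr)\in\{0,1,\ldots,\Phi\}^{rD},
\]
for a constant base $B$; the dimension $rD$ fits into $d$ and the magnitude $B^r$ into $\Phi$ by the hypothesis $d^{1-\rho}\ge\epsilon^{-2}\log(q/\delta)$ after choosing $\rho$ small enough. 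Bob's $z$-th point for query $(i_z,j_z)$ copies Alice's true vectors at columns $\ell<j_z$, places his Jayram--Woodruff query vector $v_B^{(j_z)}(z)$ at column $j_z$, and zeros the unknown columns $\ell>j_z$.

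Computing $\norm{x_{i_z}-y_z}^2$ yields a \emph{signal} of scale $B^{2(r-j_z+1)}\Psi$ from the $j_z$-th block, whose YES/NO gap is $\Theta(\epsilon\Psi)$ times that scale, plus a \emph{noise} contribution $\sum_{\ell>j_z}B^{2(r-\ell+1)}\norm{v_A^{(\ell)}(i_z)}^2$ from the unknown blocks, which the geometric decay bounds by $O(B^{-2})$ times the signal (up to the factor $\norm{v_A^{(\ell)}}^2/\Psi$). Controlling this noise is the main technical obstacle: since each $v_A^{(\ell)}$ is binary of dimension $D$, its squared norm can be as large as $D=\Theta(\epsilon^{-2}\log(q/\delta))$, which is $\Theta(\log(q/\delta))$ times $\Psi=\Theta(\epsilon^{-2})$. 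I would handle this either by taking $B$ a sufficiently large constant (so that $r=\Theta(\log\Phi)$ is preserved) or by exploiting the concrete structure of the Jayram--Woodruff construction to show that $\norm{v_A^{(\ell)}(i_z)}^2$ is determined up to concentration by publicly available quantities and can be subtracted off by Bob. Once the noise is controlled so that the signal's $\epsilon$-gap is detectable from the sketch's $(1\pm\epsilon)$-distortion estimate (at worst after rescaling $\epsilon$ by a constant), a union bound over the $q$ queries together with Proposition~\ref{prp:lb} delivers the claimed lower bound $\Omega(\epsilon^{-2}n\log\Phi\log(q/\delta))$.
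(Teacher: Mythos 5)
Your approach is genuinely different from the paper's, and it runs into real technical difficulties that the paper's route avoids. The paper applies Lemma~\ref{lmm:jw} \emph{once} with $r=\rho\log\Phi$ and $\eta=\delta/q$, obtaining high-dimensional \emph{binary} points of dimension $O(\epsilon^{-2}\log(q/\delta)\Phi^\rho)$; it then assumes WLOG that $\Phi\geq d$ (else Lemma~\ref{lmm:lb2a} already gives the claim), projects Alice's and Bob's points down to $O(\epsilon^{-2}\log(q/\delta))=O(d)$ coordinates via a shared-randomness Johnson--Lindenstrauss map, and checks that the resulting coordinate magnitudes are $O(\epsilon^{-2}\log(q/\delta)\Phi^\rho)=O(\Phi)$ using $\Phi^{1-\rho}\geq\Omega(\epsilon^{-2}\log(q/\delta))$. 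The exponential scale structure you try to build by hand is already internal to Lemma~\ref{lmm:jw} --- that is precisely why its output dimension carries the $\exp(r)$ factor --- and the JL step then caps the dimension at $O(d)$ regardless of how large $\Phi$ is.

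Your construction has two gaps. First, the dimension of your concatenated points is $rD=\Theta(\log\Phi\cdot\epsilon^{-2}\log(q/\delta))$, and the hypothesis $d^{1-\rho}\geq\epsilon^{-2}\log(q/\delta)$ only gives $rD=O(\rho\log\Phi\cdot d^{1-\rho})$; the lemma places no upper bound on $\Phi$, so $\log\Phi$ can exceed $d^\rho$ and then $rD>d$. You would need a JL projection anyway, and since your pre-projection coordinates already have magnitude $B^r$ (not $O(1)$ as in the paper's binary points), the post-projection magnitude bound is not automatic and must be re-derived. Second, and as you acknowledge, the cross-block noise $\sum_{\ell>j}B^{2(r-\ell+1)}\|v_A^{(\ell)}\|^2$ is potentially $\Theta(D/\Psi)=\Theta(\log(q/\delta))$ times the per-block signal magnitude, while the YES/NO gap in block $j$ is only a $\Theta(\epsilon)$ fraction of that signal. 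Taking $B=\Omega(\sqrt{\epsilon^{-1}\log(q/\delta)})$ to suppress the noise forces $r=O(\log\Phi/\log(\epsilon^{-1}\log(q/\delta)))$, which loses exactly the logarithmic factor you are trying to gain; the alternative, that $\|v_A^{(\ell)}\|^2$ concentrates tightly enough for Bob to subtract it off, does not follow from the statement of Lemma~\ref{lmm:jw} and would have to be established from the internals of the Jayram--Woodruff encoding. Until both points are resolved, the argument does not yield the claimed $\Omega(\epsilon^{-2}n\log\Phi\log(q/\delta))$ bound; the paper's single-application-plus-JL route sidesteps both issues.
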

\begin{proof}
We may assume that $\Phi\geq d$ since otherwise Lemma~\ref{lmm:lb2b} already follows from Lemma~\ref{lmm:lb2a}. Therefore $\Phi^{1-\rho}\geq\epsilon^{-2}\log(q/\delta)$.

The reduction is very similar to the one in Lemma~\ref{lmm:lb2a}.
Again we evoke Lemma~\ref{lmm:jw} with $\eta=\delta/q$, but this time we set $r=\rho\log\Phi$.
Again we denote Alice's encoded points by $x_1,\ldots ,x_n$, and Bob's by $y_1,\ldots,y_q$.
By Item 1 of Lemma~\ref{lmm:jw}, the points have binary coordinates and dimension $D=O(\epsilon^{-2}\log(q/\delta)\Phi^\rho)$. The difference from Lemma~\ref{lmm:lb2a} is that since it is possible that $\Phi\gg d$, the dimension $D$ is too large for the given black-box solution of the all-cross-distances problem (which is limited to dimension $O(d)$).

To solve this, Alice and Bob project their points into dimension $D'=O(\epsilon^{-2}\log(q/\delta))$ by a Johnson-Lindenstrauss transform, using shared randomness.
Let $\tilde x_1,\ldots,\tilde x_n$ and $\tilde y_1,\ldots,\tilde y_z$ denote the projected points.
After the projection each coordinate has magnitude at most $O(\epsilon^{-2}\log(q/\delta)\Phi^\rho)$.
By our assumption $\Phi^{1-\rho}\geq\Omega(\epsilon^{-2}\log(q/\delta))$, this is at most $O(\Phi)$.
Since the dimension $D'$ is $O(d)$, Alice and Bob can now feed $\tilde x_1,\ldots,\tilde x_n$ and $\tilde y_1,\ldots,\tilde y_z$ into a given black-box solution of the all-cross-distances problem with dimension $O(d)$ and aspect ratio $O(\Phi)$.

Let us establish the success probability of the reduction.
As before, Lemma~\ref{lmm:jw} preserves all the required distances, $\norm{x_{i_z}-y_z}$ for $z\in[q]$, with probability $1-\delta$.
The Johnson-Lindenstrauss transform into dimension $D'$ preserves each distance as $\norm{\tilde x_{i_z}-\tilde y_z}$ with probability at least $1-\delta$, since we picked the dimension to be $D'=O(\epsilon^{-2}\log(\frac{q}{\delta}))$.
The success probability of the all-cross-distances problem simultaneously is again $1-\delta$. Altogether, the reduction succeeds with probability $1-O(\delta)$. As a result, the all-cross-distances problem solves the given instance of  $r$-$Ind(q$-$AugSetList(k,m,\delta))$, and Lemma~\ref{lmm:lb2b} follows.
\end{proof}

\subsubsection{Conclusion}
Lemmas~\ref{lmm:lb2a} and~\ref{lmm:lb2b} together imply Lemma~\ref{lmm:lb2}.
The latter, together with Lemma~\ref{lmm:lb1}, implies~\Cref{thm:distances_lb}. \qed

\section{Practical Variant}\label{sec:middleout}
\cite{indyk2017practical} presented a simplified version of the sketch of~\cite{indyk2017near}, which is lossier by a factor $O(\log\log n)$ in the size bound (more precisely it uses $O(\epsilon^{-2}\log(n)(\log\log(n) + \log(1/\epsilon))+\log\log\Phi)$ bits per point; compare this to Table~\ref{tbl:sketches_related_work}), but on the other hand is practical to implement and was shown to work well empirically.
Both variants do not provably support out-of-sample queries.

In the main part of this work, we showed how to adapt the framework of~\cite{indyk2017near} to support out-of-sample queries with nearly optimal size bounds.
The goal of this section is to show that our techniques can also be applied in a simplified way to~\cite{indyk2017practical} in order to obtain a~\emph{practical} algorithm.
Specifically, focusing on the all-nearest-neighbors problem, we will show that a slight modification to~\cite{indyk2017practical} yields provable support in out-of-sample approximate nearest neighbor queries, with a size bound that is the same as in Theorem~\ref{thm:ann_ub} plus an additive $O(\epsilon^{-2}\log(n)\log\log(n))$ term.

\paragraph{Technique: Middle-out compression}
In~\cite{indyk2017practical}, every $1$-path is pruned (i.e.~replaced by a long edge) except for its top $\Lambda$ nodes, where $\Lambda$ is an integer parameter.
Combining this ``bottom-out'' compression with the ``top-out'' compression which was introduced in Section~\ref{sec:sketch}, we obtain~\emph{middle-out compression}: every long $1$-path longer than $2\Lambda$ is replaced by a long edge, except for its top and bottom $\Lambda$ nodes.
As we will show in the remainder of this section, applying this pruning rule to the quadtree of~\cite{indyk2017practical} (instead of their ``bottom-out'' rule) is sufficient to obtain a sketch that provably supports out-of-sample approximate nearest neighbor queries. Thus, the sketching algorithm is nearly unchanged.

We remark that in Section~\ref{sec:sketch} we introduced two additional modifications: \emph{grid-net quantization} and~\emph{surrogate hashing}. These were required in order to prove Theorems~\ref{thm:ann_ub} and~\ref{thm:distances_ub}, but in the framework of~\cite{indyk2017practical} they turn out to be unnecessary: grid-net quantization is already organically built into the quadtree approach of~\cite{indyk2017practical}, and surrogate hashing only served to avoid a $O(\log\log n)$ factor in the sketch size (see footnote 4), but in~\cite{indyk2017practical} this factor is tolerated anyway.

\subsection{Sketching Algorithm Recap}
For completeness, let us briefly describe the sketching algorithm of~\cite{indyk2017practical} (the reader is referred to that paper for more formal details), with our modification.
To this end, set
\[ \Lambda = \lceil\log\left(\frac{16d^{1.5}\log\Phi}{\epsilon\delta}\right)\rceil . \]
Suppose w.l.o.g.~that $\Phi$ is a power of $2$. The sketching algorithm proceeds in three steps:
\begin{enumerate}
  \item\emph{Random shifted grids:} Impose a randomly shifted enclosing hypercube on the data points $X$. More precisely, choose a uniformly random shift $\sigma\in\{-\Phi,\ldots,\Phi\}^d$, and set the enclosing hypercube to be $H=[-2\Phi+\sigma_1,2\Phi+\sigma_1]\times[-2\Phi+\sigma_2,2\Phi+\sigma_2]\times\ldots\times[-2\Phi+\sigma_d,2\Phi+\sigma_d]$. Since $X\subset\{-\Phi,\ldots,\Phi\}^d$, it is indeed enclosed by $H$. We then half $H$ along every dimension to create a finer grid with $2^d$ cells, and proceed so (recursively halving every cell along every dimension) to create a hierarchy of nested grids, with $\log(4\Phi)+\Lambda$ hierarchy levels. The top level is numbered $\Phi+2$, which is the log the side length of $H$, and the next levels are decrementing, so  that the grid cells in level $\ell$ have side length $2^\ell$.
  \item\emph{Quadtree construction:} Construct the quadtree which is naturally associated with the nested grids: the root corresponds to $H$, its children correspond to the non-empty cells of the next grid in the hierarchy (a cell is non-empty if it contains a point in $X$), and so on.
Each tree edge is annotated by a bitstring of length $d$, that marks whether the child cell coincides with the bottom half (bit $0$) or the top half (bit $1$) of the parent cell in each dimension.
  \item\emph{Middle-out compression:} For every path of degree-$1$ tree nodes whose length is more than $2\Lambda$, we keep its top $\Lambda$ and bottom $\Lambda$, and replace its remaining middle portion by a long edge. This removes the edge annotations of the middle section (this achieving compression). We label each long edge with the length of the path it replaces.
\end{enumerate}

In the remainder of this section we prove the following.
\begin{theorem}\label{thm:ann_practical}
The above algorithm, with the above setting of $\Lambda$, runs in time $\tilde O(nd(\log\Phi+\Lambda))$ and produces a sketch for the all-nearest-neighbors problem, whose size in bits is
\[
  O\left( n\left(\frac{\log n\cdot(\log\log n + \log(1/\epsilon))}{\epsilon^2} + \log\log\Phi + \log\left(\frac{q}{\delta}\right)\right) + d\log\Phi  + \log\left(\frac{q}{\delta}\right)\log\left(\frac{\log(q/\delta)}{\epsilon}\right) \right).
\]
\end{theorem}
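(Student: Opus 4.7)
The plan is to adapt the proof of Theorem~\ref{thm:ann_ub} with three simplifications enabled by the quadtree structure of~\cite{indyk2017practical}: the hierarchical clustering tree is replaced by the randomly-shifted quadtree, top-out compression is replaced by middle-out compression, and the explicit surrogate machinery (grid-net quantization and surrogate hashing) is dropped, since the quadtree edges already encode approximate positions at geometrically decreasing resolution.

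For running time and size, I would first compose with a sparse Kane--Nelson projection to dimension $d'=O(\epsilon^{-2}\log(q/\delta))$, storing the projection in $O(\log(q/\delta)\log(\log(q/\delta)/\epsilon))$ bits (the final additive term in the theorem) and preserving all cross-distances up to $1\pm\epsilon$ with probability $1-\delta$. The shift $\sigma$ accounts for the $d\log\Phi$ additive term. Quadtree construction takes $\tilde O(nd(\log\Phi+\Lambda))$ time by standard hashing on non-empty cells, and middle-out compression itself is a linear pass. The compressed quadtree has $O(n)$ non-degree-$1$ nodes and at most $2\Lambda$ surviving nodes per maximal $1$-path; each tree edge is labelled by $d'$ bits, each of the $O(n)$ long edges additionally carries an $O(\log\log\Phi)$-bit length, and substituting the stated $\Lambda$ and $d'$ into $O(n\Lambda d')$ (and absorbing lower-order terms) yields the claimed per-point bound.

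For correctness I would use a top-down query algorithm: at each explored node $v$, enumerate its children and recurse into the cell closest to $y$; when a long edge is traversed, use the bottom $\Lambda$ retained nodes of the $1$-path to recover the descendant cluster's internal structure. The main claim is an analogue of Lemma~\ref{lmm:annrounds}: at every recursion step, either the algorithm descends into a subtree containing a true nearest neighbour of $y$, or every leaf in the current cluster is already a $(1+O(\epsilon))$-approximate nearest neighbour. The top $\Lambda$ surviving levels of a long $1$-path localise the cluster to within $\sqrt{d}\cdot 2^{\ell_{\mathrm{top}}-\Lambda}$, which by the choice $\Lambda=\lceil\log(16 d^{1.5}\log\Phi/(\epsilon\delta))\rceil$ is an $(\epsilon\delta)$-fraction of the smallest relevant distance, so the positional uncertainty inherited from the discarded middle portion cannot flip any nearest-neighbour comparison. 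The random shift supplies the usual smoothing: with probability $1-\delta$ over $\sigma$, no query lies within an $\epsilon$-fraction of the local scale from any cell boundary at any relevant level, so the quadtree partition respects approximate distances at all scales.

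The main obstacle is the quantitative interplay between the positional uncertainty surviving middle-out compression and the random-shift smoothing. Concretely, the work lies in showing that when $y$ sits near the boundary between two candidate children at a recursion step, the preserved top-$\Lambda$ and bottom-$\Lambda$ annotations jointly determine the correct descent, and this is precisely where the explicit value of $\Lambda$ is chosen. Once this is established, a union bound over the $q$ queries, together with a constant rescaling of $\delta$ and $\epsilon$, delivers the overall $1-\delta$ success probability claimed in the theorem.
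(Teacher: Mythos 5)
Your high-level plan (quadtree with random shift, middle-out compression, composing with a sparse Kane--Nelson projection, quadtree edges doubling as positional annotations) matches the paper, and your size and running-time accounting is essentially correct. But there is a real gap in the correctness argument, and it sits exactly where you flag the ``main obstacle.'' You assert that the positional uncertainty of $\sqrt{d}\cdot 2^{\ell_{\mathrm{top}}-\Lambda}$ surviving from a long edge ``cannot flip any nearest-neighbour comparison,'' and that ``the preserved top-$\Lambda$ and bottom-$\Lambda$ annotations jointly determine the correct descent.'' Neither claim is true in the critical regime. When $y$ lands inside the cell at the top of a long $1$-path, the true nearest-neighbor distance $\norm{y-x^*}$ can be on the order of $\sqrt{d}\cdot2^{\ell_{\mathrm{bot}}}$, exponentially smaller than $\sqrt{d}\cdot 2^{\ell_{\mathrm{top}}-\Lambda}$ when the long edge is long; this is precisely the failure mode illustrated in Figure~\ref{fig:query}. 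The retained top-$\Lambda$ and bottom-$\Lambda$ annotations give a coarse localization and the internal structure of the deep subtree, respectively, but they do \emph{not} fix the subtree's absolute position, so a descent rule that only compares $y$ against those annotations is underdetermined.

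The paper's proof resolves this with a mechanism your proposal never introduces: when surrogate recovery reaches a long edge, the missing middle bits of the cell address are filled in using the binary expansion of $y$ itself. If $y$ truly lies in the subtree's cell at those intermediate levels, the borrowed bits are exactly correct and every surrogate in the new subtree is recovered with \emph{zero} error; if they ever disagree (the paper's index $t$), then $y$ has left the cell of $r_t$, and the padding property of Lemma~\ref{lmm:separation_quadtree} --- applied to $x^*$, with the union bound taken over the $q$ nearest neighbors of the query points, not over boundary-avoidance events for the queries as you suggest --- shows that $\norm{y-x^*}$ is already $\Omega(\epsilon^{-1})$ times the diameter of $C(r_t)$, so any returned point is a $(1+O(\epsilon))$-approximate nearest neighbor. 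Without this bit-borrowing step there is no way to compare $y$ to surrogates below a long edge, so the gap is not a quantitative detail to be filled in but the central new idea of the practical variant.
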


The sketch size is the same as in~\cite{indyk2017practical}, except that we keep at most $2\Lambda$ instead of $\Lambda$ nodes per $1$-path, which increases the sketch size by only a factor of $2$.

\subsection{Basic lemmas}
We start with some useful properties of the above sketch, which are analogous to lemmas from in Section~\ref{sec:sketch}. In the notation below, for a node $v$ in the quadtree, $C(v)$ denotes the subset of points in $X$ that are contained in the grid cell associated with $v$. As in Section~\ref{sec:sketch}, the quadtree is partitioned into a set $\mathcal F(T)$ of~\emph{subtrees} by removing the long edges.

\begin{lemma}[analog of Lemma~\ref{lmm:separation}]\label{lmm:separation_quadtree}
For every point $x\in X$, with probability $1-\delta$, the following holds.
If $z\in\R^d$ is any point outside the grid cell that contains $x$ in level $\ell$ of the quadtree, then $\norm{x-x'}\geq 8\epsilon^{-1}\cdot2^{\ell-\Lambda}\sqrt{d}$.
\end{lemma}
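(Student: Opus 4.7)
The proof would be a union bound argument over the randomness of the grid shift $\sigma\in\{-\Phi,\ldots,\Phi\}^d$. Fix $x\in X$ and a level $\ell$, and set $R_\ell := 8\epsilon^{-1}\sqrt{d}\cdot 2^{\ell-\Lambda}$. The first step is a reduction to an $L_\infty$ statement: if in every coordinate $i\in[d]$ the value $x_i$ lies at distance at least $R_\ell$ from both endpoints of the $i$-th interval of $x$'s level-$\ell$ cell, then any $z\in\R^d$ outside the cell must escape in some coordinate $i$, giving $|x_i-z_i|\geq R_\ell$, hence $\norm{x-z}_2\geq\norm{x-z}_\infty\geq R_\ell$, which is exactly the conclusion.

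The second step bounds the probability that this good event fails at one coordinate and one level. The position of $x_i$ inside its level-$\ell$ cell is $(x_i+\sigma_i-c_i)\bmod 2^\ell$, where $c_i$ is the fixed origin of the enclosing hypercube in coordinate $i$. Because $\sigma_i$ is uniform over $2\Phi+1$ consecutive integers and $x_i$ is a fixed integer, this residue is nearly uniform over $\{0,1,\ldots,2^\ell-1\}$ for $\ell\geq 0$, with total variation at most $O(2^\ell/\Phi)$. The probability that the position falls in the boundary strip $[0,R_\ell)\cup(2^\ell-R_\ell,2^\ell)$ is therefore at most $2R_\ell/2^\ell+O(2^\ell/\Phi)=O(\epsilon^{-1}\sqrt{d}\cdot 2^{-\Lambda})$; the edge term is negligible compared with the leading term at all levels of interest.

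The third step is a double union bound: over the $d$ coordinates and the $O(\log\Phi+\Lambda)$ hierarchy levels. This gives a total failure probability of at most
\[ O(d\cdot(\log\Phi+\Lambda))\cdot 2R_\ell/2^\ell \;=\; O\bigl(\epsilon^{-1} d^{1.5}(\log\Phi+\Lambda)\bigr)\cdot 2^{-\Lambda}. \]
Plugging in $\Lambda=\lceil\log(16d^{1.5}\log\Phi/(\epsilon\delta))\rceil$ makes this bound at most $\delta$, matching the target probability.

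The main obstacle I expect is handling the fine end of the hierarchy, at levels $\ell<0$, where cells are shorter than the integer spacing: because $x_i+\sigma_i$ is an integer and $2^\ell<1$, the residue $(x_i+\sigma_i)\bmod 2^\ell$ is identically $0$, so $x$ always sits exactly at a grid corner and the boundary-strip argument degenerates. Fortunately $R_\ell<1$ in this regime, while any two distinct integer points in $\{-\Phi,\ldots,\Phi\}^d$ are at distance at least $1$; so whenever $z$ is itself an integer point (as it will be for the intended applications with $z\in X\cup Y$), the lemma is trivially satisfied. The union bound thus only needs to range over the $O(\log\Phi)$ levels with $\ell\geq 0$, and the conclusion follows for all relevant $z$.
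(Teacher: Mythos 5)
Your proposal is the standard ``padding'' argument for randomly shifted grids, and it is, in spirit, exactly what the paper invokes: the paper's own proof of this lemma is a one-line pointer to Lemma 1 and Equation~(1) of \cite{indyk2017practical}, where this property is established, so you are giving a self-contained rederivation of the cited fact rather than a different route. The reduction to an $L_\infty$ boundary-strip event, the per-coordinate per-level probability bound of roughly $2R_\ell/2^\ell$, and the union bound over $d$ coordinates and the quadtree levels are all the right ingredients, and the arithmetic with $\Lambda=\lceil\log(16d^{1.5}\log\Phi/(\epsilon\delta))\rceil$ does land on $O(\delta)$.

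Two small technical remarks. The near-uniformity step as you wrote it is too loose at the top of the hierarchy: you add a total-variation correction $O(2^\ell/\Phi)$ to the uniform estimate $2R_\ell/2^\ell$, but at the top level $\ell=\log(4\Phi)$ the TV term is $\Theta(1)$, so the stated bound $O(\epsilon^{-1}\sqrt{d}\cdot 2^{-\Lambda})$ does not follow there. The fix is easy and sharper than TV: bound the deviation only on the $\Theta(R_\ell)$ bad residues rather than on all $2^\ell$ residues, which gives $\Pr[\text{bad}]\le 2R_\ell/2^\ell + 2R_\ell/(2\Phi+1)$; since $2^\ell\le 4\Phi$ at all levels of the quadtree, this is $O(R_\ell/2^\ell)$ as desired. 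Your second observation, that for $\ell<0$ the point $x$ sits exactly on a grid corner and the lemma as literally stated (with arbitrary $z\in\R^d$) fails, is a correct and worthwhile caveat; the paper's applications only ever instantiate $z$ as an integer-coordinate data or query point, and $R_\ell<1$ in that regime, so the conclusion is vacuously true for the cases that matter. With those two points patched, the proof is sound.
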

\begin{proof}
The setting of $\Lambda$ is such that with probability $1-\delta$, in every level $\ell$ of the quadtree, the grid cell that contains $x$ also contains the ball at radius $8\epsilon^{-1}\cdot2^{\ell-\Lambda}\sqrt{d}$ around $x$. (This property is known as ``padding''.) The lemma is just a restatement of this property.
See Lemma 1 and Equation (1) in~\cite{indyk2017practical} for details.
\end{proof}

\begin{lemma}[analog of Lemmas~\ref{lmm:subtree_root}, \ref{lmm:subtree_leaf}, \ref{lmm:surrogates}]\label{lmm:samecell}
Let $v$ be a node in the quadtree, and $x,x'\in\R^d$ points contained in the grid cell associated with $v$. Then $\norm{x-x'}\leq2^{\ell(v)}\sqrt{d}$.
\end{lemma}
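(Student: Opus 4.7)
The plan is to observe that this lemma is essentially a direct geometric fact about hypercubes, so the proof is a one-line calculation once the right quantity is identified.

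First I would recall from the sketching algorithm description (Step 1) that the grid at level $\ell$ is obtained by recursively halving the enclosing hypercube $H$ along each dimension, and that the levels are indexed so that a cell at level $\ell$ has side length exactly $2^\ell$. Therefore the cell associated with $v$ is an axis-aligned hypercube of side length $2^{\ell(v)}$.

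Next I would use the standard bound on the diameter of a $d$-dimensional hypercube. If $x,x'$ lie in a hypercube of side length $s$, then in each coordinate $i$ we have $|x_i - x'_i| \leq s$, so
\[
  \norm{x-x'} = \sqrt{\sum_{i=1}^d (x_i - x'_i)^2} \leq \sqrt{d \cdot s^2} = s\sqrt{d}.
\]
Applying this with $s = 2^{\ell(v)}$ yields the claim.

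There is no real obstacle here; the only subtlety is making sure the indexing convention for $\ell(v)$ (which, per the algorithm description, refers to the side length $2^{\ell(v)}$ of the cell) is used consistently, and that middle-out compression does not change the geometric meaning of $\ell(v)$ for a node that survives compression (the cell associated with $v$ is still the grid cell at level $\ell(v)$; only intermediate degree-$1$ nodes along long edges are removed, which does not affect $v$ itself).
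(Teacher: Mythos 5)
Your proof is correct and takes essentially the same approach as the paper, which simply observes that the cell is a hypercube of side $2^{\ell(v)}$ and hence has diameter $2^{\ell(v)}\sqrt{d}$. You merely expand the hypercube-diameter calculation and note the (valid) point that compression does not alter the geometric meaning of $\ell(v)$ for surviving nodes.
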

\begin{proof}
The grid cell associated with $v$ is a hypercube with side $2^{\ell(v)}$ and diameter $2^{\ell(v)}\sqrt{d}$.
\end{proof}

Before proceeding let us make the following point about the quadtree.
\begin{claim}\label{clm:quadtree_levels}
For every leaf $v$ of the quadtree, $C(v)$ contains a single point of $X$, and $v$ is the bottom of a $1$-path of length at least $\Lambda$.
\end{claim}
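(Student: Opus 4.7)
}
The plan is to use the following key observation: by construction, the hierarchy has $\log(4\Phi) + \Lambda$ levels, with the top level at $\log(4\Phi)$ (side $4\Phi$) and the bottom level at approximately $-\Lambda$ (side $2^{-\Lambda} < 1$). Crucially, the random shift $\sigma$ is drawn from $\{-\Phi,\ldots,\Phi\}^d$, i.e.~from integer offsets, so the grid cells at every level $\ell$ are of the form $\prod_{i=1}^d [a_i, a_i + 2^\ell)$ where $a_i \in 2^\ell \cdot \mathbb{Z}$ (before the integer shift). In particular, at every level $\ell \le 0$, each grid cell is a half-open axis-aligned box of side length $2^\ell \le 1$ with corners at integer-shifted multiples of $2^\ell$, and hence contains \emph{at most one} point of $\mathbb{Z}^d$. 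Since $X \subseteq \{-\Phi,\ldots,\Phi\}^d \subseteq \mathbb{Z}^d$, every non-empty cell at level $\le 0$ contains exactly one point of $X$.

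First I would handle the statement that $|C(v)| = 1$ for every leaf $v$. The quadtree by construction continues down all the way to the bottom level of the hierarchy (the children of a node are the non-empty cells at the next finer grid, and each non-empty cell has at least one non-empty sub-cell). Hence every leaf sits at the bottom level, whose cells have side length strictly less than $1$; by the observation above, each such non-empty cell contains exactly one point of $X$, establishing the first half of the claim.

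Next I would argue about the $1$-path. Let $v$ be a leaf with $C(v) = \{x\}$ for some $x \in X$. Walking up the tree from $v$, the cell associated with each ancestor contains $x$ (cells are nested). But for every ancestor $w$ of $v$ with $\ell(w) \le 0$, the cell of $w$ has side length $2^{\ell(w)} \le 1$ and is integer-aligned, so it contains \emph{only} $x$; consequently $w$ has a unique non-empty child (the one containing $x$) and hence has degree $1$ in the quadtree. This produces an unbroken chain of degree-$1$ nodes from level $\ell(v)$ up to and including level $0$, spanning $\Lambda$ tree edges (since the leaf sits at level roughly $-\Lambda+1$, or exactly $-\Lambda$ after fixing indexing). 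By definition of a $1$-path, the maximal such chain ending at $v$ has length at least $\Lambda$, as required. The only subtle point is checking the off-by-one on how the $\log(4\Phi)+\Lambda$ levels are indexed, but in either convention the number of levels from the bottom up to (and including) level $0$ is $\Lambda$ or $\Lambda+1$, which suffices.

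I do not foresee a serious obstacle here: once the integer-alignment property of the random shift is made explicit, both parts of the claim follow from simple geometric counting. The only thing to be careful about is that the shift $\sigma$ is drawn from integers (not reals) so that the shifted grid remains compatible with the integrality of $X$; this is exactly why level $0$ is the correct cutoff for "one integer point per cell", and why the resulting $1$-path has length at least $\Lambda$ deterministically (not merely with high probability).
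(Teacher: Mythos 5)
Your proof is correct and follows essentially the same approach as the paper's one-sentence argument: refining for $\log(4\Phi)$ levels from side $4\Phi$ brings the cells down to side $1$, at which point the integer alignment of the shift $\sigma$ ensures each cell holds at most one point of $X$, and the remaining $\Lambda$ levels of refinement then force the degree-$1$ chain above every leaf. Your write-up is more detailed than the paper's (in particular, you make the integer-alignment of $\sigma$ explicit and spell out why every leaf sits at the bottom level), and you correctly flag the off-by-one ambiguity in how the $\log(4\Phi)+\Lambda$ levels are indexed; that ambiguity originates in the paper's phrasing, and resolving the convention so that the bottom level is $-\Lambda$ (as you suggest) gives exactly a $1$-path of length $\Lambda$.
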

\begin{proof}
Refining the quadtree grid hierarchy for $\log(4\Phi)$ levels ensures that each grid cell contains at most one point from $X$, and refining for $\Lambda$ additional levels ensures that each leaf is the bottom of a $1$-path of length at least $\Lambda$.
\end{proof}

\begin{claim}\label{clm:leaves}
Every subtree leaf in the quadtree is the bottom of a $1$-path of length at least $\Lambda$.
\end{claim}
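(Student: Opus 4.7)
The plan is to do a case analysis on the type of subtree leaf. Since the subtrees in $\mathcal F(T)$ are obtained by removing all long edges from the compressed quadtree, a node $v$ is a leaf of a subtree in $\mathcal F(T)$ if and only if either (i) $v$ is already a leaf of the full quadtree $T$, or (ii) $v$ is the top endpoint of a long edge.

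First I would handle case (i) by invoking~\Cref{clm:quadtree_levels} directly: every quadtree leaf is the bottom of a $1$-path of length at least $\Lambda$, which is exactly what the claim asserts.

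Next I would handle case (ii) using the definition of middle-out compression. A long edge is introduced only in place of the middle portion of a $1$-path of length exceeding $2\Lambda$, and the top $\Lambda$ nodes of that $1$-path are retained. The top endpoint of the resulting long edge is therefore the last (lowest) of these $\Lambda$ retained degree-$1$ nodes; equivalently, $v$ sits at the bottom of a $1$-path of length at least $\Lambda$ consisting of the nodes kept above the long edge. This establishes the claim in case (ii) as well.

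I do not expect any real obstacle here; the statement is essentially a bookkeeping consequence of how $\mathcal F(T)$ is formed and of the ``top $\Lambda$'' retention rule in middle-out compression, combined with~\Cref{clm:quadtree_levels} to cover the quadtree leaves. The only thing to be mildly careful about is the convention on what ``length of a $1$-path'' counts (edges vs.\ nodes) so that the $\Lambda$ threshold is applied consistently with the rest of the section.
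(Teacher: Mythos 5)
Your proof is correct and matches the paper's own argument: the paper likewise splits into the two cases (quadtree leaf, handled by \Cref{clm:quadtree_levels}, versus top endpoint of a long edge, handled by the ``keep the top $\Lambda$'' rule of middle-out compression), and you simply spell out the second case in slightly more detail.
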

\begin{proof}
If $v$ is a leaf of the quadtree, this follows from Claim~\ref{clm:quadtree_levels}. Otherwise this follows from middle-out compression.
\end{proof}

Next we define centers and surrogates.
Centers $c(v)$ are chosen similarly to Section~\ref{sec:sketch}.
The surrogate $s^*(v)$ of every tree node $v$ is simply defined to be the ``bottom-left'' (i.e.~minimal in all dimensions) corner of the grid cell associated with $v$.

\subsection{Approximate Nearest Neighbor Search}
Finally, we can describe the query algorithm and complete its analysis.
Let $y$ be a query point for which we need to report an approximate nearest neighbor from the sketch. The query algorithm is the same as in Section~\ref{sec:ann}: starting with the subtree that contains the quadtree root, it recovers the surrogates in the current subtree and chooses the subtree $v$ whose surrogate is the closest to $y$. If $v$ is a quadtree leaf, its center is returned as the approximate nearest neighbor. Otherwise, the algorithm proceeds by recursion on the subtree under $v$.

\paragraph{Surrogate recovery}
The difference is in the way we recover the surrogates of a given subtree. In Section~\ref{sec:ann} this was done using the surrogate hashes. Here we will use a simpler, deterministic surrogate recovery subroutine.
Let $s^*(H)\in\R^d$ the surrogate of the quadtree root. (We store this point explicitly in the sketch, and it will be convenient to think of it w.l.o.g.~as the the origin in $\R^d$.) As observed in~\cite{indyk2017practical}, for every tree node $v$, if we concatenate the bits annotating the edges on the path from the root to $v$, we get the binary expansion of the point $s^*(H)+s^*(v)$. Therefore, we can recover $s^*(v)$ from the sketch, as long as the path from the root to $v$ does not traverse a long edge.

If the path to $v$ contains long edges (and thus missing bits in the binary expansion of $s^*(v)$), the algorithm completes these bits from the binary expansion of $y$.
Let $r_0,r_1,\ldots$ be the subtree roots traversed by the algorithm, and let $T_0,T_1,\ldots$ be the corresponding subtrees. Let $t$ be the smallest such that the algorithm does not recover the surrogates in $T_t$ correctly (because the bits missing on the long edge connecting $T_{t-1}$ to $T_t$ are not truly equal to those of $y$). As in Section~\ref{sec:ann}, the query algorithm does not know $t$ (it simply always assumes that the bits of $y$ are the correct missing ones), but we will use it for analysis.
Note that by definition of $t$, all surrogates in the subtrees rooted at $r_0,\ldots,r_{t-1}$ are recovered correctly. Thus, the event from Lemma~\ref{lmm:hashes} holds deterministically.

\paragraph{Proof of Theorem~\ref{thm:ann_practical}}
Let $x^*\in X$ be a fixed true nearest neighbor of $y$ in $X$ (chosen arbitrarily if there is more than one). We shall assume that the event in Lemma~\ref{lmm:separation_quadtree} occurs for $x^*$.

\begin{lemma}[analog of Lemma~\ref{lmm:annrounds}]\label{lmm:annrounds_quadtree}
Let $T'\in\mathcal F(T)$ be a subtree rooted in $r$, such that $x^*\in C(r)$.
Let $v$ a leaf of $T'$ that minimizes $\norm{y-s^*(v)}$.
Then either $x^*\in C(v)$,
or every $z\in C(v)$ is a $(1+O(\epsilon))$-approximate nearest neighbor of $y$.
\end{lemma}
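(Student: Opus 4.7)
My plan is to closely mimic the proof of Lemma~\ref{lmm:annrounds} from Section~\ref{sec:ann}, replacing its main ingredients (Lemmas~\ref{lmm:separation},~\ref{lmm:subtree_leaf}, and~\ref{lmm:surrogates}) by the quadtree analogs, Lemmas~\ref{lmm:separation_quadtree} and~\ref{lmm:samecell}. If $x^* \in C(v)$ there is nothing to prove, so assume $x^* \in C(u)$ for a subtree leaf $u \neq v$ of $T'$. Let $w$ denote the lowest common ancestor of $u$ and $v$ in the quadtree (since $w$ branches it is not compressed away, and it lies in $T'$), and set $\ell := \max\{\ell(u), \ell(v)\}$, so that $\ell(w) > \ell$.

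The first step is to lower-bound $\norm{y - x^*}$ by contradiction. Lemma~\ref{lmm:samecell} supplies the surrogate bounds $\norm{x^* - s^*(u)} \le 2^{\ell(u)}\sqrt d \le 2^{\ell}\sqrt d$ and $\norm{x_{c(v)} - s^*(v)} \le 2^{\ell(v)}\sqrt d \le 2^{\ell}\sqrt d$. Lemma~\ref{lmm:separation_quadtree} applied to $x^*$ at level $\ell(w)-1$ (on which $x_{c(v)}$ lies outside the cell containing $x^*$) supplies the separation $\norm{x^* - x_{c(v)}} \ge 8\epsilon^{-1}\cdot 2^{\ell(w)-1-\Lambda}\sqrt d$. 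Chaining these through the two triangle inequalities $\norm{y - s^*(u)} \le \norm{y - x^*} + 2^{\ell}\sqrt d$ and $\norm{y - s^*(v)} \ge \norm{x^* - x_{c(v)}} - \norm{y - x^*} - 2^{\ell}\sqrt d$, and using the defining property $\norm{y - s^*(v)} \le \norm{y - s^*(u)}$ of the winner $v$, I would rearrange to extract a lower bound on $\norm{y - x^*}$ on the order of $2^{\ell}\sqrt d/\epsilon$.

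The second step is then routine: for every $z \in C(v)$, the triangle inequality and Lemma~\ref{lmm:samecell} give
\[
  \norm{y - z} \le \norm{y - s^*(v)} + 2^{\ell(v)}\sqrt d \le \norm{y - s^*(u)} + 2^{\ell(v)}\sqrt d \le \norm{y - x^*} + (2^{\ell(u)} + 2^{\ell(v)})\sqrt d,
\]
so the lower bound on $\norm{y - x^*}$ from the first step converts the additive $O(2^{\ell}\sqrt d)$ error into a multiplicative $(1+O(\epsilon))$ factor. The main obstacle is that the quadtree surrogate-accuracy scale in Lemma~\ref{lmm:samecell} is $2^{\ell}\sqrt d$, much coarser than the $2^{\ell}\epsilon$ scale enjoyed by the sophisticated surrogates of Lemma~\ref{lmm:surrogates} from Section~\ref{sec:sketch}. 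Bridging this gap is precisely the reason for the setting $\Lambda = \lceil\log(16d^{1.5}\log\Phi/(\epsilon\delta))\rceil$: it calibrates the padded separation of Lemma~\ref{lmm:separation_quadtree} so that, at the relevant levels, it dominates the surrogate error by the requisite $\epsilon^{-1}$ margin. Making this balance go through, and in particular verifying that the lower bound on $\norm{y - x^*}$ emerges cleanly from the above contradiction argument at the scale $2^{\ell}\sqrt d/\epsilon$, is the delicate bookkeeping at the heart of the proof.
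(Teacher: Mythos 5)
Your outline follows the paper's proof faithfully: the same case split, the same contradiction argument to lower-bound $\norm{y-x^*}$ via the separation of $x^*$ from $x_{c(v)}$ and the choice of the winner $v$, and the same final triangle-inequality chain (your version of the chain is even slightly tighter, since Lemma~\ref{lmm:samecell} applied directly to $\norm{x^*-s^*(u)}$ and $\norm{z-s^*(v)}$ gives an additive error of $2\cdot2^\ell\sqrt d$ rather than the paper's $4\cdot2^\ell\sqrt d$).

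However, there is a concrete gap in the first step. You introduce the LCA $w$ of $u$ and $v$, establish only $\ell(w)>\ell$, and then apply Lemma~\ref{lmm:separation_quadtree} at level $\ell(w)-1$ to get $\norm{x^*-x_{c(v)}}\geq 8\epsilon^{-1}\cdot 2^{\ell(w)-1-\Lambda}\sqrt d$. With only $\ell(w)\geq\ell+1$, this bound can be as weak as $8\epsilon^{-1}\cdot 2^{\ell-\Lambda}\sqrt d$, which is smaller by a factor $2^\Lambda$ than the $\Theta(\epsilon^{-1}2^\ell\sqrt d)$ scale your rearrangement actually requires; the contradiction then does not close. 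You sensed this ("delicate bookkeeping") but misattributed the fix to the numerical value of $\Lambda$. The value of $\Lambda$ controls only the success probability in Lemma~\ref{lmm:separation_quadtree}. What actually closes the gap is the structural consequence of middle-out compression recorded in Claim~\ref{clm:leaves}: every subtree leaf of the quadtree is the bottom of a $1$-path of length at least $\Lambda$. Applied to both $u$ and $v$, this forces their first branching ancestor, hence $w$, to satisfy $\ell(w)\geq \ell+\Lambda+1$, so that $x^*$ and $x_{c(v)}$ are already separated at level $\ell+\Lambda$ and Lemma~\ref{lmm:separation_quadtree} yields $\norm{x^*-x_{c(v)}}\geq 8\epsilon^{-1}\cdot 2^{\ell}\sqrt d$. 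With that in hand, your chain produces the desired $\norm{y-x^*}\gtrsim \epsilon^{-1}2^\ell\sqrt d$ and the rest of your argument goes through.
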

\begin{proof}
If $x^*\in C(v)$ then we are done. Assume now that $x^*\in C(u)$ for a leaf $u\neq v$ of $T'$.
Let $\ell:=\max\{\ell(v),\ell(u)\}$. We start by showing that $\norm{y-x^*}>\epsilon^{-1}2^\ell\sqrt{d}$.
Assume by contradiction this is not the case.
Since $x^*\in C(u)$ we have $\norm{x^*-x_{c(u)}}\leq2^{\ell}\sqrt{d}$ by Lemma~\ref{lmm:samecell}, and similarly $\norm{x_{c(u)}-s^*(u)}\leq2^{\ell}\sqrt{d}$. Together, $\norm{y-s^*(u)}\leq(\epsilon^{-1}+2)2^\ell\sqrt{d}$.
On the other hand, by the triangle inequality,
$\norm{y-s^*(v)} \geq \norm{x^*-x_{c(v)}} - \norm{y-x^*} - \norm{x_{c(v)}-s^*(v)}$.
By Claim~\ref{clm:leaves}, both $v$ and $u$ are the bottom of $1$-paths of length at least $\Lambda$, This means that $x^*$ and $x_{c(v)}$ are separated already at level $\ell+\Lambda$, and by Lemma~\ref{lmm:separation_quadtree} this implies $\norm{x^*-x_{c(v)}}\geq8\epsilon^{-1}\cdot2^{\ell}\sqrt{d}$. By the contradiction hypothesis we have $\norm{y-x^*}\leq\epsilon^{-1}2^\ell\sqrt{d}$, and by Lemma~\ref{lmm:samecell}, $\norm{x_{c(v)}-s^*(v)}\leq2^\ell\sqrt{d}$. 
Putting these together yields $\norm{y-s^*(v)}\geq8\epsilon^{-1}\cdot2^{\ell}\sqrt{d}-\epsilon^{-1}2^\ell\sqrt{d}-2^\ell\sqrt{d} > (\epsilon^{-1}+2)2^\ell\sqrt{d}\geq \norm{x_{c(u)}-s^*(u)}$. This contradicts the choice of $v$.

The lemma now follows because for every $z\in C(v)$,
\begin{align}
\norm{y-z} &\leq \norm{y-s^*(v)} + \norm{s^*(v)-x_{c(v)}} + \norm{x_{c(v)}-z} \label{ineq1b} \\
&\leq \norm{y-s^*(u)} + \norm{s^*(v)-x_{c(v)}} + \norm{x_{c(v)}-z} \label{ineq2b} \\
&\leq \norm{y-x^*} + \norm{x^*-x_{c(u)}} + \norm{x_{c(u)}-s^*(u)} +\norm{s^*(v)-x_{c(v)}} + \norm{x_{c(v)}-z} \label{ineq3b} \\
&\leq \norm{y-x^*} + 4\cdot2^\ell\sqrt{d} \label{ineq4b} \\
&\leq (1+4\epsilon)\norm{y-x^*}, \label{ineq5b}
\end{align}
where~(\ref{ineq1b}) and (\ref{ineq3b}) are by the triangle inequality, (\ref{ineq2b}) is since $\norm{y-s^*(v)}\leq\norm{y-s^*(u)}$ by choice of $v$, (\ref{ineq4b}) is by applying Lemma~\ref{lmm:samecell} to each of the last four summands, and~(\ref{ineq5b}) is since we have shown that $\norm{y-x^*}>\epsilon^{-1}2^\ell\sqrt{d}$.
Therefore $z$ is a $(1+4\epsilon)$-approximate nearest neighbor of $y$.
\end{proof}

Now we prove that the query algorithm returns an approximate nearest neighbor for $y$.
We may assume w.l.o.g.~that $\epsilon$ is smaller than a sufficiently small constant.
We consider two cases. In the first case, $x^*\notin C(r_t)$.
Let $i\in\{1,\ldots,t\}$ be the smallest such that $x^*\notin C(r_i)$.
By applying Lemma~\ref{lmm:annrounds_quadtree} on $r_{i-1}$, we have that every point in $C(r_i)$ is a $(1+O(\epsilon))$-approximate nearest neighbor of $y$. After reaching $r_i$, the algorithm would return the center of some leaf reachable from $r_i$, and it would be a correct output.

In the second case, $x^*\in C(r_t)$ contains a true nearest neighbor of $y$. We will show that every point in $C(r_t)$ is a $(1+O(\epsilon))$-approximate nearest neighbor of $y$, so once again, once the algorithm arrives at $r_t$ it can return anything.
By definition of $t$, we know that $y$ does not reside in the grid cell associated with $r_t$. Since $y$ does reside in that cell, we have $\norm{y-x^*}\geq8\epsilon^{-1}2^{\ell(r_t)-\Lambda}\sqrt{d}$ by Lemma~\ref{lmm:separation_quadtree}. On the other hand, by Claim~\ref{clm:leaves}, $r_t$ is the bottom of a $1$-path of length at least $\Lambda$, and therefore any two points in $C(r_t)$ are contained in the same grid cell at level $\ell(r_t)-\Lambda$, whose diameter is $2^{\ell(r_t)-\Lambda}\sqrt{d}$. In particular, for every $x\in C(r_t)$ we have $\norm{x-x^*}\leq2^{\ell(r_t)-\Lambda}\sqrt{d}\leq\frac{1}{8}\epsilon\norm{y-x^*}$. Altogether we get $\norm{y-x} \leq \norm{y-x^*} + \norm{x^*-x} \leq (1+\frac{1}{8}\epsilon)\norm{y-x^*}$, so every $x\in C(r_t)$ is a $(1+\epsilon)$-approximate nearest neighbor of $y$ in $X$.

The proof assumes the event in Lemma~\ref{lmm:separation_quadtree} holds for $x^*$, which happens with probability $1-\delta$. To handle $q$ queries, we can scale $\delta$ down to $\delta/q$ and take a union bound over the $q$ nearest neighbors of the $q$ query points. \qed

\end{document}